\DeclareMathAlphabet{\baz}{OML}{cmm}{b}{i}
\def\x{{\mathbf x}}
\def\bA{\mbox{\boldmath $A$}}
\def\bB{\mbox{\boldmath $B$}}
\def\bI{\mathbf{I}}
\def\bJ{\mathbf{J}}
\def\bW{\mathbf{W}}
\def\b0{\mbox{\boldmath $0$}}
\def\ba{\mathbf{a}}
\def\bb{\mathbf{b}}
\def\br{\mathbf r}
\def\bw{\mbox{\boldmath $w$}}
\def\bx{\mathbf{x}}
\def\by{\mathbf y}
\def\bz{\mathbf{z}}
\def\x{{\mathbf x}}
\newtheorem{theorem}{Theorem}
\newtheorem{proposition}[theorem]{Proposition}
\newtheorem{lemma}[theorem]{Lemma}
\newtheorem{definition}[theorem]{Definition}
\providecommand{\algorithmname}{Algorithm}
\let\oldproofname=\proofname
\renewcommand{\proofname}{\rm\bf{\oldproofname}}
\title{\vspace{-0.5cm} NEXT: In-Network Nonconvex Optimization}
\author{Paolo~Di Lorenzo,~\IEEEmembership{Member,~IEEE}, and Gesualdo Scutari,~\IEEEmembership{Senior Member,~IEEE}\vspace{-0.7cm}
\thanks{Di Lorenzo is with the Dept. of Engineering, University of Perugia, Via G. Duranti 93, 06125, Perugia, Italy; Email: \texttt{paolo.dilorenzo@unipg.it}.\newline \indent Scutari is with the Dept. of Industrial Engineering and Cyber Center (Discovery Park), Purdue University, West Lafayette, IN 47907, USA; E-mail: \texttt{gscutari@purdue.edu}.  The work of Scutari was supported by the USA NSF under Grants CIF 1564044 and  CAREER Award No. 1555850, and ONR Grant  N00014-16-1-2244.}}
\begin{document}

\maketitle

\begin{abstract}
We study  \emph{nonconvex distributed} optimization in multi-agent networks with time-varying (nonsymmetric) connectivity. We introduce  the first algorithmic framework for the distributed minimization of  the sum of a  smooth (possibly \emph{nonconvex} and nonseparable) function--the agents' sum-utility--plus a convex (possibly nonsmooth and nonseparable) regularizer. The latter is usually employed to enforce some structure in the solution, typically sparsity. The proposed method hinges on successive convex approximation techniques while leveraging \emph{dynamic} consensus as a mechanism to distribute the computation among the agents: each agent first  solves (possibly inexactly) a local convex approximation of the nonconvex original problem, and then performs  local averaging operations. Asymptotic convergence to (stationary) solutions of the nonconvex problem is established. Our algorithmic framework is then customized  to a variety of convex and nonconvex problems in  several fields, including   signal processing, communications, networking, and machine learning.  Numerical results  show that the new method compares favorably to existing distributed algorithms on both convex and nonconvex  problems.\vspace{-0.1cm}
\end{abstract}

\begin{keywords}
Consensus,  distributed optimization, non-convex optimization, successive convex approximation, time-varying directed graphs.\vspace{-0.5cm}
\end{keywords}

\section{Introduction}\label{sec:intro}

Recent years have witnessed a surge of interest in distributed optimization methods for multi-agent systems. Many such problems can be formulated as the cooperative minimization of the agents' sum-utility $F$ plus  a regularizer $G$:
\begin{align}\label{Problem}
& \underset{\bx}{\min} \;\; U(\bx)\triangleq F(\bx)+G(\bx)\\
& \text{\,s.\,t.} \;\; \bx\in \mathcal{K},\nonumber
\end{align}
where\vspace{-0.3cm}
\begin{equation} F(\bx)\triangleq \sum_{i=1}^{I}f_i(\bx),
\end{equation}
with each $f_i: \mathbb{R}^m \rightarrow  \mathbb{R}$ being the smooth (possibly \emph{nonconvex}, nonseparable) cost function of agent $i\in \{1,\ldots, I\}$;  $G$ is a  convex (possibly nonsmooth,  nonseparable) function; and $\mathcal{K}\subseteq \mathbb{R}^m$ is closed and convex. Usually the nonsmooth term
is used to promote some extra structure in the solution; for instance,   $G(\bx)=c\|\bx\|_1$ or $G(\bx)=c \sum_{i=1}^N\|\bx_i\|_2$ are widely used to impose (group)  sparsity of the  solution. \\
Network-structured optimization problems in the form (\ref{Problem}) are found widely in several engineering areas, including sensor networks information processing (e.g., parameter estimation, detection, and localization), communication networks (e.g., resource allocation in peer-to-peer/multi-cellular systems), multi-agent control and coordination (e.g., distributed learning, regression, and flock control), and distributed machine learning (e.g., LASSO, logistic regression, dictionary learning, matrix completion, tensor factorization), just to name a few. Common to these problems is the necessity of performing  a completely decentralized  computation/optimization.
For instance, when data are collected/stored in a distributed network (e.g., in clouds), sharing local information with a central processor   is either unfeasible or not economical/efficient, owing to the large size of the network and volume of data, time-varying network topology, energy constraints, and/or privacy issues.  Performing the optimization in a centralized fashion may raise   robustness concerns as well, since  the central processor represents an isolate point of failure. Motivated by these observations,  this paper aims to develop a solution method with provable convergence for the general class of \emph{nonconvex} problems (\ref{Problem}), in the following distributed setting:  i) the network of agents is modeled as a time-varying directed graph; ii) agents know their local functions $f_i$ only, the common regularized $G$, and the feasible set $\mathcal K$; and iii) only inter-node (intermittent) communications between single-hop neighbors are possible. Hereafter, we will call \emph{distributed} an algorithm implementable in the above setting. To the date, the design of  such an algorithm for Problem (\ref{Problem}) remains a    challenging and open problem, as documented next.

\noindent\textbf{Related works:} Distributed solution methods  for \emph{convex} instances of Problem (\ref{Problem}) have been widely studied in the literature; they are usually either primal (sub)gradient-based methods or primal-dual schemes.
Algorithms belonging to the former class  include: i) consensus-based (sub)gradient schemes  \cite{Tsitsiklis-Bertsekas-Athans,Nedic-Ozdaglar, Nedic-Ozdaglar-Parillo,Shi-Ling-Wu-Yin} along with  their accelerated and asynchronous versions \cite{Chen-Ozdaglar,Jakovetic-Xavier-Moura,Srivastava-Nedic};
ii) the (sub)gradient push-method \cite{Nedic-Olshevsky,Tsianos-Rabbat}; iii)  the dual-average method \cite{Tsianos-Lawlor-Rabbat2,Duchi-Agarwal-Wainwright}; and iv) distributed second-order-based schemes \cite{Zanella,Mokhtari-Ling-Ribeiro}. Algorithms for adaptation and learning tasks based on in-network diffusion techniques were proposed in \cite{Cattivelli-Sayed,DiLo-Sayed,DiLo14,Chen-Sayed,Sayed}.
Although there are substantial differences between them, these methods can be generically abstracted as combinations of local  descent steps followed by variable exchanges and averaging of information among neighbors. The second class of  distributed algorithms  is that of  dual-based techniques. Among them, we mention here only the renowned Alternating Direction Method of Multipliers (ADMM); see \cite{Boyd-Parikh-Chu-Peleato-Eckstein} for a recent survey. Distributed ADMM algorithms tailored for specific machine learning \emph{convex} problems and parameter estimation in sensor networks were proposed in \cite{Forero-Cano-Giannakis, Schizas-Giann-Roum-Rib}; \cite{Mota-Xavier,Shi-Ling}, \cite{ChangHW15}, studied the  convergence of synchronous distributed ADMM over undirected connected time-invariant graphs;  some asynchronous instances  have been recently analyzed in \cite{Wei-Ozdaglar13}, \cite{iutzeler2013cdc}.

All the above  prior art focuses   \emph{only} on \emph{convex} problems; algorithms developed therein  along with their convergence analysis    are not applicable to nonconvex problems in the form   (\ref{Problem}). Parallel  and partially decentralized  solution methods  for  some families of  nonconvex problems have been recently proposed in   \cite{Scutari-Facchinei-Song-Palomar-Pang,Scutari-Facchinei-Sagratella,Dan-Facch-Kung-Scut,Patterson-Eldar-Keidar,Ravazzi-Fosson-Magli,Li-Scaglione}. However,  these methods  are not applicable to  the general formulation     (\ref{Problem}) and/or in the distributed setting i)-iii) discussed above. For instance,   some of them   require    the  knowledge   of the whole $F$  (or its derivative) from all  the agents; others call for the presence of a fusion center collecting at each iteration data from all the agents; some others  are implementable only on specific network topologies, such as fully connected (undirected) graphs (i.e., agent must be able to exchange information with \emph{all}  the others). We are aware of only a few works dealing with distributed algorithms for (special cases of) Problem (\ref{Problem}), namely: \cite{Zhu-Martinez2} and \cite{Bianchi-Jakubowicz}.
In \cite{Zhu-Martinez2}, a consensus-based distributed dual-subgradient algorithm was studied. However,  i)  it calls for the solution of possibly difficult nonconvex (smaller) subproblems; ii) it does not find (stationary) solutions of the original problem but  those of an auxiliary problem; stationary points of this reformulation are  not necessarily stationary for the original problem; and iii) convergence of primal variables is guaranteed under some restrictive assumptions that are not easy to be checked a-priori.
In \cite{Bianchi-Jakubowicz}, the authors studied convergence of a distributed stochastic projection algorithm, involving random gossip between agents and diminishing step-size.
However, the scheme as well as   its convergence analysis are not applicable to Problem (\ref{Problem}) when $G\neq 0$.
Moreover, it is a gradient-based algorithm, using thus only   first order
information of $f_i$; recently it was shown in \cite{Scutari-Facchinei-Song-Palomar-Pang,Scutari-Facchinei-Sagratella,Dan-Facch-Kung-Scut}  that exploiting the structure of the nonconvex functions by replacing their linearization with a ``better'' approximant can enhance practical convergence speed; a fact that we would like to exploit in our design.

\noindent \textbf{Contribution:} This paper introduces the first \emph{distributed} (best-response-based) algorithmic framework with provable convergence  for the  \emph{nonconvex} multi-agent optimization in the general form (\ref{Problem}).
The crux of the framework is a novel \emph{convexification-decomposition} technique that hinges on our recent (primal) Successive Convex Approximation (SCA) method \cite{Scutari-Facchinei-Song-Palomar-Pang,Scutari-Facchinei-Sagratella}, while leveraging  \emph{dynamic} consensus (see, e.g., \cite{Zhu-Martinez}) as a mechanism to distribute the computation as well as propagate the needed information over the network; we will term it as in-Network succEssive conveX approximaTion algorithm (NEXT). More specifically, NEXT is based on the (possibly inexact) solution from each agent of a  sequence of strongly convex, decoupled, optimization subproblems, followed by a consensus-based update. In each subproblem,  the nonconvex sum-utility $F$ is replaced by  a (strongly) convex surrogate that can be \emph{locally} computed by the agent, independently from the others. Then, two steps of consensus are performed to force respectively an agreement among users' local solutions and update some parameters in the surrogate functions. While leveraging  consensus/diffusion methods   to align   local users'  estimates has been widely explored in the literature, the use of dynamic consensus  to update the objective functions of users' subproblems is a novel idea, introduced for the first time in this paper, which makes the proposed scheme convergent even in the case of nonconvex $F$'s.
Some remarkable  novel features of NEXT are: i) it is very flexible in the choice of the approximation of $F$, which need not be necessarily its first or second order approximation (like in all current consensus-based  schemes); of course it includes, among others, updates based
on gradient- or Newton-type approximations; ii) it allows for inexact solutions of the subproblems;  and iii) it deals with nonconvex and nonsmooth objectives in the form $F+G$.
The proposed framework encompasses a gamut of novel algorithms, all converging under the same conditions. This offers a lot of
flexibility to tailor the method to specific problem structures and to control the signaling/communication overhead. We illustrate several potential applications in different areas,  such as distributed signal processing, communications, and networking. Numerical results show that  our schemes outperform current ones in terms of practical convergence while reaching the same (stationary) solutions. Quite remarkably, this has been observed also for convex problems, which was
not obvious at all, because existing algorithms heavily rely
on the convexity of the problem, whereas our framework has
been designed to handle (also) nonconvex problems.  As a final remark, we underline that, at more methodological level, the combination of SCA techniques \cite{Scutari-Facchinei-Song-Palomar-Pang,Scutari-Facchinei-Sagratella}, and dynamic consensus \cite{Zhu-Martinez} and, in particular, the need to conciliate the use of surrogate functions with local updates,  led to the development of a new type of convergence analysis, which does not rely on convexity properties of the
utility functions,  and is also of interest per se and could bring to further developments.

The paper is organized as follows.  Section II contains the main theoretical results of the paper: we start with an informal, constructive description of the algorithm (cf. Sec. II.A), and then introduce formally the  framework along with its convergence properties (cf. Sec. II.B). Section III generalizes NEXT to more general settings.  Section IV customizes NEXT to a variety of   practical problems, arising from applications in signal processing, machine learning, and networking; it also compares numerically our schemes with prior algorithms.   Finally, Section V draws some conclusions.\vspace{-0.35cm}

\section{A New In-network Optimization Technique}

Consider a network composed of $I$ autonomous agents aiming to cooperatively and distributively solve Problem (\ref{Problem}).

\smallskip
\noindent \textbf{Assumption A [On Problem (\ref{Problem})]:}
\begin{description}[topsep=-2.0pt,itemsep=-2.0pt]
\item[(A1)]  The set $\mathcal{K}$ is (nonempty) closed and convex;\smallskip
\item[(A2)] Each $f_i$ is $C^1$ (possibly nonconvex) on an open set containing $\mathcal K$;\smallskip
\item[(A3)]  Each $\nabla f_i$ is Lipschitz continuous on $\mathcal{K}$, with constant $L_i$; let $L^{\max}\triangleq \max_i L_i;$\smallskip
\item[(A4)] {$\nabla F$ is bounded on $\mathcal{K}$: there exists a finite scalar $L_F>0$ such that  $\|\nabla F(\bx)\|\leq L_F$, for all  $\bx\in \mathcal K$;}\smallskip
\item[(A5)] {$G$ is a convex function (possibly nondifferentiable) with bounded subgradients on $\mathcal K$: there exists a finite scalar $L_G>0$ such that $\|\partial G(\bx)\|\leq L_G$, for any subgradient $\partial G(\bx)$ of $G$ at any $\bx\in \mathcal K$};\smallskip
\item[(A6)] {$U$ is coercive on $\mathcal K$, i.e., $\lim_{\bx\in \mathcal K, \|\bx\|\rightarrow \infty} U(\bx)=+\infty$.}\smallskip
\end{description}

Assumption A is standard and  satisfied by many practical problems. For instance, A3-A4
hold automatically if $\mathcal{K}$ is bounded, whereas A6 guarantees the existence of a solution. Note that $f_i$'s need not be convex; moreover, no knowledge of $L_F$ and $L_G$ is required. We also make the blanket assumption that each agent $i$   knows  only its  own cost function $f_i$ (but not $F$), the common $G$, and the feasible set $\mathcal K$.

\noindent {\textbf{On network topology:}  Time is slotted,  and at any time-slot $n$, the network is modeled as a digraph $\mathcal{G}[n]=(\mathcal{V,E}[n])$, where $\mathcal{V}=\{1,\ldots,I\}$ is the vertex  set (i.e., the set of agents), and $\mathcal{E}[n]$ is the set of (possibly) time-varying directed edges. The  in-neighborhood of agent $i$ at time $n$ (including node $i$) is defined as  $\mathcal{N}_i^{\rm in}[n]=\{j|(j,i)\in\mathcal{E}[n]\}\cup\{i\}$; it sets  the communication pattern  between single-hop neighbors:
 agents  $j\neq i$ in $\mathcal{N}_i^{\rm in}[n]$ can communicate with node $i$ at time $n$. Associated with each graph $\mathcal{G}[n]$, we introduce (possibly) time-varying weights $w_{ij}[n]$ matching  $\mathcal{G}[n]$:
\begin{align}\label{weights}
w_{ij}[n]=\left\{
             \begin{array}{ll}
               \theta_{ij}\in[\vartheta,1] & \hbox{if $j\in \mathcal{N}_i^{\rm in}[n]$;} \\
                0 & \hbox{otherwise,}
             \end{array}
           \right.
\end{align}
\noindent for some $\vartheta\in (0,1)$, and define the matrix $\bW[n]\triangleq (w_{ij}[n])_{i,j=1}^I$. These weights will be used later on in definition of the proposed algorithm.}

\noindent {{\bf  Assumption B (On the network topology/connectivity):} 
\begin{description}[topsep=-2.0pt,itemsep=-2.0pt]
\item[(B1)]  The   sequence of graphs $\mathcal{G}[n]$ is B-strongly connected, i.e., there exists an integer $B > 0$ such that the graph $\mathcal{G}[k]=(\mathcal{V},\mathcal{E}_B[k])$, with
$\mathcal{E}_B[k]=\bigcup_{n=kB}^{(k+1)B-1}\mathcal{E}[n]$
 is strongly connected, for all $k\geq0$; \smallskip
\item[(B2)] Every weight matrix $\bW[n]$ in (\ref{weights})   satisfies
\begin{align}\label{Doubly_Stochastic}
\bW[n]\,\mathbf{1}=\mathbf{1} \quad \text{and}\quad \mathbf{1}^T \bW[n]=\mathbf{1}^T \quad \forall n.
\end{align}
\end{description}}

Assumption B1 allows strong connectivity to occur over a long time period and in arbitrary order.    Note also that $\bW[n]$ can be   time-varying and need not be symmetric.

Our goal is to develop an algorithm that converges to stationary solutions of Problem (\ref{Problem}) while being implementable  in the above  distributed setting (Assumptions A and B). \vspace{-0.2cm}
\begin{definition} A point $\x^*$  is a stationary solution of Problem (\ref{Problem}) if    {a subgradient $\partial G(\x^*)$ exists such
that
$(\nabla F(\x^*) $ $+ \partial G(\x^*) )^T(\mathbf{y}-\x^*) \geq 0$, for all $\mathbf{y}\in \mathcal K$.} \\Let $\mathcal S$ be the set of stationary solutions of (\ref{Problem}).\vspace{-0.2cm}
\end{definition}
To shed light on the core idea of the novel decomposition technique, we begin by introducing in Sec. \ref{sec:informal_description} an informal and constructive  description of our scheme. Sec. \ref{sec:NEXT_algo} will formally introduce  NEXT along with its convergence properties. The inexact version of the scheme is discussed in Sec. \ref{sec:extensions}.  \vspace{-0.2cm}

\vspace{-0.25cm}
\subsection{Development of NEXT: A constructive approach}\label{sec:informal_description}
Devising  distributed solution methods for Problem (\ref{Problem}) faces two main challenges, namely: the nonconvexity of $F$ and the lack of global information on $F$. To cope with these issues, we propose to combine SCA techniques  (Step 1 below)  with consensus mechanisms (Step 2), as described next.

\noindent \textbf{Step 1 (local SCA optimization):} Each agent $i$ maintains a local estimate $\bx_i$ of the optimization variable $\bx$ that is iteratively updated. {Solving directly Problem (\ref{Problem}) may be too costly (due to the nonconvexity of $F$) and is  not even feasible in a distributed setting (because of the lack of knowledge of the whole $F$). One may then prefer to approximate Problem (\ref{Problem}), in some suitable sense, in order
to permit each agent to compute \emph{locally} and \emph{efficiently} the new iteration.
Since node $i$ has knowledge only of $f_i$, writing  $F(\bx_i)=f_i(\bx_i)+\sum_{j\neq i}f_j(\bx_i)$, leads naturally to a convexification of  ${F}$ having the following form:  i) at every iteration $n$, the (possibly) nonconvex  $f_i(\bx_i)$ is replaced by a strongly convex surrogate, say  $\widetilde{f}_i(\bullet;\bx_i[n]):\mathcal K \rightarrow \mathbb{R}$, which may depend on the current iterate $\mathbf{x}_i[n]$; and  ii)  $\sum_{j\neq i}f_j(\bx_i)$  is  linearized around $\bx_i[n]$ (because it  is not  available at node $i$). More formally, the proposed  updating scheme reads: at every iteration $n$,  given the local  estimate $\bx_i[n]$, each agent $i$ solves the following \emph{strongly convex} optimization problem:\vspace{-0.1cm}
\begin{eqnarray}\label{best_resp_x_hat_2}
&&\hspace{-1.3cm}\widehat{\bx}_i(\bx_i[n])\nonumber\\
&&\hspace{-1.3cm}\triangleq{\rm arg\!}\min_{\!\!\!\!\!\!\!\!\!\!\!\!\!\boldsymbol{\bx_i}\in\mathcal{K}} \vspace{-0.2cm} {{\widetilde{f}_{i}(\bx_{i};\bx_{i}[n])+\boldsymbol{\pi}_{i}(\bx_{i}[n])^{T}(\bx_{i}-\bx_{i}[n])}}+G(\bx_i),
\end{eqnarray}
where  $\boldsymbol{\pi}_i(\bx_i[n])$ is the gradient of  $\sum_{j\neq i}f_j(\bx_i)$ at $\mathbf{x}_i[n]$, i.e.
\begin{equation}\label{pi}
\boldsymbol{\pi}_i(\bx_i[n])\triangleq\sum_{j\neq i}\nabla_{\bx}f_j(\bx_i[n]). \vspace{-0.1cm}
\end{equation}
Note that $\widehat{\bx}_i(\bx_i[n])$ is well-defined, because (\ref{best_resp_x_hat_2}) has a unique solution.   The idea behind the iterate (\ref{best_resp_x_hat_2}) is to compute stationary solutions of Problem (\ref{Problem}) as fixed-points of  the mappings $\widehat{\bx}_i(\bullet)$.  Postponing the convergence analysis  to Sec. \ref{sec:NEXT_algo}, a first natural question is about the choice  of the surrogate function $\widetilde{f}_i(\bullet;\bx_i[n])$.   The  next proposition addresses this issue and establishes the connection between the  fixed-points of  $\widehat{\bx}_i(\bullet)$  and the stationary solutions of Problem (\ref{Problem}); the proof follows the same steps as \cite[Prop. 8(b)]{Scutari-Facchinei-Sagratella} and thus is omitted. \vspace{-0.2cm}
\begin{proposition}\label{Prop:fixed-point-stationary} Given Problem (\ref{Problem}) under A1-A6, suppose that  $\widetilde{f}_i$ satisfies the following conditions:
 \begin{description}
\item[ (F1)]
 $\widetilde{f}_{i} (\mathbf{\bullet}; \mathbf{x})$ is uniformly strongly  convex with constant $\tau_i>0$ on $\mathcal K$;\smallskip
\item[  (F2)]  $\nabla \widetilde{f}_{i} (\mathbf{x};\mathbf{x}) = \nabla f_i(\mathbf{x})$ for all $\mathbf{x} \in \mathcal K$;\smallskip
\item[  (F3)]  $\nabla \widetilde{f}_{i} (\mathbf{x};\mathbf{\bullet})$ is uniformly Lipschitz continuous
on $ \mathcal K$.
\end{description}
Then, the set of fixed-point of $\widehat{\bx}_i(\bullet)$ coincides with that of the stationary solutions of  (\ref{Problem}). Therefore, $\widehat{\bx}_i(\bullet)$ has a fixed-point.
\end{proposition}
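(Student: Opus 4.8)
The plan is to identify the fixed points of the best-response map $\widehat{\bx}_i(\bullet)$ with the solutions of the variational inequality in the Definition of stationarity, by matching the first-order optimality conditions of the two problems. The structural facts I exploit are: by F1 (strong convexity of $\widetilde{f}_i(\bullet;\bz)$), linearity of the term $\boldsymbol{\pi}_i(\bz)^T(\bullet-\bz)$, and convexity of $G$ and $\mathcal{K}$, the subproblem (\ref{best_resp_x_hat_2}) is strongly convex; hence it has a unique minimizer (so $\widehat{\bx}_i(\bullet)$ is well defined) and its minimum principle is both necessary and sufficient for optimality. Written out, $\widehat{\bx}_i(\bz)$ is the unique $\bx\in\mathcal{K}$ for which there exists a subgradient $\partial G(\bx)$ with $\big(\nabla_{\bx}\widetilde{f}_i(\bx;\bz)+\boldsymbol{\pi}_i(\bz)+\partial G(\bx)\big)^T(\by-\bx)\geq 0$ for all $\by\in\mathcal{K}$.

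For one inclusion, suppose $\bx^{\star}$ is a fixed point, i.e.\ $\bx^{\star}=\widehat{\bx}_i(\bx^{\star})$. Specializing the condition above to $\bz=\bx=\bx^{\star}$ and using F2 together with the definition (\ref{pi}) of $\boldsymbol{\pi}_i$, one has $\nabla_{\bx}\widetilde{f}_i(\bx^{\star};\bx^{\star})+\boldsymbol{\pi}_i(\bx^{\star})=\nabla f_i(\bx^{\star})+\sum_{j\neq i}\nabla f_j(\bx^{\star})=\nabla F(\bx^{\star})$, so the condition reduces to $\big(\nabla F(\bx^{\star})+\partial G(\bx^{\star})\big)^T(\by-\bx^{\star})\geq 0$ for all $\by\in\mathcal{K}$, i.e.\ $\bx^{\star}\in\mathcal{S}$. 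For the converse, I start from $\bx^{\star}\in\mathcal{S}$, use the same identity $\nabla F(\bx^{\star})=\nabla_{\bx}\widetilde{f}_i(\bx^{\star};\bx^{\star})+\boldsymbol{\pi}_i(\bx^{\star})$ to rewrite the stationarity variational inequality as the minimum principle of (\ref{best_resp_x_hat_2}) at $\bz=\bx^{\star}$; since that principle is sufficient for optimality of the strongly convex subproblem, and the minimizer is unique, $\bx^{\star}=\widehat{\bx}_i(\bx^{\star})$, a fixed point. This shows the fixed-point set of $\widehat{\bx}_i(\bullet)$ equals $\mathcal{S}$.

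To see that a fixed point exists, it suffices to show $\mathcal{S}\neq\emptyset$. By A2 and A5 the objective $U=F+G$ is continuous on the closed set $\mathcal{K}$, and by A6 it is coercive, so its sublevel sets on $\mathcal{K}$ are compact and $U$ attains a minimizer $\bx^{\star}$ over $\mathcal{K}$; the first-order necessary condition $0\in\nabla F(\bx^{\star})+\partial G(\bx^{\star})+N_{\mathcal{K}}(\bx^{\star})$, where $N_{\mathcal{K}}(\bx^{\star})$ is the normal cone to $\mathcal{K}$ at $\bx^{\star}$, is exactly the stationarity condition, hence $\mathcal{S}\neq\emptyset$.

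I expect no genuine difficulty: the argument is the standard passage between a best-response fixed point and a stationary point, and it parallels \cite[Prop. 8(b)]{Scutari-Facchinei-Sagratella}. The only step requiring care is writing the optimality condition of (\ref{best_resp_x_hat_2}) in the additive subdifferential form used above, i.e.\ that $\partial\big(\widetilde{f}_i(\bullet;\bz)+\boldsymbol{\pi}_i(\bz)^T(\bullet-\bz)+G+\iota_{\mathcal{K}}\big)=\nabla_{\bx}\widetilde{f}_i(\bullet;\bz)+\boldsymbol{\pi}_i(\bz)+\partial G+N_{\mathcal{K}}$, with $\iota_{\mathcal{K}}$ the indicator function of $\mathcal{K}$; this requires a constraint qualification, but it holds automatically because $\widetilde{f}_i(\bullet;\bz)$ is smooth and, by A5, $G$ is finite-valued and convex on a neighborhood of $\mathcal{K}$, so ordinary convex subdifferential calculus applies. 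Note that F3 plays no role in this proposition; it is needed only in the convergence analysis.
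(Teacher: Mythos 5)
Your proof is correct and follows essentially the argument the paper intends (it omits the proof, deferring to \cite[Prop.~8(b)]{Scutari-Facchinei-Sagratella}): match the minimum principle of the strongly convex subproblem (\ref{best_resp_x_hat_2}) at $\bz=\bx^\star$ with the stationarity variational inequality via F1, F2 and the identity $\nabla\widetilde{f}_i(\bx^\star;\bx^\star)+\boldsymbol{\pi}_i(\bx^\star)=\nabla F(\bx^\star)$, and obtain existence from coercivity (A6) of $U$ on the closed set $\mathcal K$. Your observations that the subdifferential sum rule needs no extra qualification here and that F3 is not used in this proposition are both accurate.
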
\vspace{-0.2cm}
Conditions F1-F3 are quite natural: $\widetilde{f}_{i}$  should be regarded as a (simple) convex, local,
approximation of $f_i$ at the point $\bx$   that preserves the first order properties of $f_i$. Several feasible choices are possible for a given $f_i$; the  appropriate one depends on
the problem at hand and computational requirements; we
discuss alternative  options for $\widetilde{f}_i$ in Sec. \ref{sec:NEXT_discussion}. Here, we only remark   that no extra conditions  on $\widetilde{f}_i$ are required  to guarantee convergence of the proposed algorithms.}

\noindent \textbf{Step 2 (consensus update):} To force the asymptotic agreement among the $\bx_i$'s,  a consensus-based step is employed on $\widehat{\bx}_i(\bx_i[n])$'s. Each agent $i$ updates its  $\bx_i$ as:\vspace{-0.1cm}{
 \begin{equation}\label{consensus_update}
\bx_i[n+1]= \sum_{j\in \mathcal{N}_i^\text{in}[n]} w_{ij}[n]\, \widehat{\bx}_j(\bx_j[n]),\vspace{-0.1cm}
\end{equation}
where  $(w_{ij}[n])_{ij}$ is any set of (possibly time-varying) weights     satisfying Assumption B2; several choices are possible, see Sec. \ref{sec:NEXT_discussion} for details.   Since the weights are constrained by the network topology, \eqref{consensus_update} can be implemented via local message exchanges: agent $i$ updates its estimate $\bx_i$ by averaging over the  solutions $\widehat{\bx}_j(\bx_j[n])$ received   from its neighbors.  }
\\{ The rationale behind the proposed iterates (\ref{best_resp_x_hat_2})-(\ref{consensus_update}) is to compute fixed-points\,$\bx_i^{\infty}\,$of the mappings $\widehat{\bx}_i(\bullet)$ [i.e., $\widehat{\bx}_i(\bx_i^\infty)=\bx_i^\infty$ for all $i$] that are also stationary solutions of Problem (\ref{Problem}), while reaching asymptotic  consensus on  $\bx_i$, i.e., $\bx_i^{\infty}=\bx_j^{\infty}$, for all $i,j$, with $i\neq j$; this fact will be proved in Sec. \ref{sec:NEXT_algo}.}

\noindent {\textbf{Toward a fully distributed implementation:} The computation of $\widehat{\bx}_i(\bx_i[n])$ in (\ref{best_resp_x_hat_2}) is not fully distributed yet, because the evaluation of $\boldsymbol{\pi}_{i}(\bx_i[n])$ in (\ref{pi}) would require the knowledge of all $\nabla f_j(\bx_i[n])$, which is not available locally at node $i$.  To cope with this issue, the proposed approach consists in replacing $\boldsymbol{\pi}_i(\bx_i[n])$  in (\ref{best_resp_x_hat_2}) with a \emph{local} estimate, say  $\widetilde{\boldsymbol{\pi}}_{i}[n]$,  asymptotically converging to $\boldsymbol{\pi}_{i}(\bx_i[n])$, and solve instead}
\begin{align}\label{best_resp_x_hat_3}
&\widetilde{\bx}_i(\bx_i[n],\widetilde{\boldsymbol{\pi}}_{i}[n]) \nonumber\\
&\triangleq{\rm arg\!}\min_{\!\!\!\!\!\!\!\!\!\!\!\!\!\boldsymbol{\bx_i}\in\mathcal{K}} \underset{\triangleq\,\widetilde{U}_{i}(\bx_{i};\bx_{i}[n],\widetilde{\boldsymbol{\pi}}_i[n])}{\underbrace{\widetilde{f}_{i}(\bx_{i};\bx_{i}[n])+\widetilde{\boldsymbol{\pi}}_i[n]^T(\bx_{i}-\bx_{i}[n])+G(\bx_i)}}. \vspace{-0.2cm}
\end{align}
Rewriting $\boldsymbol{\pi}_i(\bx_i[n])$   in (\ref{pi}) as  \vspace{-.2cm}
\begin{equation}\label{pi2}
\boldsymbol{\pi}_i(\bx_i[n]) = \underset{\triangleq\overline{\nabla f}(\bx_{i}[n])}{I\cdot\underbrace{\left(\frac{1}{I}\sum_{j=1}^{I}\nabla f_{j}(\bx_{i}[n])\right)}}-\nabla f_{i}(\bx_{i}[n]),\vspace{-.1cm}
 \end{equation}
we propose to update $\widetilde{\boldsymbol{\pi}}_{i}[n]$ mimicking \eqref{pi2}:
\begin{equation}\label{pi3}
\widetilde{\boldsymbol{\pi}}_i[n]\triangleq I\cdot \by_i[n]-\nabla f_i(\bx_i[n]),
\end{equation}
where    $\by_i[n]$  is a local auxiliary variable (controlled by user $i$) that aims to asymptotically track  $\overline{\nabla f}(\bx_i[n])$. Leveraging \emph{dynamic}  average consensus methods \cite{Zhu-Martinez}, this can be done updating $\by_i[n]$ according to the following recursion:\vspace{-0.2cm}
\begin{equation}\label{y2}
\hspace{-0.04cm}\by_i[n+1]\triangleq\sum_{j=1}^I w_{ij}[n]\by_j[n] \hspace{-0.02cm} + \hspace{-0.02cm}\left(\nabla f_i(\x_i[n+1])\hspace{-0.02cm}-\hspace{-0.02cm}\nabla f_i(\bx_i[n])\right) \hspace{-0.2cm}\vspace{-0.05cm}
\end{equation}
with $\by_i[0]\triangleq\nabla f_i(\bx_i[0])$. In fact, if the sequences $\{\bx_i[n]\}_n$ are convergent and consensual, it holds $\left\Vert \by_{i}[n]-\overline{\nabla f}(\bx_{i}[n])\right\Vert \underset{n\rightarrow\infty}{\longrightarrow}0$ {(a fact that will be proved in the next section, see Theorem \ref{simplified_convergence_th})}, and thus $\left\Vert \widetilde{\boldsymbol{\pi}}_{i}[n]-\boldsymbol{\pi}_i(\bx_i[n])\right\Vert \underset{n\rightarrow\infty}{\longrightarrow}0$. Note that the update of $\by_i[n]$, and thus $\widetilde{\boldsymbol{\pi}}_i[n]$, can be now performed locally with message exchanges with the agents in the neighborhood $\mathcal N_i$.

\vspace{-0.3cm}
\subsection{The NEXT algorithm}\label{sec:NEXT_algo}
We are now in the position to formally introduce the NEXT algorithm, as given in Algorithm 1.  NEXT  builds on the iterates \eqref{best_resp_x_hat_3},  \eqref{consensus_update} (wherein each $\widehat{\bx}_j$ is replaced by $\widetilde{\bx}_j$) and \eqref{pi3}-\eqref{y2} introduced in the previous section. Note that in S.2, in addition to solving the strongly convex optimization problem (\ref{best_resp_x_hat_3}), we also introduced a step-size in the iterate: the new point $\bz_i[n]$ is a convex combination of the current estimate ${\bx}_i[n]$ and the solutions of (\ref{best_resp_x_hat_3}). Note that  we used the following simplified notation: $\widetilde{\bx}_i(\bx_i[n],\widetilde{\boldsymbol{\pi}}_{i}[n])$ in (\ref{best_resp_x_hat_3}) and $\nabla f_i(\bx_i[n])$ are denoted in Algorithm 1 as $\widetilde{\bx}_i[n]$ and $\nabla f_i[n]$, respectively. The convergence properties of NEXT are given in Theorem \ref{simplified_convergence_th}.

\begin{algorithm}[t]

$\textbf{Data}:$ $\bx_{i}[0]\in \mathcal{K}$, $\by_i[0]= \nabla f_i[0]$,    $\widetilde{\boldsymbol{\pi}}_{i}[0]=I\by_i[0]-\nabla f_i[0]$, $\forall i=1,\ldots ,I$, and $\{\bW[n]\}_n$. Set $n=0$.\smallskip

\texttt{$\mbox{(S.1)}$}$\,\,$If $\mathbf{x}[{n}]$ satisfies a termination
criterion: STOP;\smallskip

\texttt{$\mbox{(S.2)}$} \texttt{Local SCA optimization}:  Each agent $i$

\hspace{1.1cm} \vspace{-0.05cm} (a) computes locally $\widetilde{\bx}_{i}[n]$:
\begin{equation}\label{opt_prob_alg}
\widetilde{\bx}_{i}[n]\triangleq \underset{\bx_i\in\mathcal{K}}{\text{argmin}} \,\,\widetilde{U}_i\left(\bx_i;\bx_i[n],\widetilde{\boldsymbol{\pi}}_{i}[n]\right) \nonumber
\end{equation}

\hspace{1.1cm} \vspace{-0.05cm} (b) updates its local variable $\bz_i[n]$:
\begin{equation}
\bz_i[n]=\bx_i[n]+\alpha[n]\left(\widetilde{\bx}_{i}[n]-\bx_i[n]\right) \nonumber
\end{equation}

\vspace{-0.05cm}
\texttt{$\mbox{(S.3)}$} \texttt{Consensus update}:   Each agent $i$ collects data from its current neighbors  and updates $\bx_i[n]$, $\by_i[n]$, and $\widetilde{\boldsymbol{\pi}}_{i}[n]$:\vspace{-.2cm}{
\begin{align}
  &\hbox{(a)$\;\;  \displaystyle \bx_i[n+1]= \sum_{j=1}^I w_{ij}[n]\, \bz_j[n]$}\nonumber\\
  &\hbox{(b)$\;\;  \displaystyle \by_i[n+1]=\sum_{j=1}^I w_{ij}[n]\,\by_j[n]+\left(\nabla f_i[n+1]-\nabla f_i[n]\right)$}\nonumber\\
  &\hbox{(c)}\;\;  \widetilde{\boldsymbol{\pi}}_{i}[n+1]=I\cdot \by_i[n+1]-\nabla f_i[n+1]\nonumber
\end{align}}

\vspace{-0.4cm}
\texttt{$\mbox{(S.4)}$} $n\leftarrow n+1$, and go to \texttt{$\mbox{(S.1)}.$}

\protect\caption{\hspace{-2.5pt}\textbf{:} \label{alg:general}\textbf{in-Network succEssive conveX approximaTion (NEXT)}}
\end{algorithm}

\vspace{-0.2cm}
\begin{theorem}\label{simplified_convergence_th}{
Let $\{\mathbf{x}[n]\}_n\triangleq \{(\mathbf{x}_i[n])_{i=1}^I\}_n$ be the sequence generated by Algorithm 1, and let $\{\overline{\mathbf{x}}[n]\}_n\triangleq \{(1/I)\,\sum_{i=1}^I\mathbf{x}_i[n]\}_n$ be its average. Suppose that i) Assumptions A and B hold; and ii) the step-size sequence $\{\alpha[n]\}_n$ is chosen so that  $\alpha[n]\in (0,1]$, for all $n$,
\begin{equation}\label{step-size}
   \hbox{$\sum_{n=0}^{\infty}\alpha[n]=\infty$ \hspace{.2cm} and \hspace{.2cm} $\sum_{n=0}^{\infty}\alpha[n]^2<\infty.$}\\
\end{equation}
Then,\,(a)\,\emph{\texttt{[convergence]}:\,}the sequence\,$\{\overline{\mathbf{x}}[n]\}_n$\,is bounded and all its limit points are stationary solutions of Problem  (\ref{Problem}); (b) \emph{\texttt{[consensus]}}:   all the sequences  $\{\bx_i[n]\}_n$  asymptotically agree, i.e., $\|\mathbf{x}_{i}[n]-\overline{{\mathbf{x}}}[n]\|\underset{n\rightarrow\infty}{\longrightarrow}0
 $, for all $i=1,\ldots ,I$.}\vspace{-0.2cm}
\end{theorem}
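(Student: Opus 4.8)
The plan is to follow the network average $\overline{\bx}[n]$ and to control each agent's deviation from it. Since every $\bW[n]$ is doubly stochastic (B2), summing the updates in S.2--S.3 over $i$ yields two structural identities: $\overline{\bx}[n+1]=\overline{\bx}[n]+\alpha[n]\,(1/I)\sum_i(\widetilde{\bx}_i[n]-\bx_i[n])$, and, by induction from $\by_i[0]=\nabla f_i[0]$, $\overline{\by}[n]\triangleq(1/I)\sum_i\by_i[n]=(1/I)\sum_i\nabla f_i(\bx_i[n])$ --- the gradient-tracking invariant that makes $\widetilde{\boldsymbol{\pi}}_i[n]=I\by_i[n]-\nabla f_i(\bx_i[n])$ a sensible surrogate for $\boldsymbol{\pi}_i(\bx_i[n])$. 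A preliminary step (slightly delicate, since $\mathcal{K}$ and the individual $\nabla f_i$ need not be bounded) is to show that $\{\bx_i[n]\}$, $\{\by_i[n]\}$, $\{\widetilde{\boldsymbol{\pi}}_i[n]\}$, $\{\widetilde{\bx}_i[n]\}$ are all bounded: strong convexity of $\widetilde U_i$ (F1) with A5 gives $\|\widetilde{\bx}_i[n]-\bx_i[n]\|\le(I\|\by_i[n]\|+L_G)/\tau_i$, so it suffices to bound $\{\by_i[n]\}$, which is done by bootstrapping the tracking invariant with A3--A4 and the descent estimate below.

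I would next show the agreement and tracking errors vanish. Under Assumption B, the products $\bW[n]\bW[n-1]\cdots\bW[s]$ converge to $(1/I)\mathbf{1}\mathbf{1}^T$ geometrically; hence $\|\bx_i[n]-\overline{\bx}[n]\|$ is dominated by a geometric convolution of $\alpha[s]\,\|\widetilde{\bx}_i[s]-\bx_i[s]\|$, which tends to $0$ because $\alpha[n]\to 0$ and $\|\widetilde{\bx}_i[n]-\bx_i[n]\|$ is bounded --- this is statement (b) --- and is moreover square-summable, so $\sum_n\alpha[n]\|\bx_i[n]-\overline{\bx}[n]\|<\infty$. Consequently $\|\bx_i[n+1]-\bx_i[n]\|\to 0$, so by A3 the driving term $\nabla f_i[n+1]-\nabla f_i[n]$ of S.3(b) vanishes (and is square-summable), and the same matrix estimate gives $\|\by_i[n]-\overline{\by}[n]\|\to 0$, hence $\|\widetilde{\boldsymbol{\pi}}_i[n]-\boldsymbol{\pi}_i(\bx_i[n])\|\to 0$, with $\sum_n\alpha[n]\|\widetilde{\boldsymbol{\pi}}_i[n]-\boldsymbol{\pi}_i(\bx_i[n])\|<\infty$.

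The heart of the proof, and the main obstacle, is a descent inequality for $U$ along the average. Writing $\overline{\bx}[n+1]$ as a convex combination of $\overline{\bx}[n]$ and $(1/I)\sum_i\widetilde{\bx}_i[n]$, applying the descent lemma to $F$ (A3) and convexity of $G$ (A5), and then invoking the optimality of $\widetilde{\bx}_i[n]$ for the $\tau_i$-strongly convex subproblem together with F2 (which turns the optimality gap into a term involving $\nabla f_i(\bx_i[n])+\widetilde{\boldsymbol{\pi}}_i[n]=I\by_i[n]$), one obtains, for constants $c_1,c_2>0$,
\begin{equation}
U(\overline{\bx}[n+1])\le U(\overline{\bx}[n])-c_1\alpha[n]\sum_{i=1}^I\|\widetilde{\bx}_i[n]-\bx_i[n]\|^2+\alpha[n]\,e[n]+c_2\alpha[n]^2,\nonumber
\end{equation}
where the error $e[n]$ is a linear combination of the agreement and tracking errors: it appears precisely because node $i$'s surrogate is centered at $\bx_i[n]\neq\overline{\bx}[n]$ and is driven by the only-asymptotically-exact estimate $\widetilde{\boldsymbol{\pi}}_i[n]\neq\boldsymbol{\pi}_i(\bx_i[n])$, and the crucial point is to isolate it and verify, via the previous step, that $\sum_n(\alpha[n]e[n]+\alpha[n]^2)<\infty$. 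Since $U$ is bounded below on $\mathcal{K}$ (A6 and continuity), this forces $U(\overline{\bx}[n])$ to be bounded above --- whence $\{\overline{\bx}[n]\}$ is bounded by coercivity (A6) --- and convergent, and it yields $\sum_n\alpha[n]\sum_i\|\widetilde{\bx}_i[n]-\bx_i[n]\|^2<\infty$.

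Finally, that every limit point of $\{\overline{\bx}[n]\}$ is stationary I would prove by contradiction. If a limit point $\bx^\infty$ is not stationary, then by Proposition \ref{Prop:fixed-point-stationary} it is not a fixed point of $\widehat{\bx}_i(\cdot)$, so $\eta\triangleq\|\widehat{\bx}_i(\bx^\infty)-\bx^\infty\|>0$. Using continuity of the best-response map $(\bx,\boldsymbol{\pi})\mapsto\widetilde{\bx}_i(\bx,\boldsymbol{\pi})$ (F1, F3), consensus, and $\widetilde{\boldsymbol{\pi}}_i[n]\to\boldsymbol{\pi}_i(\cdot)$, there is a ball $\mathcal{B}$ about $\bx^\infty$ and an index $N$ with $\|\widetilde{\bx}_i[n]-\bx_i[n]\|\ge\eta/2$ whenever $n\ge N$ and $\overline{\bx}[n]\in\mathcal{B}$. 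Since $\|\overline{\bx}[n+1]-\overline{\bx}[n]\|\le M\alpha[n]$ ($M$ the uniform bound on $\|\widetilde{\bx}_i[n]-\bx_i[n]\|$) and $\overline{\bx}[n]$ visits $\mathcal{B}$ infinitely often, either $\overline{\bx}[n]\to\bx^\infty$ or $\overline{\bx}[n]$ re-enters and leaves $\mathcal{B}$ infinitely often; in both cases $\sum_n\alpha[n]\sum_i\|\widetilde{\bx}_i[n]-\bx_i[n]\|^2$ accrues infinitely many contributions each bounded below by a fixed positive constant, contradicting its finiteness. Hence all limit points lie in $\mathcal{S}$, giving (a); given the descent inequality and the error decay, this slow-motion/contradiction argument and the geometric-mixing argument for (b) are comparatively routine.
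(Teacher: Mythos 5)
Your proposal is correct in substance and runs on the same machinery as the paper (geometric mixing of the doubly stochastic products, convolution/summability lemmas, a Robbins--Siegmund-type descent argument for $U$ along the average), but it is organized around a different pivot quantity, so it is a genuinely different decomposition rather than a paraphrase. The paper introduces the auxiliary average-centered sequences $\widetilde{\bx}^{\rm av}_i[n]$, $\by^{\rm av}_i[n]$, $\widetilde{\boldsymbol{\pi}}^{\rm av}_i[n]$ and, applying Proposition \ref{Prop_best_response}(c) at $\bar{\bx}[n]$, obtains a descent inequality whose negative term is $-\tfrac{c_\tau}{I}\,\alpha[n]\sum_i\|\widehat{\bx}_i(\bar{\bx}[n])-\bar{\bx}[n]\|^2$, with error terms $\|\widetilde{\bx}^{\rm av}_i[n]-\widehat{\bx}_i(\bar{\bx}[n])\|$ and $\|\widetilde{\bx}_i[n]-\widetilde{\bx}^{\rm av}_i[n]\|$ controlled in Proposition \ref{Prop_cons}(c)--(d); stationarity of limit points then follows from $\|\widehat{\bx}_i(\bar{\bx}[n])-\bar{\bx}[n]\|\to0$ and continuity of $\widehat{\bx}_i(\bullet)$. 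You instead extract the descent directly from each agent's subproblem optimality at its own iterate: F1--F2 give $(I\by_i[n])^T(\widetilde{\bx}_i[n]-\bx_i[n])+G(\widetilde{\bx}_i[n])-G(\bx_i[n])\le-\tau_i\|\widetilde{\bx}_i[n]-\bx_i[n]\|^2$, so your negative term is in the computed residuals $\|\widetilde{\bx}_i[n]-\bx_i[n]\|^2$ and your error $e[n]$ collects the consensus deviations $\|\bx_i[n]-\bar{\bx}[n]\|$ and the tracking deviations $\|\by_i[n]-\bar{\by}[n]\|$ (equivalently $\|\widetilde{\boldsymbol{\pi}}_i[n]-\boldsymbol{\pi}_i(\bx_i[n])\|$), whose $\alpha$-weighted summability you get from the same geometric-convolution estimates as the paper's Lemma \ref{Lemma_sequences}. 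Your endgame (slow motion $\|\bar{\bx}[n+1]-\bar{\bx}[n]\|\le M\alpha[n]$ plus a crossing/contradiction argument, transferring $\|\widehat{\bx}_i(\bx^\infty)-\bx^\infty\|>0$ to a uniform lower bound on $\|\widetilde{\bx}_i[n]-\bx_i[n]\|$ near $\bx^\infty$ via continuity of the solution map in both arguments, which follows from F1/F3 exactly as in the paper's bound on $\|\widetilde{\bx}_i[n]-\widetilde{\bx}^{\rm av}_i[n]\|$) is a legitimate replacement for the paper's liminf/limsup step, and is in fact the argument the paper delegates to \cite{Scutari-Facchinei-Song-Palomar-Pang,Scutari-Facchinei-Sagratella}. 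What your route buys is that the auxiliary $\cdot^{\rm av}$ sequences disappear; what the paper's route buys is that the fixed-point/continuity bookkeeping is concentrated once and for all in Proposition \ref{Prop_best_response}.

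One spot you should tighten: the preliminary boundedness of $\{\by_i[n]\}$ (hence of $\|\widetilde{\bx}_i[n]-\bx_i[n]\|$) cannot be obtained by ``bootstrapping \ldots with the descent estimate below,'' because the constants in that descent estimate (the uniform step bound $M$ and the $c_2\alpha[n]^2$ term) already presuppose this boundedness; as phrased the argument is circular. The non-circular fix is the one the paper uses: bound $\|\by[n]-\mathbf{1}_I\otimes\overline{\mathbf{r}}[n]\|$ unconditionally through the geometric decay of the transition matrices driven by the perturbations $\nabla f_i[n+1]-\nabla f_i[n]$, which are bounded under A3--A4 independently of any step-size or descent consideration, and then write $\|\by_i[n]\|\le\|\by[n]-\mathbf{1}_I\otimes\overline{\mathbf{r}}[n]\|+\|\overline{\mathbf{r}}[n]\|$ with the last term bounded via A4. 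With that repair, the rest of your outline goes through.
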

\begin{proof}
See Appendix.
\end{proof}

\vspace{-0.2cm}
\indent Theorem \ref{simplified_convergence_th} states two results. First, the average estimate $\{\overline{\mathbf{x}}[n]\}_n$ converges to the set $\mathcal S$ of stationary solutions of (\ref{Problem}). Second, a consensus is asymptotically
achieved among the local estimates $\mathbf{x}_i[n]$. Therefore, the sequence $\{\mathbf{x}[n]\}_n$ converges to the set $\{\mathbf{1}\otimes{\overline{\mathbf{x}}}\,:\,{\overline{\mathbf{x}}}\in \mathcal S\}$.  In particular,  if $F$ is convex, Algorithm 1 converges (in the aforementioned sense) to the set of global optimal solutions of the resulting convex  problem   (\ref{Problem}). However, as already remarked, our result is more general and does not rely on the convexity of $F$. \vspace{-0.35cm}

\subsection{Design of the free parameters} \label{sec:NEXT_discussion}
NEXT represents the first family of  \emph{distributed  SCA} methods for Problem (\ref{Problem}). It is very general and  encompasses a
gamut of novel algorithms, each corresponding to various
forms of the approximant  $\widetilde{f}_i$, the weight matrices $\bW[n]$,
and the step-size sequence $\alpha[n]$, but \emph{all converging under the same conditions}. We outline next some effective choices for the aforementioned parameters.

\noindent{\textbf{On the choice of the surrogates $\widetilde{f}_i$:}} Adapting to our setting the approximation functions introduced in \cite{Scutari-Facchinei-Song-Palomar-Pang,Scutari-Facchinei-Sagratella}, the following examples are instances of $\widetilde{f}_i$ satisfying F1-F3.

\noindent $\bullet$ When $f_i$ has no special structure to exploit, the most obvious choice for $\widetilde{f}_i$  is the linearization of $f_i$ at $\bx_i[n]$:
\begin{align}\label{gradient_surrogate}
\widetilde{f}_i(\bx_i; \bx_i[n]) =\;\; &f_i(\bx_i[n]) + \nabla f_i(\bx_i[n])^T(\bx_i - \bx_i[n]) \nonumber\\
&+\dfrac{\tau_i}{2} \|\bx_i - \bx_i[n]\|^2,
\end{align}
where $\tau_i$ is any positive constant. The proximal regularization guarantees that $\widetilde{f}_i$  is strongly convex. The above surrogate  is essentially a reminiscence of the approximation of the objective function used in proximal-gradient algorithms. Note however that standard proximal-gradient algorithms are not directly applicable to Problem (\ref{Problem}), as they are not distributed.
\\\noindent $\bullet$ At another extreme, if $f_i$ is convex,  one could just take
\begin{equation}\label{cvx_preserved_surrogate}  \widetilde{f}_i(\bx_i; \bx_i[n]) = f_i(\bx_i) + \dfrac{\tau_i}{2} \|\bx_i - \bx_i[n]\|^2,
\end{equation}
with $\tau_i\geq 0$ ($\tau_i$ can be set to zero if $f_i$ is  strongly convex).
Differently from \eqref{gradient_surrogate}, this choice preserves the structure of $f_i$.
\\\noindent $\bullet$ Between the two ``extreme'' solutions proposed above, one can consider ``intermediate'' choices. For example,
 if   $f_i$ is convex, mimicking Newton schemes, one can take  $\widetilde{f}_i$ as a second order approximation of $f_i$, i.e.,
\begin{equation}\label{newton_surrogate}
\begin{array}{ll}
\widetilde{f}_i(\bx_i; \bx_i[n]) = f_i(\bx_i[n]) + \nabla f_i(\bx_i[n])^T(\bx_i - \bx_i[n]) \smallskip\\ \hspace{1cm}+
\displaystyle \frac{1}{2} (\bx_i - \bx_i[n])^T \nabla^2 f_i(\x_i[n]) (\bx_i - \bx_i[n]).
\end{array}
\end{equation}
\noindent $\bullet$ Another ``intermediate'' choice relying on a specific structure of each $f_i$ that has
important applications is the following. Suppose that $f_i$ is convex only in some components of $\bx_i$; let us split $\bx_i\triangleq (\bx_{i,1}, \bx_{i,2})$ so that $f_i(\bx_{i,1}, \bx_{i,2})$ is convex in $\bx_{i,1}$ for every $\bx_{i,2}$ such that $(\bx_{i,1}, \bx_{i,2})\in \mathcal K$, but not in $\bx_{i,2}$.  A natural choice for $\widetilde{f}_i$ is then: given $\bx[n]\triangleq (\bx_{i,1}[n], \bx_{i,2}[n])$,
\begin{equation}\label{partial_cvx_surrogate}
\begin{array}{ll}
 \hspace{-0.5cm}\widetilde{f}_i(\bx_i; \bx_i[n]) =  \widetilde{f}_i^{(1)}(\bx_{i,1}; \bx_{i,2}[n])+\dfrac{\tau_i}{2} \|\bx_{i,2} - \bx_{i,2}[n]\|^2\hspace{-0.4cm} \smallskip\\ \hspace{1cm}+\nabla_{x_{i,2}} f_i(\bx_i[n])^T (\bx_{i,2}-\bx_{i,2}[n])\hspace{-0.05cm}
 \end{array}
\end{equation}
where $\widetilde{f}_i^{(1)}(\bullet; \bx_{i,2}[n])$ is any  function still satisfying F1-F3 (written now in terms of  $\bx_{i,1}$ for given $\bx_{i,2}$). Any of the choices in (\ref{gradient_surrogate})-(\ref{newton_surrogate}) are valid for $\widetilde{f}_i^{(1)}(\bullet; \bx_{i,2}[n])$.
The rationale behind  (\ref{partial_cvx_surrogate}) is to preserve the favorable convex part of $f_i$ with respect to $\x_{i,1}$ while linearizing the nonconvex part.
\\\noindent $\bullet$ Consider the case where ${f}_i$ is block-wise convex but not convex on $\bx_i$. Let us assume that ${f}_i$ is convex in the two block-variables $\bx_{i,1}$ and  $\bx_{i,2}$, forming a partition $\bx_i=(\bx_{i,1}, \bx_{i,2})$, but not jointly (the case of more than two blocks can be similarly considered). Then, a natural choice for $\widetilde{f}_i$ is \begin{align}\label{block-wise_cvx_surrogate}
 \widetilde{f}_i(\bx_i; \bx_i[n]) =\;  &f_i(\bx_{i,1}, \bx_{i,2}[n])+ f_i(\bx_{i,1}[n], \bx_{i,2}) \nonumber\\
     &+\dfrac{\tau_i}{2} \|\bx_{i} - \bx_{i}[n]\|^2.
\end{align}
Note that, in the same spirit of the previous example,  instead of  $f_i(\bullet, \bx_{i,2}[n])$ and $f_i(\bx_{i,1}[n], \bullet)$ one can use any surrogate satisfying F1-F3 in the intended variables.
\\\noindent $\bullet$ As last example,  consider the case where $f_i$ is the composition of two functions, i.e., $f(\bx_i)=g(h(\bx_i))$, where   $g:\mathbb R \rightarrow \mathbb R$ is convex. Then,  a possible choice for $\tilde{f}_i$ is to preserve the convexity of $g$, while linearizing   $h$, resulting in the following surrogate \vspace{-0.1cm}
\begin{align}\label{composition_surrogate}
 &\widetilde{f}_i(\bx_i; \bx_i[n]) =\;  g\big(h(\bx_i[n])+\nabla h(\bx_i[n])^T(\bx_i-\bx_i[n])\big)\nonumber\\
 &\hspace{2cm}+\dfrac{\tau_i}{2} \|\bx_{i} - \bx_{i}[n]\|^2. \vspace{-0.1cm}
\end{align}
The above idea can be readily extended to the case where the inner function is a vector valued function.\\
\noindent \textbf{Distributed and parallel computing:} When each node is equipped with a multi-core  architecture or a cluster computer  (e.g., each node is a cloud), the proposed framework permits,  throughout a proper   choice of the surrogate functions, to distribute the computation   of the solution   of each subproblem (\ref{best_resp_x_hat_3}) across the cores.  To elaborate,   suppose that there are  $C$ cores available at each node $i$, and partition  $\bx_i=(\bx_{i,c})_{c=1}^C$ in $C$ (nonoverlapping) blocks, each of them subject to individual constraints only, i.e., $\bx_{i}\in \mathcal K\Leftrightarrow \bx_{i,c}\in \mathcal K_c$, for all $c=1,\ldots ,C$, with each $\mathcal K_c$ being closed and convex. Assume, also, that $G$ is block separable, i.e., $G(\bx)=\sum_{c=1}^C G_{i,c}(\bx_{i,c})$; an example of such a $G$ is the $\ell_1$-norm or the $\ell_2$-block norm. Then, choose    $\widetilde{f}_i$ as   additively separable     in the blocks $(\bx_{i,c})_{c=1}^C$, i.e.,   $\widetilde{f}_i(\bx_i; \bx_i[n])=\sum_{c=1}^C \widetilde{f}_{i,c}(\bx_{i,c};\bx_{i,-c}[n])$, where   each $\widetilde{f}_{i,c}(\bullet; \bx_{i,-c}[n])$ is any  surrogate  function satisfying F1-F3 in the variable $\bx_{i,c}$, and $\bx_{i,-c}[n]\triangleq (\bx_{i,p}[n])_{1=p\neq c}^C$ denotes the tuple of all blocks excepts the $c$-th one. For instance, if $f_i$ is strongly convex in each block $\bx_{i,c}$, one can choose $\widetilde{f}_{i,c}(\bx_{i,c}; \bx_{i,-c}[n])={f}_{i,c}(\bx_{i,c}; \bx_{i,-c}[n])$ [cf. (\ref{block-wise_cvx_surrogate})]. With the above choices,   the resulting problem (\ref{best_resp_x_hat_3}) becomes decomposable in  $C$ separate  strongly convex subproblems
\begin{equation}\label{parallel_computation}
 \min_{\bx_{i,c}\in\mathcal{K}_c}  \widetilde{f}_{i,c}(\bx_{i,c};\bx_{i,-c}[n])+\widetilde{\boldsymbol{\pi}}_{i,c}[n]^T(\bx_{i,c}-\bx_{i,c}[n])+G_{i,c}(\bx_{i,c}),
\end{equation}
for $c=1,\ldots, C$, where $\widetilde{\boldsymbol{\pi}}_{i,c}[n]$ denotes the $c$-th bock of $\widetilde{\boldsymbol{\pi}}_{i}[n]$. Each subproblem \eqref{parallel_computation} can be  now solved independently  by a  different core. It is interesting to observe that   the aforementioned instance of NEXT represents a  \emph{distributed} (across the nodes) \emph{and  parallel} (inside each node) solution method  for Problem (\ref{Problem}) (under the setting described above). To the best of our knowledge, this is the first nongradient-like scheme enjoying such a desirable feature.\\
\noindent{\textbf{On the choice of $\alpha[n]$ and $\bW[n]$:}} Conditions (\ref{step-size}) in Theorem \ref{simplified_convergence_th} on the step-size sequence $\{ \alpha[n]\}_n$ ensure that the step-size decays to zero, but not too fast. There are many diminishing step-size rules in the literature satisfying  (\ref{step-size}); see, e.g., \cite{Bertsekas2}. For instance, we found the following two rules very effective in our experiments \cite{Scutari-Facchinei-Song-Palomar-Pang}:
\begin{align}
&\texttt{Rule 1:} \;\;\; \alpha[n]=\frac{\alpha_0}{(n+1)^{\beta}}, \;\; \alpha_0>0, \;\; 0.5<\beta\leq1, \label{step1}\\
&\texttt{Rule 2:} \;\;\; \alpha[n]=\alpha[n-1](1-\mu\alpha[n-1]), \quad n\geq1, \label{step2}
\end{align}
with $\alpha[0]\in(0,1]$ and $\mu\in(0,1)$.
Notice that while these rules
may still require some tuning for optimal behavior, they are quite reliable, since in general we are not
using a (sub)gradient direction, so that many of the well-known practical drawbacks associated
with a (sub)gradient method with diminishing step-size are mitigated in our setting.  Furthermore, this choice of step-size does not require any form of centralized coordination, which is  a key feature in our distributed  environment.\\
\indent The weight matrices $\bW[n]$ need to satisfy the \textit{doubly stochasticity} condition (\ref{Doubly_Stochastic}). Several choices have been proposed in the literature, such as the uniform weights \cite{Blondel}; the Laplacian weights \cite{Scherber}; the maximum degree weight, the Metropolis-Hastings, and the least-mean square consensus weight rules \cite{Xiao}; and the relative degree(-variance) rule \cite{Cattivelli-Sayed}, to name a few. On a practical side, the above rules call for specific protocols and signaling among  nodes to be implemented.  In fact, while right-stochasticity (i.e., $\bW[n]\mathbf{1}=\mathbf{1}$)  can be easily  enforced even in the case of time-varying topologies [at every iteration, each agent can discriminate the weights $(w_{ij}[n])_{j\in \mathcal N_i^{\text{in}}\setminus {i}}$ based on the  packets sent by its  neighbors and successfully received],   left-stochasticity (i.e., $\mathbf{1}^T\bW[n]=\mathbf{1}^T$) is more difficult to enforce and requires some coordination among neighboring agents in the choice of the weights  forming the columns of  $\bW[n]$. The design and analysis of such broadcast communication protocols go beyond the scope of this paper; we refer to  \cite{Benezit} for a practical implementation of  broadcast/gossip strategies and  consensus protocols.

\noindent {\textbf{NEXT vs. gradient-consensus algorithms.} The following question arises naturally from the above discussion: How does NEXT compare with classical gradient-consensus algorithms (e.g., \cite{Nedic-Ozdaglar-Parillo,Bianchi-Jakubowicz}), when  each $\tilde{f}_i$ is chosen as in \eqref{gradient_surrogate} (i.e., a full linearization of the agents' objectives is used as surrogate function)? We term such an instance of NEXT, \emph{NEXT linearization} (NEXT-L). We address the question considering, for simplicity,  the instance of Problem (\ref{Problem}) wherein $G=0$ and under time-invariant topology. The main iterate of classical consensus scheme reads \cite{Bianchi-Jakubowicz}: given $(\bx_i[n])_{i=1}^I$,  \vspace{-0.1cm}
\begin{align}
& \mathbf{z}_i[n]= \boldsymbol{\Pi}_{\mathcal{K}} \left(\bx_i[n]-\alpha[n] \nabla f_i(\bx_i[n]) \right),\label{grad-update}\medskip\\
& \bx_i[n+1]= \sum_{j=1}^I w_{ij}\, \bz_j[n],\label{grad-update2}\
\end{align}
for all $i=1,\ldots, I$, where $\boldsymbol{\Pi}_{\mathcal{K}}(\cdot)$ denotes the projection onto  the (convex and closed) set $\mathcal{K}$. Using \eqref{gradient_surrogate}, it is not difficult to see that Step 2 of NEXT can be equivalently rewritten as
\begin{align}
&\widetilde{\mathbf{x}}_i[n]= \boldsymbol{\Pi}_{\mathcal{K}} \left(\bx_i[n]-\dfrac{1}{\tau_i}\, \left(\nabla f_i(\bx_i[n]) + \widetilde{\boldsymbol{\pi}}_i[n]\right) \right),\label{NEXT-update_comparison1}\medskip\\
& \bz_i[n]= \bx_i[n]+\alpha[n](\widetilde{\bx}_i[n]-\bx_i[n]), \label{NEXT-update_comparison2}
\end{align}
with  $\Vert \widetilde{\boldsymbol{\pi}}_{i}[n]-\sum_{j\neq i} \nabla f_j(\bx_i[n])\Vert \underset{n\rightarrow\infty}{\longrightarrow}0$ (as a consequence of the proof of Theorem \ref{simplified_convergence_th}). Comparing (\ref{grad-update}) with (\ref{NEXT-update_comparison1})-(\ref{NEXT-update_comparison2}), one can infer that, besides minor differences (e.g., the step-size rules),  the gradient-consensus scheme and NEXT-L  update the local variables  $\bz_i$ using different directions, $\mathbf{z}_i[n]-\bx_i[n]$ and  $\widetilde \bx_i[n]-\bx_i[n]$, respectively.
The former is based only on the gradient of the local function $f_i$, whereas the latter retains some information, albeit inexact,  on the gradient of the whole sum-utility $F$ (through $\widetilde{\boldsymbol{\pi}}_{i}$); this information becomes more and more accurate as the iterations go. This better exploitation of the sum-utility comes however at the cost of an extra consensus step: at each iteration, NEXT-L requires  twice the communication of the gradient-consensus scheme [compare Step 3 of Algorithm 1 with  (\ref{grad-update2})]. Our experiments show that overall the extra information on the gradient of the sum-utility used in NEXT-L can significantly enhance the practical convergence of the algorithm (see, e.g.,  Fig. 1, Sec. IV.A):  NEXT-L requires significantly less signaling than gradient schemes to reach the same   solution accuracy.

\vspace{-0.2cm}
\section{Inexact NEXT}\label{sec:extensions}

In many situations (e.g., in the case
of large-scale problems), it can be useful to further reduce
the computational effort  to solve the subproblems in (\ref{best_resp_x_hat_3}) by allowing inexact computations $\bx_i^{\texttt{inx}}[n]$ of  $\widetilde{\bx}_i[n]$ in Step 2(a) of Algorithm 1, i.e.,
\begin{equation}\label{precision}
\|\bx_i^{\texttt{inx}}[n]- \widetilde{\bx}_i[n]\|\leq \varepsilon_i[n],
\end{equation}
where $\varepsilon_i[n]$   measures the accuracy in computing the solution. This is a noticeable feature of the proposed algorithm that allows to control the cost per iteration without affecting too much, experience shows, the empirical convergence speed.\\ \indent
The generalization of Algorithm 1 including inexact updates  is described in Algorithm  \ref{alg:general}, and is termed  Inexact NEXT; its convergence is stated  in Theorem \ref{main_th}.

\vspace{-.2cm}
\begin{theorem}\label{main_th}
Let $\{\mathbf{x}[n]\}_n\triangleq \{(\mathbf{x}_i[n])_{i=1}^I\}_n$ be the sequence generated by Algorithm 2, and let $\{\overline{\mathbf{x}}[n]\}_n\triangleq \{(1/I)\,\sum_{i=1}^I\mathbf{x}_i[n]\}_n$ be its average, under the setting of Theorem \ref{simplified_convergence_th}. Choose   the step-size sequence $\{\alpha[n]\}_n$ so that, in addition to conditions in Theorem \ref{simplified_convergence_th}, the following holds
\begin{equation}\label{errors}
\hbox{$\sum_{n=0}^{\infty}\alpha[n]\,\varepsilon_i[n]<\infty,$\hspace{.4cm} $\forall i$.}
\end{equation}
Then, statements (a) and (b) of  Theorem  \ref{simplified_convergence_th}  hold.\vspace{-.2cm}
\end{theorem}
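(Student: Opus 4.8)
The plan is to regard Algorithm~2 as a perturbation of Algorithm~1 and to show that the inexactness enters every estimate used in the proof of Theorem~\ref{simplified_convergence_th} only through additive terms that are \emph{summable} (against $\alpha[n]$), hence harmless. Set $\mathbf{e}_i[n]\triangleq \bx_i^{\texttt{inx}}[n]-\widetilde{\bx}_i[n]$, so that \eqref{precision} gives $\|\mathbf{e}_i[n]\|\le \varepsilon_i[n]$, and Step 2(b) of Algorithm~2 reads $\bz_i[n]=\bx_i[n]+\alpha[n](\widetilde{\bx}_i[n]-\bx_i[n])+\alpha[n]\,\mathbf{e}_i[n]$, i.e.\ it coincides with the exact update plus the extra term $\alpha[n]\,\mathbf{e}_i[n]$. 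The first preliminary step is to recover the uniform bound $\sup_{i,n}\|\widetilde{\bx}_i[n]-\bx_i[n]\|\le \kappa<\infty$: from the optimality condition of the strongly convex subproblem \eqref{best_resp_x_hat_3} together with F1, the bounds A3--A5 on $\nabla f_i$ and $\partial G$ on $\mathcal K$, and the uniform boundedness of $\widetilde{\boldsymbol{\pi}}_i[n]=I\by_i[n]-\nabla f_i[n]$ (which follows from B2, keeping $\tfrac1I\sum_i\by_i[n]$ tied to an average of gradients, and from A4). Hence $\|\bz_i[n]-\bx_i[n]\|\le \alpha[n](\kappa+\varepsilon_i[n])$.

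Next I would revisit the \emph{consensus} and \emph{gradient-tracking} parts. In the proof of Theorem~\ref{simplified_convergence_th} the disagreement $\|\bx_i[n]-\overline{\bx}[n]\|$ and the tracking error $\|\by_i[n]-\overline{\nabla f}(\bx_i[n])\|$ are governed by a linear, doubly-stochastic, $B$-strongly-connected averaging recursion (Assumption~B) driven by the inputs $\alpha[n](\widetilde{\bx}_i[n]-\bx_i[n])$ and $\nabla f_i[n+1]-\nabla f_i[n]$. The inexactness modifies the first input by at most $\alpha[n]\varepsilon_i[n]$ in norm and, through $\bz_i[n]$, perturbs the second input by at most $L^{\max}\alpha[n](\kappa+\varepsilon_i[n])$ (using A3). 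Since $\sum_n\alpha[n]\varepsilon_i[n]<\infty$ by \eqref{errors} (so a fortiori $\alpha[n]\varepsilon_i[n]\to 0$), the geometric-mixing estimate used in the exact proof applies unchanged and yields $\|\bx_i[n]-\overline{\bx}[n]\|\to 0$ (statement (b)) and $\|\widetilde{\boldsymbol{\pi}}_i[n]-\boldsymbol{\pi}_i(\bx_i[n])\|\to 0$.

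The heart of the argument is the \emph{descent} step. The exact proof establishes a quasi-descent inequality of the form
\begin{equation}\label{eq:plan-descent}
U(\overline{\bx}[n+1])\le U(\overline{\bx}[n])-c_1\,\alpha[n]\,\|\widetilde{\bx}[n]-\bx[n]\|^2+c_2\,\alpha[n]^2+\alpha[n]\,\delta[n],
\end{equation}
with $\|\widetilde{\bx}[n]-\bx[n]\|^2\triangleq\sum_i\|\widetilde{\bx}_i[n]-\bx_i[n]\|^2$, $c_1,c_2>0$, and $\sum_n\alpha[n]\delta[n]<\infty$ ($\delta[n]$ collecting consensus errors). Averaging Step 2(b) gives $\overline{\bx}[n+1]=\overline{\bx}[n]+\alpha[n](\overline{\widetilde{\bx}}[n]-\overline{\bx}[n])+\alpha[n]\,\overline{\mathbf{e}}[n]$ with $\|\overline{\mathbf{e}}[n]\|\le\tfrac1I\sum_i\varepsilon_i[n]$; plugging this into the descent lemma for the smooth function $F$ and using A3--A5 and the uniform bound $\kappa$, all the extra terms produced by $\overline{\mathbf{e}}[n]$ are of the form (bounded constant)$\times\alpha[n]\varepsilon_i[n]$ or (bounded)$\times\alpha[n]^2\varepsilon_i[n]^2$, hence summable. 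Thus \eqref{eq:plan-descent} holds with $\delta[n]$ replaced by $\delta[n]+\delta'[n]$, $\sum_n\alpha[n]\delta'[n]<\infty$. From here the original machinery applies verbatim: dropping the negative term and summing shows $\{U(\overline{\bx}[n])\}$ is bounded above, so coercivity (A6) gives boundedness of $\{\overline{\bx}[n]\}$ and $\{U(\overline{\bx}[n])\}$ converges; summing \eqref{eq:plan-descent} then yields $\sum_n\alpha[n]\|\widetilde{\bx}[n]-\bx[n]\|^2<\infty$, hence $\liminf_n\|\widetilde{\bx}[n]-\bx[n]\|=0$. Finally, Lipschitz continuity of the best-response map $\widehat{\bx}_i(\cdot)$ (from F1 and F3), together with $\|\widetilde{\boldsymbol{\pi}}_i[n]-\boldsymbol{\pi}_i(\bx_i[n])\|\to 0$ and $\|\bx_i[n]-\overline{\bx}[n]\|\to 0$, upgrades this to: every limit point $\bx^\star$ of $\{\overline{\bx}[n]\}$ is a fixed point of the $\widehat{\bx}_i(\cdot)$'s, hence a stationary solution of \eqref{Problem} by Proposition~\ref{Prop:fixed-point-stationary}. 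This gives statement (a).

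The main obstacle I anticipate is not any single estimate but the bookkeeping needed to certify that \emph{every} occurrence of the inexact iterate in place of the exact one contributes an $O(\alpha[n]\varepsilon_i[n])$ (or smaller) term rather than a mere $o(1)$ term — especially inside the gradient-tracking recursion, where $\nabla f_i(\bx_i[n+1])-\nabla f_i(\bx_i[n])$ is itself perturbed through $\bz_i[n]$ and feeds back into $\widetilde{\boldsymbol{\pi}}_i[n+1]$, so one must ensure the coupling between tracking error and inexactness does not generate a non-summable term. This is exactly where the uniform bound $\kappa$ on $\|\widetilde{\bx}_i[n]-\bx_i[n]\|$ (a consequence of A4--A5) and the Lipschitz continuity of $\nabla f_i$ (A3) are essential: they bound the induced gradient perturbations by $L^{\max}\alpha[n](\kappa+\varepsilon_i[n])$, keeping everything summable against $\alpha[n]$ and letting the proof of Theorem~\ref{simplified_convergence_th} go through with only cosmetic changes.
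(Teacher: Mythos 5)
Your overall strategy is the right one and matches the paper's in spirit: bound $\|\bx_i^{\texttt{inx}}[n]-\bx_i[n]\|$ uniformly, push the inexactness through the consensus/tracking recursions and the descent inequality, and invoke a Robbins--Siegmund argument (Lemma \ref{lemma_Robbinson_Siegmunt}), with \eqref{errors} guaranteeing that every $\varepsilon_i[n]$-induced term is summable against $\alpha[n]$. However, two points in your plan are genuine gaps rather than bookkeeping. First, the descent inequality you posit is assumed, not derived, and it is not the one the paper establishes: there is no standalone proof of Theorem \ref{simplified_convergence_th} to perturb (the paper proves Theorem \ref{main_th} directly and obtains Theorem \ref{simplified_convergence_th} as the special case $\varepsilon_i[n]\equiv 0$), and the paper's negative term is $-\tfrac{c_\tau}{I}\alpha[n]\sum_i\|\widehat{\bx}_i(\bar{\bx}[n])-\bar{\bx}[n]\|^2$, coming from the descent property \eqref{p1c} of the \emph{exact} best-response at the network average, not $-c_1\alpha[n]\sum_i\|\widetilde{\bx}_i[n]-\bx_i[n]\|^2$. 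To make either form usable one must show that the error terms multiplying $\alpha[n]$ are \emph{summable}, not merely vanishing: the paper introduces the auxiliary sequences $\widetilde{\bx}^{\rm av}_i[n]$, $\by^{\rm av}_i[n]$, $\widetilde{\boldsymbol{\pi}}^{\rm av}_i[n]$ precisely so that Proposition \ref{Prop_cons}(b)--(d) can deliver $\sum_n\alpha[n]\|\bx_i[n]-\bar{\bx}[n]\|<\infty$, $\sum_n\alpha[n]\|\widetilde{\bx}^{\rm av}_i[n]-\widehat{\bx}_i(\bar{\bx}[n])\|<\infty$ and $\sum_n\alpha[n]\|\widetilde{\bx}_i[n]-\widetilde{\bx}^{\rm av}_i[n]\|<\infty$ via the convolution estimates of Lemmas \ref{Lemma_left_product} and \ref{Lemma_sequences}. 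Your sketch only asserts $\|\widetilde{\boldsymbol{\pi}}_i[n]-\boldsymbol{\pi}_i(\bx_i[n])\|\to 0$, which is too weak for the Robbins--Siegmund step; proving the summability is the real content of the theorem, not a cosmetic modification. (A minor related imprecision: boundedness of $\widetilde{\boldsymbol{\pi}}_i[n]$ does not follow from B2 and A4 alone; one needs the geometric decay of the products $\mathbf{P}[n,l]$, i.e.\ B1 through Lemma \ref{Lemma_left_product}, to bound $\|\by[n]-\mathbf{1}_I\otimes\overline{\mathbf{r}}[n]\|$.)

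Second, your final step is invalid as written: from $\sum_n\alpha[n]\|\widetilde{\bx}[n]-\bx[n]\|^2<\infty$ you only get $\liminf_n\|\widetilde{\bx}[n]-\bx[n]\|=0$, and $\liminf=0$ together with continuity of $\widehat{\bx}_i(\cdot)$ shows that \emph{some} limit point of $\{\bar{\bx}[n]\}_n$ is a fixed point, not that \emph{every} limit point is stationary, which is what statement (a) of Theorem \ref{simplified_convergence_th} claims. One must upgrade to $\lim_n\|\widehat{\bx}_i(\bar{\bx}[n])-\bar{\bx}[n]\|=0$; the paper does this through a separate $\limsup=0$ argument (exploiting the Lipschitz continuity of $\widehat{\bx}_i(\cdot)$ from Proposition \ref{Prop_best_response}(a), the $O(\alpha[n])$ movement of $\bar{\bx}[n]$, and the established summability, following the SCA references), and only then identifies all limit points as fixed points and hence stationary solutions. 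Without that step your argument proves a strictly weaker conclusion.
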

\begin{proof}
See Appendix.
\end{proof}

\vspace{-.1cm}

\begin{algorithm}[t]
$\textbf{Data}:$ Same as in Algorithm 1, and $\{\varepsilon_i[n]\}_n$.\smallskip

Same steps as in Algorithm 1, with Step 2 replaced by

\texttt{$\mbox{(S.2)}$}\texttt{Local inexact SCA}:  Each agent $i$

\hspace{0.8cm} (a) solves (\ref{best_resp_x_hat_3}) with accuracy $\varepsilon_i[n]$: Find a $\bx_i^{\texttt{inx}}[n]\in \mathcal{K}$

\hspace{1.4cm}  s.t. \vspace{-0.4cm}
 $$\|\widetilde{\bx}_{i}[n]-\bx_i^{\texttt{inx}}[n]\|\leq \varepsilon_i[n];$$

\hspace{0.8cm} (b) updates its local variable $\bz_i[n]$:
\begin{equation}\label{z_update}
\bz_i[n]=\bx_i[n]+\alpha[n]\left(\bx_i^{\texttt{inx}}[n]-\bx_i[n]\right) \nonumber
\end{equation}

\protect\caption{\hspace{-2.5pt}\textbf{:} \label{alg:general}\textbf{Inexact NEXT}}
\end{algorithm}

\vspace{-0.1cm}
As expected, in the presence of errors, convergence of Algorithm \ref{alg:general} is guaranteed if the sequence of approximated
problems in S.2(a) is solved with increasing accuracy. Note that, in addition to require $\varepsilon_i[n]\rightarrow 0$, condition $\sum_{n=0}^{\infty}\alpha[n]\,\varepsilon_i[n]<\infty$ of Theorem \ref{main_th} imposes also a constraint on the rate by which $\varepsilon_i[n]$ goes to zero, which depends on the rate of decrease of  $\alpha[n]$. An example of error sequence satisfying the above condition is $\varepsilon_i[n]\leq c_i \alpha[n]$,  where $c_i$ is any finite positive constant \cite{Scutari-Facchinei-Sagratella}. Interesting,
such a condition can be forced in Algorithm \ref{alg:general} in a distributed way, using classical error bound results in convex analysis; see, e.g., \cite[Ch. 6, Prop. 6.3.7]{Facchinei}.

\vspace{-0.3cm}

\section{Applications and Numerical Results}

In this section, we customize the proposed algorithmic framework to specific applications in several areas, namely: Signal Processing, Communications, and Networking. Applications include i) cooperative target localization; ii) distributed spectrum cartography in cognitive radio (CR) networks; iii) flow control in communication networks; and iv) sparse distributed estimation in wireless sensor networks.    Numerical results show that NEXT compares favorably with respect to current ad-hoc schemes.  

\vspace{-0.45cm}
\subsection{Distributed target localization}

Consider a  multi-target localization problem: a sensor network  aims to locate $N_T$ common targets, based on some distance measurements.  Let $\bx_t\in \mathbb{R}^p$,  with  $p=2,3$,  denote the  vector (to be determined) of the (2D or 3D) coordinates of each target $t=1,\ldots, N_T$ in the network (with respect to  a global reference system).  Each node knows its position $\boldsymbol{\omega}_i$ and has access to noisy measurements $\varphi_{it}$ of the squared distance to each target. Thus, the least squares estimator for the target positions $\bx\triangleq (\bx_t)_{t=1}^{N_T}$ is the solution of the following \emph{nonconvex} optimization problem \cite{Chen-Sayed}: \vspace{-0.2cm}
\begin{equation}\label{Localization_Problem}
\min_{\bx\in \mathcal{K}}\, F(\bx)\triangleq \sum_{i=1}^{I}\sum_{t=1}^{N_T}\big(\varphi_{it}-\|\bx_t-\boldsymbol{\omega}_i\|^2\big)^2,\vspace{-0.1cm}
\end{equation}
where $\mathcal{K}$ is a compact and convex set modeling geographical bounds on the position of the targets. Problem (\ref{Localization_Problem}) is clearly an instance of Problem  (\ref{Problem}), with $G(\bx)=0$ and\vspace{-0.2cm}
\begin{equation}\label{Loc_fi}
f_i(\bx)=\sum_{t=1}^{N_T}(\varphi_{it}-\|\bx_t-\boldsymbol{\omega}_i\|^2)^2.\vspace{-0.1cm}
\end{equation}
{Several choices for the surrogate function $\widetilde{f}_i(\bx;\bx[n])$ are possible (cf. Sec. \ref{sec:NEXT_discussion}); two instances are given next. Since (\ref{Loc_fi}) is a fourth-order polynomial in each $\bx_t$, 
we might preserve the ``partial'' convexity in $f_i$  by  keeping the first and second order (convex) terms in each summand of (\ref{Loc_fi}) unaltered and linearizing the higher order terms. This leads to
\begin{equation}\label{Loc_fi_tilde}\vspace{-0.1cm}
\widetilde{f}_i(\bx;\bx[n])=\sum_{t=1}^{N_T}\widetilde{f}_{it}(\bx;\bx[n])+\frac{\tau}{2}\|\bx_t-\bx_t[n]\|^2\vspace{-0.1cm}
\end{equation}
where $\widetilde{f}_{it}(\bx;\bx[n])\triangleq \bx_t^T \bA_i\bx_t - \bb_{it}[n]^T(\bx_t-\bx_t[n])$, with
$\bA_i \triangleq\;  4\boldsymbol{\omega}_i\boldsymbol{\omega}_i^T+2\|\boldsymbol{\omega}_i\|^2\bI_p$, and
 $
\bb_{it}[n]  \triangleq\;4\|\boldsymbol{\omega}_i\|^2\boldsymbol{\omega}_i-4(\|\bx_t[n]\|^2-\varphi_{it})(\bx_t[n]-\boldsymbol{\omega}_i) +8(\boldsymbol{\omega}_i^T\bx_t[n])\bx_t[n].$\\\indent
A second option is of course to linearize the whole $f_i(\bx)$, which leads to the following surrogate function:
\begin{equation}\label{Loc_fi_tilde2}
\widetilde{f}_i(\bx;\bx[n])=\nabla_{\bx}f_i(\bx[n])^T(\bx-\bx[n])+\frac{\tau}{2}\|\bx-\bx[n]\|^2,
\end{equation}
with $\nabla_{\bx}f_i(\bx[n])=(\nabla_{\bx_t}f_i(\bx_t[n]))_{t=1}^{N_T}$, and $\nabla_{\bx_t}f_i(\bx_t[n])=-4(\varphi_{it}-\|\bx_t[n]-\boldsymbol{\omega}_i\|^2)(\bx_t[n]-\boldsymbol{\omega}_i)$.}\smallskip

\begin{figure}[t]
\centering
\includegraphics[width=6.4cm]{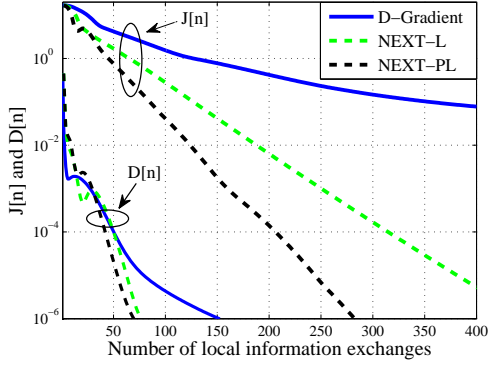}
\caption{ Distributed target localization: Distance from stationarity ($J[n]$) and consensus disagreement ($D[n]$)  versus number of local information exchanges. Both versions of NEXT are significantly faster than D-gradient.}\label{Loc_comparison_a}\vspace{-0.5cm}
\end{figure}
\noindent{\textbf{Numerical Example:}} {We simulate a time-invariant connected network composed of  $I=30$ nodes, randomly deployed over the unitary square $[0,1]\times[0,1]$}. We consider $N_T=3$ targets with positions $(0.03, 0.85), (0.86, 0.5), (0.6, 0.01)$.
Each measurement $\varphi_{it}$ [cf.  (\ref{Localization_Problem})] is corrupted by additive white Gaussian noise with zero mean, and variance chosen such that the minimum signal to noise ratio (SNR) on the measurements taken over the network is equal to -20 dB. The set $\mathcal{K}$ is chosen as $\mathcal{K}=[0,1]^2$ (the  solution is in the unitary square).
{We tested two instances of NEXT, namely: the one  based on the surrogate functions in \eqref{Loc_fi_tilde}, termed \emph{NEXT partial linearization} (NEXT-PL); and ii) the one based on \eqref{Loc_fi_tilde2}, termed \emph{NEXT  linearization} (NEXT-L). Note that, in the above setting, one can compute the  best-response $\widetilde{\bx}_{i}[n]=(\widetilde{\bx}_{it}[n])_{t=1}^{N_t}$  in S.2(a) [cf. (\ref{best_resp_x_hat_3})] in closed form, as outlined next. Considering the surrogate \eqref{Loc_fi_tilde} (similar argument applies  to $\widetilde{\bx}_{i}[n]$ based on the surrogate function \eqref{Loc_fi_tilde2}, we omit the details because of space limitations),} and defining
\begin{equation}\label{loc_unc_solution}
\widehat{\bx}_{it}[n]\triangleq(\bA_i+\tau\bI_p)^{-1}(\bb_{it}[n]-\widetilde{\boldsymbol{\pi}}_i[n]+\tau\bx_{it}[n]),
\end{equation}
for all $t=1,\ldots,N_T$, and partitioning $\widehat{\bx}_{it}[n]=\big[\widehat{x}_{it}^{(1)}[n],\widehat{x}_{it}^{(2)}[n]\big]^T$, we have
\vspace{-0.1cm}
\begin{equation}
\widetilde{\bx}_{it}[n]=\left\{
                 \begin{array}{lllll}
                   (\widehat{x}_{it}^{(1,0)}[n],0)^T, & \hspace{-2cm}\hbox{if $\widehat{x}_{it}^{(1)}[n]\in [0,1],\widehat{x}_{it}^{(2)}[n]< 0$;} \\
                   (\widehat{x}_{it}^{(1,1)}[n],1)^T, & \hspace{-2cm}\hbox{if $\widehat{x}_{it}^{(1)}[n]\in [0,1],\widehat{x}_{it}^{(2)}[n]> 1$;} \\
                   (0,\widehat{x}_{it}^{(2,0)}[n])^T, &\hspace{-2cm} \hbox{if $\widehat{x}_{it}^{(1)}[n]<0,\widehat{x}_{it}^{(2)}[n]\in [0,1]$;} \\
                   (1,\widehat{x}_{it}^{(2,1)}[n])^T, & \hspace{-2cm}\hbox{if $\widehat{x}_{it}^{(1)}[n]>1,\widehat{x}_{it}^{(2)}[n]\in [0,1]$;} \\
                \left(\left[\widehat{x}_{it}^{(1)}[n]\right]_{0}^{1},\,\left[\widehat{x}_{it}^{(2)}[n]\right]_{0}^{1}\right)^{T}, & \hbox{otherwise;} \\
                \end{array}
                        \right. \nonumber
\end{equation}
$[x]_0^1\triangleq \min(\max(x,0),1)$ and
\begin{align}
\widehat{x}_{it}^{(1,j)}[n]&=\underset{x}{\text{argmin}}\; \widetilde{f}_{it}\big((x,j);\bx_{it}[n]\big), \nonumber\\
\widehat{x}_{it}^{(2,j)}[n]&=\underset{x}{\text{argmin}}\; \widetilde{f}_{it}\big((j,x);\bx_{it}[n]\big), \nonumber
\end{align}
with $j=0,1$, are the minima of one-dimensional quadratic functions, which can  be easily computed in closed form. Note that the quantity (\ref{loc_unc_solution}) involves the inversion of matrix $\bA_i+\tau\bI_p$, which is   independent of the iteration index $n$, and thus needs to be  computed only once.
\begin{figure}[t]
\centering
\includegraphics[width=6.5cm]{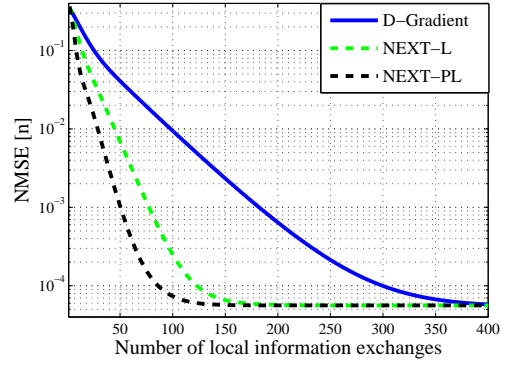}
\caption{Distributed target localization: Normalized MSE (NMSE$[n]$)  versus number of local information exchanges.}\label{Loc_comparison_b}\vspace{-0.3cm}
\end{figure}
We  compared  NEXT, in the two versions, NEXT-PL and NEXT-L, with the only algorithm available in the literature with provable convergence for Problem (\ref{Localization_Problem}), i.e., the distributed gradient-based method  in \cite{Bianchi-Jakubowicz}, which we term  \emph{Distributed-gradient} (D-Gradient). The tuning of the algorithms is the following. They are all initialized uniformly at random  within the unitary square, and use the step-size rule (\ref{step2}) with $\alpha[0]=0.1$, $\mu=0.01$, $\tau=10$, for the two versions of NEXT, and $\alpha[0]=0.05$ and $\mu=0.05$ for the D-Gradient. According to our experiments, the above choice empirically results in the best practical convergence of all methods. The weights in (\ref{weights}) are chosen time-invariant and satisfying the Metropolis rule for all algorithms \cite{Xiao}. We measured the progress of the algorithms  using  the following two merit functions:\vspace{-0.1cm}
\begin{equation}\label{Metric_J}
\begin{array}{ll}
J[n]\triangleq  \Big\|\bar{\bx}[n]-\boldsymbol{\Pi}_{\mathcal{K}}\big(\bar{\bx}[n]-\nabla_{\bx} F(\bar{\bx}[n])\big)\Big\|_{\infty}\smallskip\\
D[n]\triangleq \dfrac{1}{I}\sum_i\|\bx_i[n]-\bar{\bx}[n]\|^2,\vspace{-0.1cm}
\end{array}
\end{equation}
where $\boldsymbol{\Pi}_{\mathcal{K}}(\cdot)$ denotes the projection onto $\mathcal{K}$,  $\bar{\bx}[n]\triangleq (1/I)\sum_i \bx_i[n]$, and $F$  is the objective function of Problem  (\ref{Localization_Problem}). Note that $J[n]=0$ if and only if  $\bar{\bx}[n]$ is a stationary solution of  Problem (\ref{Localization_Problem}); therefore $J[n]$ measures the progress of the algorithms toward stationarity. The function $D[n]$ measures the disagreement among the agents' local variables; it converges to zero if an agreement is asymptotically achieved. Since the positions $\bx_0$ of the targets are known, we also report the normalized mean squared error (NMSE), evaluated at $\bar{\bx}[n]$:\vspace{-0.1cm}
\begin{equation}\label{NMSE}
{\rm NMSE}[n]=\frac{\|\bar{\bx}[n]-\bx_0\|^2}{\|\bx_0\|^2}. \vspace{-0.1cm}
\end{equation}
In Fig. \ref{Loc_comparison_a}   we plot $J[n]$ and $D[n]$  versus  the total
number of communication exchanges per node. For the D-Gradient, this number
coincides with the iteration index $n$; for NEXT, the number of communication
exchanges per node is $2\cdot n$. Fig.  \ref{Loc_comparison_b} shows ${\rm NMSE}[n]$  versus the number of exchanged messages.
All the curves are averaged over 500 independent noise realizations. In our experiments, the algorithms were observed to converge to the same consensual stationary solution. The  figures clearly show  that both versions of NEXT are much faster than the D-Gradient \cite{Bianchi-Jakubowicz} (or, equivalently, they require less information exchanges than the D-gradient), while having similar computational cost per iteration. This is mainly due to the fact that NEXT exploits the structure of the objective function (e.g., partial convexity, local information about the global objective). Moreover, as expected,  NEXT-PL reaches high precision  faster than NEXT-L; this is mainly due to the fact that the surrogate function (\ref{Loc_fi_tilde})  retains the partial convexity of functions  $f_{it}$ rather than just linearizing them.\vspace{-0.3cm}

\subsection{Distributed spectrum cartography in CR networks}

We consider now a convex instance of Problem (\ref{Problem}),   the estimation of the space/frequency power distribution in CR networks. This will permit to  compare NEXT with a variety of ad-hoc distributed schemes proposed for convex problems. We focus on the problem of building the spectrum cartography  of a given operational region, based on local measurements of the signal transmitted by the primary sources.
Utilizing a basis expansion model for the  power spectral density of the signal generated by the  sources, see e.g. \cite{Bazerque-Giannakis}, the spectrum cartography problem can be formulated as\vspace{-0.2cm}
\begin{equation}\label{CoopSensing_Problem}
\min_{\bx\in \mathcal{K}} \; \sum_{i=1}^I\|\boldsymbol{\varphi}_i-\bB_i\bx\|^2+\lambda\; \mathbf{1}^T\bx\vspace{-0.1cm}
\end{equation}
where $\bx\in \mathcal K$ collects all the powers emitted by the  primary sources over all the basis functions,
and $\mathcal{K}$ imposes constraints on the power emitted by the sources, e.g., non-negativity and upper bound limits; $\boldsymbol{\varphi}_i=\{\varphi_{ik}\}_{k=1}^{N_f}$ collects noisy samples of the received power over $N_f$ frequencies by sensor $i$; $\bB_i\in \mathbb{R}^{N_f\times N_bN_s}$ is the matrix of regressors, whose specific expression is not relevant in our discussion,  with $N_s$ and  {$N_b$} being the number of sources and the number of basis functions  \cite{Bazerque-Giannakis}, respectively; and $\lambda>0$ is a regularization parameter.

\indent Problem (\ref{CoopSensing_Problem}) is of course a convex instance of    (\ref{Problem}), with
$f_i(\bx)=\|\boldsymbol{\varphi}_i-\bB_i\bx\|^2$  and $G(\bx)=\lambda\;\mathbf{1}^T\bx$. Aiming at preserving the convexity of $f_i(\bx)$ as in (\ref{cvx_preserved_surrogate}), a natural choice for the surrogate functions is
\begin{equation}\label{Ui_coope_sens}
\widetilde{f}_i(\bx,\bx[n])=\|\boldsymbol{\varphi}_i-\bB_i\bx\|^2+\frac{\tau}{2}\|\bx-\bx[n]\|^2, \vspace{-0.1cm}
\end{equation}
with   $\tau$ being any positive constant.

\begin{figure}[t]
\centering
\includegraphics[width=7cm]{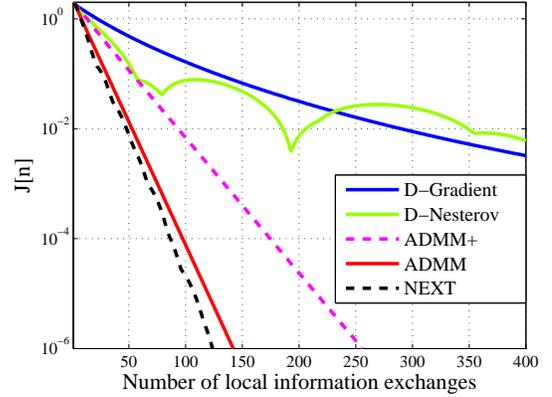}
\caption{Spectrum cartography  [cf. \eqref{CoopSensing_Problem}]: Distance from stationarity $J[n]$ versus number of local information  exchange.}\label{Coop_sens_comparison}\vspace{-0.4cm}
\end{figure}
\vspace{0.1cm}
\noindent{\textbf{Numerical Example.}} {We simulated a time-invariant connected network composed of  $I=30$ nodes randomly deployed over a 100 square meter area}. The setting is the following.  We consider $N_s=2$ active transmitters located at $[(2.5, 2.5); (7.5, 7.5)]$. The path-loss of each channel is assumed to follow the law  $g_{is}=1/(1+d_{is}^2)$, with $d_{is}$ denoting the distance between the $i$-th cognitive node and the $s$-th transmitter. The number of frequency basis functions is $N_b=10$, and each node scans $N_f=30$ frequency channels between 15 and 30 MHz. The first source transmits uniformly over the second, third, and fourth basis using a power budget equal to 1 Watt, whereas the second source transmits over the sixth, seventh, eighth, and ninth basis using a power equal to 0.5 Watt. The measures of the power spectral density at each node are affected by additive Gaussian zero-mean white noise, whose  variance is set so that the resulting  minimum SNR is equal to 3 dB. In (\ref{CoopSensing_Problem}), the regularization parameter is set to $\lambda=10^{-3}$, and the feasible set is $\mathcal{K}=[0,5]^{N_b\cdot N_s}$.\\ \indent {In our experiments, we compare five algorithms, namely: 1)   NEXT based on the surrogate function (\ref{Ui_coope_sens}); 2) the distributed gradient-based method in \cite{Bianchi-Jakubowicz} (termed as \emph{D-Gradient}); 3) the ADMM algorithm in the distributed form as proposed in  \cite{Bazerque-Giannakis}; 4) the distributed Nesterov-based method as developed in  \cite{Jakovetic-Xavier-Moura} (termed as \emph{D-Nesterov}); and 5) the distributed  ADMM+ algorithm \cite{Bianchi-Hachem-Iutzeler14}. We also tested the gradient scheme of \cite{Nedic-Ozdaglar-Parillo}, which performs as D-Gradient and thus is not reported. The free parameters in the above algorithms are set as follows.  NEXT exploits the local approximation  (\ref{Ui_coope_sens}) (with $\tau=0.8$), and uses the step-size rule  (\ref{step2}), with $\alpha[0]=0.1$ and $\mu=0.01$.  D-Gradient adopts the step-size rule in (\ref{step2}), with $\alpha[0]=0.5$ and $\mu=0.01$. Using the same notation as in  \cite{Bazerque-Giannakis}, in ADMM, the regularization parameter is set to $\alpha=0.015$. D-Nesterov  uses the step-size rule (\ref{step1}), with $\alpha_0=1.5$ and $\gamma=1$. Finally, using the same notation as  \cite{Bianchi-Hachem-Iutzeler14}, in ADMM+, we set $\tau=15.5$ and $\rho=2\tau$.} Our experiments showed that the above tuning leads to the fastest practical convergence speed for each algorithm. The weight coefficients in the consensus update of the algorithms (e.g., $w_{ij}[n]$ in (\ref{weights}) for  NEXT)  are chosen to be time-invariant and satisfying the Metropolis rule.

{In Figures \ref{Coop_sens_comparison}, \ref{D}, and \ref{Coop_sens_comparison2}, we plot $J[n]$, $D[n]$ [c.f. (\ref{Metric_J})], and NMSE$[n]$ [cf. (\ref{NMSE})]   versus the number of local information exchanges,  for the aforementioned five algorithms.  The results are averaged over 200 independent noise realizations. The analysis of the figures shows that, while all the algorithms converge to the globally optimal solution of (\ref{CoopSensing_Problem});  NEXT is significantly faster than the D-Gradient, D-Nesterov, and ADMM+, and exhibits performance similar to ADMM.}

\begin{figure}[t]
\centering
\includegraphics[width=7cm]{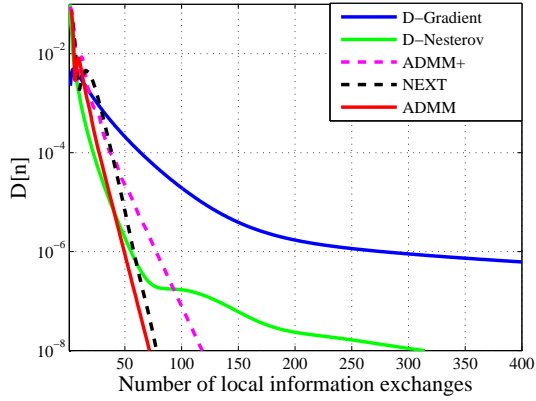}
\caption{Spectrum cartography  [cf. \eqref{CoopSensing_Problem}]: Consensus disagreement $D[n]$ versus number of local information exchanges.}\label{D}\vspace{-0.2cm}
\end{figure}
\begin{figure}[t]
\centering
\includegraphics[width=7cm]{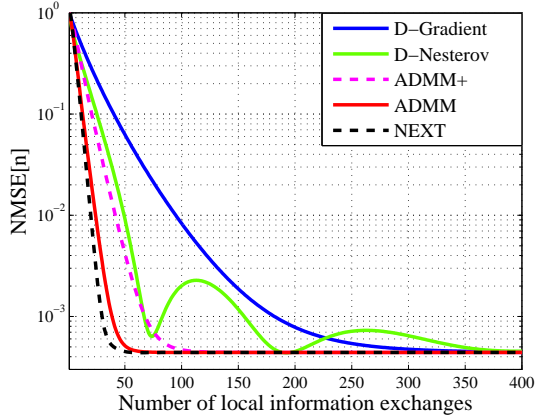}
\caption{Spectrum cartography  [cf. \eqref{CoopSensing_Problem}]: NMSE$[n]$ versus number of local information exchanges.}\label{Coop_sens_comparison2}\vspace{-0.4cm}
\end{figure}

\noindent{\textit{The effect of switching topology:}} One of the major features of the proposed approach is the capability to handle (nonconvex objectives and) time-varying network topologies, while  standard ADMM and D-Nesterov have no convergence guarantees (some asynchronous instances of ADMM have been recently proved to converge in a probability one sense \cite{Wei-Ozdaglar13}, \cite{iutzeler2013cdc}). To illustrate the robustness of NEXT to switching topologies, we run NEXT in the same setting as in Fig. \ref{Coop_sens_comparison} but simulating a $B$-strongly connected graph topology. We compare NEXT with the D-gradient  scheme  \cite{Bianchi-Jakubowicz}. In Fig. \ref{Topol_effect} we plot  $J[n]$ versus the number of local information exchanges, for different values of the  uniform graph connectivity coefficient  $B$ [cf. Assumption B1] (the case $B=1$ corresponds to a fixed strongly connected graph, and is reported as a benchmark). The results are averaged over 200 independent noise and graph realizations. The figure clearly shows that NEXT is much faster than the D-Gradient \cite{Bianchi-Jakubowicz}, for all values of the coefficient $B$. Furthermore, as expected, one can see that larger values of $B$ lead to a lower practical convergence speed, due to the slower diffusion of information over the network. As a final remark, we observe that NEXT requires the solution of a box-constrained least square problem at each iteration, due to the specific choice of the surrogate function in (\ref{Ui_coope_sens}). Even if this problem is computationally affordable, a closed form solution is not available. Thus, in this specific example, the improved practical convergence speed of NEXT is paid in terms of a higher cost per iteration than that of the D-gradient. Nevertheless, whenever  the (possibly inexact) computation of the solution of the convex subproblems in Step 2a) of Algorithm 1 (or Algorithm 2) is too costly, one can still reduce the computational burden by adopting a surrogate function as in (\ref{gradient_surrogate}), which linearize the function $f_i(\bx)=\|\boldsymbol{\varphi}_i-\bB_i\bx\|^2$ around the current point $\bx[n]$. In this case, the solution of the convex subproblems in Step 2a) of Algorithm 1 (or Algorithm 2) is given in closed form, thus making NEXT comparable to D-Gradient in terms of computational cost. \vspace{-0.4cm}
\begin{figure}[t]
\centering
  \includegraphics[width=6.9cm]{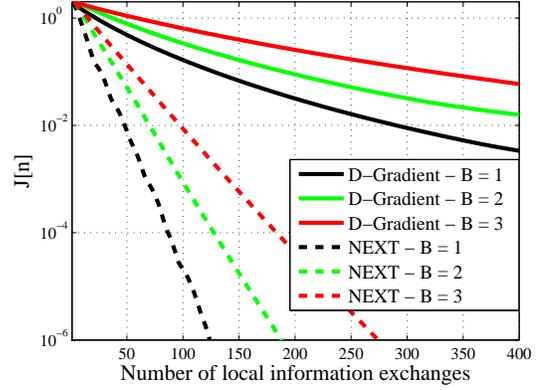}
  \caption{Spectrum cartography  [cf. \eqref{CoopSensing_Problem}]: Effect of the time-varying topology;
  $J[n]$ versus number of local information exchanges,  for different values of the uniform graph connectivity coefficient $B$.
  }\label{Topol_effect} \vspace{-0.4cm}
\end{figure}

\subsection{Flow Control in Communication Networks}

Consider  a network composed of a set  $\mathcal L\triangleq \{1,\ldots, L\}$ of unidirectional links of fixed capacities $c_l$, $l\in\mathcal  L$. The network is shared by a set $\mathcal I\triangleq \{1,\ldots, I\}$ of sources. Each source $i$  is  characterized by four parameters  $(\mathcal L_i,f_i,m_i,M_i)$. The path $\mathcal L_i\subseteq \mathcal L$ is the set of links used by the source $i$; $f_i:\mathbb{R}_+\rightarrow \mathbb{R}$ is the utility function, $m_i$ and $M_i$ are the minimum and maximum transmission rates, respectively. Source $i$ attains the utility $f_i(x_i)$ when it transmits at rate $x_i\in \mathcal{K}_i\triangleq [m_i, M_i]$. For each link $l\in \mathcal{L}$, let $\mathcal{S}_l\triangleq \{i\in \mathcal{I}\,:\,l\in
\mathcal{L}_i\}$ be the set of sources that use link $l$. The flow control problem can be then formulated as:
\begin{align}\label{flow_control}
& \underset{\bx\triangleq (x_i)_{i\in \mathcal I}}{\max}  \;\; \sum_{i\in  \mathcal I} f_i(x_i)  \\
& \,\,\,\text{\,\,\,s.\,t.}  \quad\quad \,  \sum_{i\in \mathcal{S}_l}x_i\leq c_l, \,\,  \forall l\in \mathcal L, \quad \text{and}\quad x_i\in \mathcal{K}_i,\,\,\forall i\in \mathcal I,\nonumber
\end{align}
where the first constraint imposes that the aggregate source rate at any link $l$ does not exceed the link capacity $c_l$. Depending on the kind of traffic (e.g., best effort or streaming traffic), the functions $f_i$ have different expressions. We consider
real-time applications, such as video streaming and voice over IP services,
which entail inelastic traffic. Unlike elastic data sources that are modeled by strictly concave utility functions \cite{Low-Lapsey}, inelastic sources are modeled by nonconcave utility functions \cite{Jafari}, in particular sigmoidal functions, defined as
\vspace{-0.1cm}
\begin{equation}\label{sigmoid}
f_i(x_i)=\frac{1}{1+e^{-(\alpha_i x_i+\beta_i)}},\vspace{-0.1cm}
\end{equation}
where  $\alpha_i>0$ and $\beta_i<0$ are integer.
\\ \indent Problem (\ref{flow_control}) is nonconvex;   standard (primal/)dual decomposition techniques widely used in the literature for convex instances (see, e.g., \cite{Low-Lapsey}) are no longer applicable, due  to a positive duality gap. The SCA framework in \cite{Scutari-Facchinei-Song-Palomar-Pang} for parallel nonconvex optimization can   conceivably be applied to   (\ref{flow_control}), but it requires   a fully connected network. Since (\ref{flow_control}) is a special case of Problem (\ref{Problem}), one can  readily apply our framework, as outlined next.  Observe preliminary that the sigmoid in (\ref{sigmoid}) can be rewritten as a DC  function  \vspace{-0.1cm}
\begin{equation}
f_i(x_i)=\underset{h(x_i)}{\underbrace{e^{(\alpha_i x_i+\beta_i)}}}-\underset{g(x_i)}{\underbrace{\frac{e^{2(\alpha_i x_i+\beta_i)}}{1+e^{(\alpha_i x_i+\beta_i)}}}},\nonumber\vspace{-0.2cm}
\end{equation}
where $h(x_i)$ and $g(x_i)$ are convex functions.
{Let $\bx_i[n]\triangleq (x_{i,j}[n])_{j=1}^I$ be the local copy of the common vector $\bx=(x_j)_{j=1}^{I}$, made by user $i$ at iteration $n$; $x_{i,i}[n]$ is the transmission rate of  source $i$. Then,  a natural choice for the surrogate $\widetilde{f}_i(\bx_{i};\bx_i[n])$ is retaining the concave part $-g$ while linearizing the convex part $h$, i.e., $\widetilde{f}_i(\bx_i;\bx_i[n])=-g_i(x_{i,i})+h'(x_{i,i}[n]) (x_{i,i}-x_{i,i}[n])+\frac{\tau_i}{2}\|\bx_{i}-\bx_{i}[n]\|^2$.}\vspace{-0.1cm}

\subsection{Sparse Distributed Maximum Likelihood Estimation}

Consider a distributed estimation problem in wireless sensor networks, where $\bx\in \mathcal{K}$ denotes the parameter vector to be estimated, modeled as the outcome of a random variable, described by a known prior probability density function (pdf) $p_{X}(\bx)$. Let us denote by $\boldsymbol{\varphi}_i$ the measurement vector collected by node $i$, for $i=1,\ldots,I$. Assuming statistically independent observations and a Laplacian prior on the variable $\bx$, the maximum likelihood estimate can be obtained as the solution of the following problem (see, e.g., \cite{Barb-Sard-Dilo}):\vspace{-0.2cm}
\begin{equation}\label{est_problem}
\max_{\bx\in \mathcal{K}} \;\; \sum_{i=1}^{I} \;\log p_{\Phi/X}(\boldsymbol{\varphi}_i/\bx)+\lambda\|\bx\|_1\vspace{-0.2cm}
\end{equation}
where $p_{\Phi/X}(\boldsymbol{\varphi}_i/\bx)$ is the pdf of $\boldsymbol{\varphi}_i$ conditioned to $\bx$, and $\lambda>0$ is a regularization parameter. Problem (\ref{est_problem}) is clearly an instance of our general formulation (\ref{Problem}), with each $f_i(\bx)=\;\log p_{\Phi/X}(\boldsymbol{\varphi}_i/\bx)$ and
$G(\bx)=\lambda\|\bx\|_1$. Note that, differently from classical  formulations of   distributed estimation problems (e.g., \cite{Barb-Sard-Dilo}, \cite{Schizas-Giann-Roum-Rib}), here we do not assume that the pdf's $p_{\Phi/X}(\boldsymbol{\varphi}/\bx)$ in (\ref{est_problem}) are log-concave functions of  $\bx$, which leads to generally nonconvex optimization problems. To the best of our knowledge, NEXT is the first distributed scheme converging to (stationary) solutions of \emph{nonconvex} estimation problems in the general form (\ref{est_problem}). Potential applications of  (\ref{est_problem}) abound in the contexts of environmental and civil infrastructures monitoring, smart grids, vehicular networks, and CR networks, just to name a few.\vspace{-0.4cm}

\subsection{Other Applications}
The proposed algorithmic framework can be  applied to a variety of other applications, including i) network localization \cite{Simonetto-Leus}; ii) multiple vehicle coordination \cite{Raffard-Tomlin-Boyd}; and iii) resource allocation over vector Gaussian Interference channels \cite{Scutari-Facchinei-Song-Palomar-Pang}. Quite interestingly, convergence of our algorithm is guaranteed under weaker conditions than those assumed in the aforementioned papers (e.g., time-varying topology, not fully connected networks). We omit the details because of the space limitation.\vspace{-0.2cm}

\section{Conclusions}

In this paper we have introduced NEXT, a novel (best-response-based) algorithmic framework for nonconvex distributed optimization in multi-agent networks with time-varying (nonsymmetric) topology. NEXT exploits successive convex approximation techniques while leveraging dynamic consensus as a mechanism to distribute the computation as well as propagate the needed information over the network.
We finally customized NEXT to a variety of (convex and nonconvex) optimization problems in  several fields, including  signal processing, communications, networking, machine learning, and cooperative control. Numerical results  show that the new method compares favorably to existing distributed algorithms on both convex and nonconvex  problems.\vspace{-0.2cm}
\section{Acknowledgments}
The authors are grateful to Francisco Facchinei for his invaluable comments.\vspace{-0.2cm}
\appendix
We preliminarily introduce some definitions and intermediate results  that are instrumental to prove Theorems \ref{simplified_convergence_th} and \ref{main_th}. We proceed considering the evolution of Algorithm 2. Similar results apply to Algorithm 1 as a special case.

\vspace{-0.2cm}
\subsection{Notation and preliminary results}\label{Appendix_A}
\noindent \textbf{Averages sequences}: Given the sequences generated by Algorithm 2,  let us introduce the following column vectors: for every $n\geq 0$ and $l>n$,
\begin{align}\label{delta_r}
\bx[n]&=\left [ \bx_1[n]^T,\ldots ,\bx_I[n]^T\right]^T;\medskip\nonumber\\
\by[n]&=\left [ \by_1[n]^T,\ldots ,\by_I[n]^T\right]^T; \medskip\nonumber\\
\br[n]&= \left [ \nabla f_1[n]^T,\ldots ,\nabla f_I[n]^T\right]^T;\medskip\\
\mathbf{\Delta r}_i[l,n]&= \nabla f_i[l]-\nabla f_i[n]; \medskip\nonumber\\
\mathbf{\Delta r}[l,n]&= \left[\mathbf{\Delta r}_1[l,n]^T,\ldots, \mathbf{\Delta r}_I[l,n]^T\right]^T;\nonumber
\end{align}
and the associated ``averages" across the users
\begin{align}\label{x_y_bar}
\mathbf{\overline{x}}[n]=\frac{1}{I}\sum_{i=1}^I \bx_i[n],\quad\mathbf{\overline{y}}[n]=\frac{1}{I}\sum_{i=1}^I   \by_i[n],
\end{align}
\begin{align}
\overline{\mathbf{r}}[n]&=\frac{1}{I}\sum_{i=1}^I \nabla f_i[n],\,\,\,\,
\mathbf{\overline{\Delta r}}[l,n]=\frac{1}{I}\sum_{i=1}^I \mathbf{\Delta r}_i[l,n]. \label{delta_r_bar}
\end{align}
Note that  $\by[0]=\br[0]$, and
\begin{align}\label{r_bar_decomp}
\mathbf{\overline{r}}[n]-\mathbf{\overline{r}}[0]=\sum_{l=1}^n\mathbf{\overline{\Delta r}}[l,l-1].
\end{align}

The proof of convergence of Algorithm 2 relies on studying the so-called ``consensus error dynamics'', i.e.  the distance of the  agents' variables $\x_i[n]$ from their average consensus dynamic $\mathbf{\overline{x}}[n]$.  To do so, we need to introduce some auxiliary sequences, as given next. Let us define, for $n\geq 0$ and $l>n$,
\begin{align}
\nabla f_i^{\rm av}[n]&=\nabla f_i(\bar{\bx}[n]),\bigskip\nonumber\\
\mathbf{\Delta r}^{\rm av}_i[l,n]&= \nabla f^{\rm av}_i[l]-\nabla f^{\rm av}_i[n], \bigskip\nonumber\\
\widetilde{\bx}^{\rm av}_{i}[n]&\triangleq \underset{\bx_i\in\mathcal{K}}{\text{argmin}} \,\,\widetilde{U}_i\left(\bx_i;\bar{\bx}[n],\widetilde{\boldsymbol{\pi}}^{\rm av}_{i}[n]\right) \bigskip\label{aux_averages}\\
\by_i^{\rm av}[n+1]&=\sum_{j\in\mathcal{N}_i^{\rm in}[n]} w_{ij}[n]\by_j^{\rm av}[n]+\mathbf{\Delta r}^{\rm av}_i[n+1,n]\bigskip\nonumber\\
\widetilde{\boldsymbol{\pi}}_{i}^{\rm av}[n]&=I\by^{\rm av}_i[n]-\nabla f_{i}^{\rm av}[n], \nonumber
\end{align}
with $\by_i^{\rm av}[0]\triangleq \nabla f_i^{\rm av}[0]$; and their associated vectors (and averages across the agents)\vspace{-0.1cm}
\begin{align}\label{aux_averages_vec}
\mathbf{\bar{r}}^{\rm av}[n]&=\frac{1}{I}\sum_{i=1}^I \nabla f^{\rm av}_i[n],\bigskip\nonumber\\
\br^{\rm av}[n]&= \left [ \nabla f^{\rm av}_1[n]^T,\ldots ,\nabla f^{\rm av}_I[n]^T\right]^T,\medskip\\
\mathbf{\Delta r}^{\rm av}[l,n]&= \left[\mathbf{\Delta r}^{\rm av}_1[l,n]^T,\ldots, \mathbf{\Delta r}^{\rm av}_I[l,n]^T\right]^T,\medskip\nonumber\\
\by^{\rm av}[n]&=\left [ \by^{\rm av}_1[n]^T,\ldots ,\by^{\rm av}_I[n]^T\right]^T. \nonumber
\end{align}
\noindent Note that $\widetilde{\bx}^{\rm av}_{i}[n]$ can be interpreted as the counterpart of $\widetilde{\bx}_{i}[n]$, when in (S.2a), $\x_i[n]$ and $\widetilde{\boldsymbol{\pi}}_{i}[n]$ are replaced by $\overline{\x}[n]$ and $\widetilde{\boldsymbol{\pi}}_{i}^{\rm av}[n]$, respectively; and so do  $\by_i^{\rm av}[n]$  and $\widetilde{\boldsymbol{\pi}}_{i}^{\rm av}[n]$. In Appendix \ref{Prop_consensus}, we show that the consensus error dynamics $\|\x_i[n]-\overline{\x}[n]\|$ as well as the best-response deviations $\|\widetilde{\bx}^{\rm av}_{i}[n]-\widehat{\bx}_{i}(\overline{\x}[n])\|$ and $\|\widetilde{\bx}_{i}[n]- \widetilde{\bx}^{\rm av}_{i}[n]\|$ asymptotically vanish, which is a key result to prove Theorems 3 and 4.

\noindent \textbf{Properties of $\widehat{\bx}_{i}(\bullet)$}: The next proposition introduces some key properties of the best-response map (\ref{best_resp_x_hat_2}), which will be instrumental to prove convergence; the proof of the proposition follows similar arguments as \cite[Prop. 8]{Scutari-Facchinei-Sagratella}, and thus is omitted.

\vspace{-.2cm}
\begin{proposition}\label{Prop_best_response} Under Assumption A and F1-F3,  each best response map $\widehat{\bx}_{i}(\bullet)$ in (\ref{best_resp_x_hat_2}) enjoys  the following properties:
\begin{description}
  \item[(a)] $\widehat{\bx}_{i}(\bullet)$ is Lipschitz continuous on $\mathcal{K}$, i.e., there exists a positive constant $\widehat{L}_i$ such that
      \begin{align}\label{p1a}
         \|\widehat{\bx}_{i}(\bw)-\widehat{\bx}_{i}(\bz)\|\leq \widehat{L}_i\;\|\bw-\bz\|,\quad \forall\;\bw,\bz\in \mathcal{K};
      \end{align}
  \item[(b)] The set of the fixed-points of  $\widehat{\bx}_{i}(\bullet)$ coincides with the set of stationary  solutions of Problem \eqref{Problem}; therefore $\widehat{\bx}_{i}(\bz)$ has a fixed-point;
  \item[(c)] For every given $\bz\in{\cal K}$,
      \begin{align}\label{p1c}
        & (\widehat{\bx}_{i}(\bz)-\bz)^T \nabla F(\bz) + G\left(\widehat{\bx}_{i}(\bz)\right) - G(\bz)\smallskip\nonumber\\
        &\qquad\qquad \leq - c_{\tau} \|\widehat{\bx}_{i}(\bz)-\bz\|^2,
      \end{align}
      with $c_{\tau}\triangleq \min_i \tau_i>0$.
  {\item[(d)] There exists a finite constant $\eta>0$ such that
  \begin{align}\label{p1d}
         \|\widehat{\bx}_{i}(\bz)-\bz\|\leq\eta, \quad \forall\; \bz\in \mathcal{K}.
      \end{align}}
\end{description}
\end{proposition}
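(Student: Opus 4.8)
The plan is to derive all four claims from the variational (optimality) characterization of the unique minimizer in (\ref{best_resp_x_hat_2}): $\widehat{\bx}_{i}(\bz)\in\mathcal K$ is the point for which there exists a subgradient $\partial G(\widehat{\bx}_{i}(\bz))$ with
\begin{equation}\label{prop-vi}
\big(\nabla\widetilde{f}_{i}(\widehat{\bx}_{i}(\bz);\bz)+\boldsymbol{\pi}_{i}(\bz)+\partial G(\widehat{\bx}_{i}(\bz))\big)^{T}(\by-\widehat{\bx}_{i}(\bz))\ge 0,\qquad\forall\,\by\in\mathcal K .
\end{equation}
Everything then reduces to manipulating (\ref{prop-vi}) using F1--F3 and Assumptions A3--A5.

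For part (a) I would write (\ref{prop-vi}) at two base points $\bw$ and $\bz$, test the first inequality with $\by=\widehat{\bx}_{i}(\bz)$ and the second with $\by=\widehat{\bx}_{i}(\bw)$, and add the two. The $\partial G$ contributions cancel by monotonicity of the subdifferential; strong convexity of $\widetilde{f}_{i}(\bullet;\bw)$ (F1) produces the term $\tau_{i}\,\|\widehat{\bx}_{i}(\bw)-\widehat{\bx}_{i}(\bz)\|^{2}$, while the residual terms are controlled by $\|\nabla\widetilde{f}_{i}(\bx;\bw)-\nabla\widetilde{f}_{i}(\bx;\bz)\|\le \widetilde{L}_{i}\|\bw-\bz\|$ (F3) and $\|\boldsymbol{\pi}_{i}(\bw)-\boldsymbol{\pi}_{i}(\bz)\|\le(\sum_{j\ne i}L_{j})\,\|\bw-\bz\|$ (A3). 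A Cauchy--Schwarz step then gives (\ref{p1a}), e.g. with $\widehat{L}_{i}\triangleq(\widetilde{L}_{i}+\sum_{j\ne i}L_{j})/\tau_{i}$. Part (b) is just Proposition \ref{Prop:fixed-point-stationary}: by F2, $\widehat{\bx}_{i}(\bz)=\bz$ makes (\ref{prop-vi}) read $(\nabla f_{i}(\bz)+\boldsymbol{\pi}_{i}(\bz)+\partial G(\bz))^{T}(\by-\bz)\ge 0$, i.e. $(\nabla F(\bz)+\partial G(\bz))^{T}(\by-\bz)\ge 0$; existence of a fixed-point follows because A6 guarantees that $U$ attains a minimum on the closed set $\mathcal K$ and every minimizer is stationary.

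For part (c), the plan is to set $\by=\bz$ in (\ref{prop-vi}), use convexity of $G$ to bound $\partial G(\widehat{\bx}_{i}(\bz))^{T}(\bz-\widehat{\bx}_{i}(\bz))\le G(\bz)-G(\widehat{\bx}_{i}(\bz))$, and then use strong convexity (F1) together with F2 to get $\nabla\widetilde{f}_{i}(\widehat{\bx}_{i}(\bz);\bz)^{T}(\widehat{\bx}_{i}(\bz)-\bz)\ge \nabla f_{i}(\bz)^{T}(\widehat{\bx}_{i}(\bz)-\bz)+\tau_{i}\|\widehat{\bx}_{i}(\bz)-\bz\|^{2}$. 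Since $\nabla f_{i}(\bz)+\boldsymbol{\pi}_{i}(\bz)=\nabla F(\bz)$, rearranging yields (\ref{p1c}) with $c_{\tau}=\min_{i}\tau_{i}$. Part (d) is then immediate from (c): lower-bounding its left side by $c_{\tau}\|\widehat{\bx}_{i}(\bz)-\bz\|^{2}$ and upper-bounding the right side via $\|\nabla F(\bz)\|\le L_{F}$ (A4) and $G(\bz)-G(\widehat{\bx}_{i}(\bz))\le \|\partial G(\bz)\|\,\|\widehat{\bx}_{i}(\bz)-\bz\|\le L_{G}\|\widehat{\bx}_{i}(\bz)-\bz\|$ (convexity of $G$ and A5), and dividing, give $\|\widehat{\bx}_{i}(\bz)-\bz\|\le (L_{F}+L_{G})/c_{\tau}\triangleq\eta$.

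The steps are mostly bookkeeping; the one place that needs genuine care is part (a), namely checking that the two-point test in (\ref{prop-vi}) really yields a constant $\widehat{L}_{i}$ that is \emph{uniform} over $\mathcal K$ --- this is exactly where F3 (a single Lipschitz modulus for $\bz\mapsto\nabla\widetilde{f}_{i}(\bx;\bz)$ valid for all $\bx$) and A3 (a single modulus for each $\nabla f_{j}$) are indispensable; without them the argument only gives local Lipschitz continuity. The remaining estimates follow routinely from strong convexity and the monotonicity/subgradient inequalities for $G$.
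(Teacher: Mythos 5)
Your proposal is correct, and it follows essentially the same route as the paper: the paper omits the proof of Proposition \ref{Prop_best_response} by deferring to Prop.~8 of the cited SCA reference, whose argument is exactly the one you give—write the optimality (variational-inequality) condition of the strongly convex subproblem, use strong monotonicity of $\nabla\widetilde{f}_i(\bullet;\bz)$ (F1), the gradient-consistency condition F2, uniform Lipschitz continuity in the second argument (F3) together with A3 for part (a), and the subgradient inequality for $G$ plus A4--A5 for parts (c)--(d). Your constants $\widehat{L}_i=(\widetilde{L}_i+\sum_{j\neq i}L_j)/\tau_i$ and $\eta=(L_F+L_G)/c_\tau$ and the fixed-point equivalence via Proposition \ref{Prop:fixed-point-stationary} are all consistent with what the cited proof yields.
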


\noindent \textbf{Transition matrices  and their properties}: For any $n$ and $l\leq n$,  let us introduce the transition matrix $\mathbf{P}[n,l]$, defined as
\begin{equation}\label{P_left}
\mathbf{P}[n,l]\triangleq\bW[n]\bW[n-1]\dots\bW[l], \quad n\geq l,
\end{equation}
with $\mathbf{P}[n,n]=\bW[n]$ for all $n$, and  $\bW[n]$ given in (\ref{weights}).  Let
\begin{align}\label{w_hat}
&\widehat{\bW}[n]\triangleq \bW[n]\,\otimes\,\bI_m, \medskip\nonumber\\
&\widehat{\mathbf{P}}[n,l]\triangleq \widehat{\bW}[n]\widehat{\bW}[n-1]\cdots\widehat{\bW}[l],\quad n\geq l,
\end{align}
with $\widehat{\mathbf{P}}[n,n]=\widehat{\bW}[n]$, for all $n$. The above matrices will be used to  describe the evolution of the agent variables associated with Algorithm 2. Introducing
\begin{equation}\label{def_J}
  \mathbf{J}=\frac{1}{I}\mathbf{1}\mathbf{1}^T\,\otimes\,\bI_m, \quad \hbox{and}\quad \mathbf{J}_\perp=\bI_{m\,I}-\bJ,
\end{equation}
it is not difficult to check that the following holds:
\begin{equation}\label{Jproperties}
\begin{array}{cc}
\bJ_\perp\widehat{\bW}[n]=\bJ_\perp\widehat{\bW}[n]\bJ_\perp=\widehat{\bW}[n]-\left(\frac{1}{I}\mathbf{1}\mathbf{1}^T\otimes\bI_m\right),\medskip\\
\widehat{\mathbf{P}}[n,l]=\mathbf{P}[n,l]\otimes \bI_m,
\end{array}
\end{equation}
and
\begin{equation}\label{P_hat_properties}
\begin{array}{ll}
&\bJ_\perp\widehat{\bW}[n]\,\bJ_\perp\widehat{\bW}[n-1]\,\cdots\,\bJ_\perp\widehat{\bW}[l]=\bJ_\perp\widehat{\mathbf{P}}[n,l] \medskip\\
&\qquad= \left(\mathbf{P}[n,l]-\dfrac{1}{I}\mathbf{1}_I\mathbf{1}_I^T\right)\otimes\mathbf{I}_m, \quad \forall\; n\geq l.
\end{array}
\end{equation}
The next lemma shows that, under the topology assumptions B1-B2,  the matrix difference $\mathbf{P}[n,l]-\frac{1}{I}\mathbf{11}^T$ decays geometrically, thus enabling consensus among agent variables.

\vspace{-.2cm}
\begin{lemma}\label{Lemma_left_product}
Given $\mathbf{P}[n,l]$ in (\ref{P_left}), under B1-B2,  the following holds:
\begin{equation}\label{norm_P}
\left\|\mathbf{P}[n,l]-\dfrac{1}{I}\mathbf{11}^T\right\|\leq c_0\, \rho^{n-l+1},\quad \forall\; n\geq l,
\end{equation}
for some $c_0>0$, and $\rho\in(0,1)$.
\end{lemma}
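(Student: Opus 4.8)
The plan is to establish the geometric decay of $\|\mathbf{P}[n,l]-\tfrac{1}{I}\mathbf{1}\mathbf{1}^T\|$ by reducing it to a classical weak ergodicity estimate for backward products of doubly stochastic matrices associated with $B$-strongly connected graphs. First I would record the structural facts: each $\bW[n]$ is row- and column-stochastic by B2, its nonzero entries are bounded below by $\vartheta$ by \eqref{weights}, and its diagonal is strictly positive (since $i\in\mathcal N_i^{\rm in}[n]$ always), so $\mathbf{P}[n,l]$ is itself doubly stochastic with $\mathbf{P}[n,l]\mathbf{1}=\mathbf{1}$ and $\mathbf{1}^T\mathbf{P}[n,l]=\mathbf{1}^T$. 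In particular $\mathbf{J}_0\triangleq\tfrac1I\mathbf1\mathbf1^T$ is a fixed point in both actions, $\mathbf{J}_0\mathbf{P}[n,l]=\mathbf{P}[n,l]\mathbf{J}_0=\mathbf{J}_0$, and $\mathbf{J}_0^2=\mathbf{J}_0$; hence $\mathbf{P}[n,l]-\mathbf{J}_0=(\mathbf I-\mathbf J_0)\mathbf{P}[n,l](\mathbf I-\mathbf J_0)$, which is the quantity whose norm we must bound.

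Next I would group the factors into blocks of length $B$. Write $n-l+1 = qB + s$ with $0\le s<B$, and split $\mathbf{P}[n,l]$ into $q$ consecutive products of $B$ consecutive weight matrices (plus a leftover product of $s\le B$ matrices, whose norm is at most $1$ since each factor is stochastic, hence nonexpansive in the relevant seminorm). For each length-$B$ block $\prod_{n=kB}^{(k+1)B-1}\bW[n]$, Assumption B1 guarantees that the union graph $\mathcal G_B[k]$ is strongly connected; combining this with the uniform lower bound $\vartheta$ on nonzero weights and the positive diagonals, a standard argument (e.g. the Wolfowitz/Hajnal coefficient-of-ergodicity lemma, or Lemma~5 of Nedić--Ozdaglar) shows every entry of the block product is at least some $\eta_B>0$ depending only on $I$, $B$, $\vartheta$. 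A doubly stochastic matrix all of whose entries are $\ge\eta_B$ is a strict contraction on the consensus-orthogonal subspace: $\|(\mathbf I-\mathbf J_0)\,M\,(\mathbf I-\mathbf J_0)\|\le \hat\rho<1$ with $\hat\rho=\hat\rho(\eta_B,I)$. Since $(\mathbf I-\mathbf J_0)$ commutes through each block product, we get $\|\mathbf{P}[n,l]-\mathbf J_0\|\le \hat\rho^{\,q}$, and setting $\rho\triangleq\hat\rho^{1/B}\in(0,1)$ and $c_0\triangleq \hat\rho^{-1}$ (to absorb the leftover $s<B$ factors) yields $\|\mathbf{P}[n,l]-\mathbf J_0\|\le c_0\,\rho^{\,n-l+1}$, which is \eqref{norm_P}.

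The main obstacle is the block-positivity step: showing that the product of $B$ consecutive weight matrices over a $B$-strongly-connected window has all entries uniformly bounded away from zero. This is where B1, the uniform weight bound $\vartheta$, and the self-loops all get used simultaneously, and it requires a careful path-counting argument (a positive entry $(i,j)$ in the product corresponds to a directed walk from $j$ to $i$ respecting the time-ordering of edges, which strong connectivity of the union graph over the window provides, while the self-loops let one ``wait'' at a node until the needed edge appears). Once that uniform positivity is in hand, the contraction estimate and the geometric-decay bookkeeping are routine. I would cite the relevant lemma from the consensus literature rather than reprove it, and devote the written proof mainly to the reduction and to pinning down the constants $c_0$ and $\rho$.
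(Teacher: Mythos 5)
Your overall route is the standard one---and it is essentially what the paper implicitly relies on, since its ``proof'' is nothing more than a citation of Lemma~5.2.1 in Tsitsiklis: reduce to a contraction estimate for backward products of doubly stochastic matrices with uniformly positive diagonals and $\vartheta$-bounded weights, block the product, contract once per block, and absorb the leftover factors into $c_0$. The doubly-stochastic bookkeeping ($\mathbf{P}[n,l]-\tfrac1I\mathbf{1}\mathbf{1}^T=(\mathbf I-\tfrac1I\mathbf{1}\mathbf{1}^T)\mathbf{P}[n,l](\mathbf I-\tfrac1I\mathbf{1}\mathbf{1}^T)$, nonexpansiveness of each factor on the consensus-orthogonal subspace, and the choice of $\rho$ and $c_0$ from the per-block factor $\hat\rho$) is all fine.

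The genuine gap is the positivity claim on which your contraction rests: it is \emph{not} true that a single block of $B$ consecutive matrices over a $B$-strongly-connected window has all entries bounded below by some $\eta_B>0$. Strong connectivity of the union $\bigcup_{n=kB}^{(k+1)B-1}\mathcal{E}[n]$ gives a path from $j$ to $i$ in the union graph, but not a \emph{time-respecting} walk inside that same window: the edges of the path may appear in the wrong temporal order, and self-loops only let the walk wait for an edge that has not yet appeared, not recover one that has already passed. Concretely, take $I=3$, $B=2$, with edges $2\to3$ and $3\to1$ active at the first time of the window and $1\to2$ active at the second; the union is strongly connected, yet no time-respecting walk leads from node $1$ to node $3$ within the window, so the corresponding entry of the two-matrix product is exactly zero, and no uniform $\eta_B$ exists. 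The correct statement---the one Tsitsiklis' Lemma~5.2.1 and the analogous Ned\'ic--Ozdaglar lemma actually provide---is that the product over $(I-1)B$ consecutive steps has all entries at least $\vartheta^{(I-1)B}$: within each $B$-window the set of nodes already reached from $j$ gains at least one new member (strong connectivity of that window's union supplies an edge leaving the reached set at some time in the window, and the self-loops $w_{ii}[n]\ge\vartheta$ let the walk wait for it), so $I-1$ windows suffice. If you replace your blocks of length $B$ by blocks of length $(I-1)B$, the rest of your argument goes through verbatim, with $\rho=\hat\rho^{1/((I-1)B)}$ and $c_0$ enlarged to absorb a remainder of fewer than $(I-1)B$ factors.
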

\begin{proof}
The proof is based on \cite[Lemma 5.2.1]{Tsitsiklis}. 
\end{proof}\smallskip

\vspace{-.2cm}
\noindent\textbf{Some useful Lemmas:} We introduce two lemmas dealing with convergence of some sequences.

\vspace{-.2cm}
\begin{lemma}\label{Lemma_sequences}
Let $0<\lambda<1$, and  let $\{\beta[n]\}$ and  $\{\nu[n]\}$ be two positive scalar sequences. Then, the following   hold:
\begin{description}
  \item[(a)] If $\displaystyle\lim_{n\rightarrow\infty}\beta[n]=0$, then\vspace{-0.2cm}
   \begin{equation}\label{lemma1a}
       \lim_{n\rightarrow\infty}\,\sum_{l=1}^{n}\lambda^{n-l}\beta[l]=0.
    \end{equation}
  \item[(b)] If $\sum_{n=1}^\infty \beta[n]^2<\infty$ and $\sum_{n=1}^\infty \nu[n]^2<\infty$, then
   \begin{align}
     & \hbox{\rm (b.1):}\;\;\displaystyle\lim_{n\rightarrow\infty}\, \sum_{k=1}^{n}\sum_{l=1}^{k}\lambda^{k-l}\beta[l]^2<\infty, \label{lemma1b1}\\
     & \hbox{\rm (b.2):}\;\;\displaystyle\lim_{n\rightarrow\infty}\, \sum_{k=1}^{n}\sum_{l=1}^{k}\lambda^{k-l}\beta[k]\nu[l]<\infty. \label{lemma1b2}
   \end{align}
\end{description}
\end{lemma}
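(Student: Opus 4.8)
The plan is to prove each part of Lemma~\ref{Lemma_sequences} by elementary manipulation of the double sums, exploiting the geometric factor $\lambda^{k-l}$ with $0<\lambda<1$.

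For part (a), I would fix $\eps>0$ and use $\beta[n]\to 0$ to pick $N$ with $\beta[l]<\eps$ for all $l>N$. Splitting $\sum_{l=1}^n \lambda^{n-l}\beta[l]$ into $\sum_{l=1}^N$ and $\sum_{l=N+1}^n$, the first (finite) sum is bounded by $(\max_{l\le N}\beta[l])\sum_{l=1}^N\lambda^{n-l}\le C\lambda^{n-N}\to 0$, while the second is bounded by $\eps\sum_{l=N+1}^n\lambda^{n-l}\le \eps/(1-\lambda)$. Letting $n\to\infty$ then $\eps\to 0$ gives the claim. This is the standard discrete analogue of the fact that convolution of a summable (geometric) kernel with a null sequence is null.

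For part (b.1), I would swap the order of summation: $\sum_{k=1}^n\sum_{l=1}^k\lambda^{k-l}\beta[l]^2 = \sum_{l=1}^n\beta[l]^2\sum_{k=l}^n\lambda^{k-l}\le \frac{1}{1-\lambda}\sum_{l=1}^n\beta[l]^2$, which is bounded uniformly in $n$ by hypothesis, hence the limit is finite. For part (b.2), the same interchange gives $\sum_{k=1}^n\beta[k]\sum_{l=1}^k\lambda^{k-l}\nu[l]$; I would then apply the Cauchy--Schwarz inequality $\beta[k]\nu[l]\le \tfrac12(\beta[k]^2+\nu[l]^2)$ (or, alternatively, Cauchy--Schwarz on the inner sum after writing $\lambda^{k-l}=\lambda^{(k-l)/2}\cdot\lambda^{(k-l)/2}$), reducing (b.2) to a combination of sums of the type handled in (b.1) applied to $\{\beta[n]^2\}$ and $\{\nu[n]^2\}$, both of which are summable. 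Summing the geometric factors again produces the constant $\tfrac1{1-\lambda}$, and finiteness follows.

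No single step is genuinely hard here; the only mild subtlety is bookkeeping in (b.2) — making sure the Young/Cauchy--Schwarz split is arranged so that \emph{both} resulting double sums have the geometric kernel paired with a squared summable sequence, so that (b.1)-type bounds apply. I would present (a) first, since it is self-contained, then (b.1) as the model interchange-of-summation argument, and finally (b.2) by reduction to (b.1). No deep machinery is needed — just $\sum_{j\ge 0}\lambda^j=1/(1-\lambda)$, Tonelli/Fubini for nonnegative double series, and an elementary inequality.
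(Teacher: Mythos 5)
Your proposal is correct and, for part (b.2) --- the only part the paper proves in-house --- it follows essentially the same route as the paper: the split $\beta[k]\nu[l]\le\tfrac12(\beta[k]^2+\nu[l]^2)$ (Young's inequality, which you label Cauchy--Schwarz, a harmless misnomer), the geometric bound $\sum_{l=1}^{k}\lambda^{k-l}\le 1/(1-\lambda)$, and reduction to (b.1) applied to $\{\nu[n]^2\}$. For (a) and (b.1) the paper simply cites Nedi\'c--Ozdaglar--Parrilo, and your $\varepsilon$-splitting and interchange-of-summation arguments are the standard proofs of exactly those facts, so there is no gap.
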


\begin{proof} The lemma is a refinement of similar results in \cite{Nedic-Ozdaglar-Parillo}.

\noindent (a) and (b.1) can be found in \cite[p. 931]{Nedic-Ozdaglar-Parillo};  (b.2) is proved next. The following bound holds: \vspace{-0.2cm}
\begin{align}\label{lemma1proof}
\sum_{k=1}^{n}\sum_{l=1}^{k}&\lambda^{k-l}\beta[k]\nu[l] \nonumber\\
&  \hspace{-.5cm}{\leq}\,\, \frac{1}{2}\,\sum_{k=1}^{n}\beta[k]^2\sum_{l=1}^{k}\lambda^{k-l} +\frac{1}{2}\,\sum_{k=1}^{n}\sum_{l=1}^{k}\lambda^{k-l} \nu[l]^2 \nonumber\\
& \hspace{-.5cm} {\leq} \,\,\frac{1}{2(1-\lambda)}\,\sum_{k=1}^{n}\beta[k]^2+\frac{1}{2}\,\sum_{k=1}^{n}\sum_{l=1}^{k}\lambda^{k-l} \nu[l]^2,
\end{align}
where we used the inequalities $\beta[k]\gamma[l]\leq (\beta[k]^2+\nu[l]^2)/2$ and $\sum_{l=1}^{k}\lambda^{k-l}\leq \frac{1}{1-\lambda}$, for all $k$. The proof of (\ref{lemma1b2}) follows  from (\ref{lemma1proof}) invoking    $\sum_{n=0}^\infty \beta[n]^2<\infty$ and   (\ref{lemma1b1}).
\end{proof}

\vspace{-.3cm}
\begin{lemma} [{\cite[Lemma 1]{Ber_Tsitsi}}]\label{lemma_Robbinson_Siegmunt}
Let $\{Y[n]\}$, $\{X[n]\}$, and $\{Z[n]\}$ be three sequences
of numbers such that $X[n]\geq0$ for all $n$. Suppose that
\vspace{-.1cm}
\[
Y[n+1]\leq Y[n]-X[n]+Z[n],\quad\hbox{for all $n$},
\]
and $\sum_{n=1}^\infty Z[n]<\infty$. Then, either $Y[n]\rightarrow-\infty$
or else $\{Y[n]\}$ converges to a finite value and $\sum_{n=1}^\infty X[n]<\infty$.
\end{lemma}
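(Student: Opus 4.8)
The plan is to absorb the summable perturbation $Z[n]$ into a tail sum, so that the recursive inequality becomes a plain monotone-sequence statement. Since $\sum_{n=1}^{\infty}Z[n]<\infty$, the series converges, hence the tails $\sigma[n]\triangleq\sum_{k=n}^{\infty}Z[k]$ are well defined, obey $\sigma[n]-\sigma[n+1]=Z[n]$, and satisfy $\sigma[n]\to 0$ as $n\to\infty$. (If $Z[n]\geq 0$ this is immediate; in general one uses that the tails of a convergent series vanish, which is the only place the precise reading of ``$\sum_n Z[n]<\infty$'' plays a role.)

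First I would introduce the shifted sequence $\widetilde{Y}[n]\triangleq Y[n]+\sigma[n]$. Combining the hypothesis $Y[n+1]\leq Y[n]-X[n]+Z[n]$ with the identity $Z[n]+\sigma[n+1]=\sigma[n]$ gives
\[
\widetilde{Y}[n+1]\;=\;Y[n+1]+\sigma[n+1]\;\leq\;Y[n]-X[n]+Z[n]+\sigma[n+1]\;=\;\widetilde{Y}[n]-X[n].
\]
Since $X[n]\geq 0$, the sequence $\{\widetilde{Y}[n]\}$ is nonincreasing, so it has a limit in $[-\infty,+\infty)$: either $\widetilde{Y}[n]\to-\infty$, or $\widetilde{Y}[n]\to\widetilde{Y}^{\star}$ for some finite $\widetilde{Y}^{\star}$.

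In the first case, because $\sigma[n]\to 0$ we obtain $Y[n]=\widetilde{Y}[n]-\sigma[n]\to-\infty$, which is the first alternative in the statement. In the second case, $\sigma[n]\to 0$ again yields $Y[n]\to\widetilde{Y}^{\star}$, a finite value; and telescoping $\widetilde{Y}[k+1]\leq\widetilde{Y}[k]-X[k]$ over $k=1,\dots,n$ gives
\[
\sum_{k=1}^{n}X[k]\;\leq\;\widetilde{Y}[1]-\widetilde{Y}[n+1]\;\leq\;\widetilde{Y}[1]-\inf_{m}\widetilde{Y}[m],
\]
whose right-hand side is finite because a convergent sequence is bounded below; letting $n\to\infty$ then gives $\sum_{k=1}^{\infty}X[k]<\infty$, which completes the argument.

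I do not expect any real technical obstacle here: this is the standard device of turning a Robbins--Siegmund-type recursion into a monotone sequence, and the only point requiring a bit of care is the well-definedness and vanishing of the tails $\sigma[n]$ when $Z[n]$ is not assumed sign-constrained.
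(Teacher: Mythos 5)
Your proof is correct. Note that the paper does not actually prove this lemma: it is quoted verbatim with a citation to \cite[Lemma 1]{Ber_Tsitsi} (Bertsekas--Tsitsiklis), so there is no in-paper argument to compare against. What you supply is the standard proof of that cited result: absorbing the summable perturbation through the tail sums $\sigma[n]=\sum_{k\geq n}Z[k]$, observing that $\widetilde{Y}[n]=Y[n]+\sigma[n]$ is nonincreasing because $X[n]\geq 0$, and then splitting into the two alternatives, with the telescoping bound $\sum_{k=1}^{n}X[k]\leq \widetilde{Y}[1]-\widetilde{Y}[n+1]$ giving summability of $X$ in the convergent case. This is exactly the right device, and your remark about the reading of $\sum_n Z[n]<\infty$ when $Z[n]$ is not sign-constrained is the only delicate point; it is moot in the paper's application, since there $Z[n]$ is a sum of nonnegative terms, so the tails are trivially well defined and vanish. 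The only thing your write-up buys beyond the paper is self-containedness, which is a reasonable addition.
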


\vspace{-.1cm}
\noindent\textbf{On the iterates $\bx[n]$ and $\bar{\bx}[n]$:} We conclude this section  writing in a compact form the evolution of the sequences $\bx[n]$ and $\bar{\bx}[n]$ generated by  Algorithm 2; this  will be largely used in forthcoming proofs. Combining S.2(b) and S.3(a) of Algorithm 2,   $\bx[n]$, defined in  (\ref{delta_r}), can be written as:
\vspace{-.1cm}
\begin{equation}\label{x_evo}
\bx[n]=\widehat{\bW}[n-1]\bx[n-1]+\alpha[n-1]\,\widehat{\bW}[n-1]\Delta\bx^{\texttt{inx}}[n-1]\vspace{-0.1cm}
\end{equation}
where $\Delta\bx^{\texttt{inx}}[n]\triangleq (\bx^\texttt{inx}_i[n]-\bx_i[n])_{i=1}^I$. Using  (\ref{x_evo}), the evolution of the average vector $\bar{\bx}[n]$, defined in (\ref{x_y_bar}), reads
\begin{align} \label{x_bar_evo}
\bar{\bx}[n]
=\bar{\bx}[n-1]+\,\dfrac{\alpha[n-1]}{I}\left(\mathbf{1}_I^T\otimes \bI_m\right)\Delta\bx^{\texttt{inx}}[n-1].
\end{align}

\vspace{-.6cm}
\subsection{Consensus Achievement}\label{Prop_consensus}

The following proposition establishes a connection between the sequence $(\bx_i[n])_{i=1}^I$ generated by Algorithm 2 and the sequences $\bar{\bx}[n]$, $\widetilde{\bx}^{\rm av}_i[n]$, and $\widehat{\bx}_i(\bar{\bx}[n])$, introduced in Appendix A. It shows that i) the agents reach a consensus on the estimates $\bx_i[n]$, $\forall i$, as $n\rightarrow\infty$; and ii) the limiting behavior of the best-responses $\widetilde{\bx}_i[n]$ is the same as $\widetilde{\bx}^{\rm av}_i[n]$ and $\widehat{\bx}_i(\bar{\bx}[n])$. This is a key result to prove convergence of Algorithm 2.

\vspace{-.1cm}
\begin{proposition}\label{Prop_cons} Let $\{(\bx_i[n])_{i=1}^I\}_n$ be the sequence generated by Algorithm 2, in the setting of Theorem \ref{main_th}. Then, for all $n\geq 0$, the following holds:
\begin{description}
  \item[(a)] {\rm [Bounded disagreement]:}
  \begin{eqnarray}\label{P2a}
     \left\|\bx_i^{\texttt{inx}}[n]-\bx_i[n]\right\|\leq c,
  \end{eqnarray}
  for all $i=1,\ldots,I$, and some finite constant $c>0$;
  \item[(b)] {\rm [Asymptotic agreement on $\bx_i[n]$]:}
     \begin{align}
        & \lim_{n\rightarrow\infty}\left\|\bx_i[n]-\bar{\bx}[n]\right\|=0, \label{P2b1}\\
        & \sum_{n=1}^\infty \alpha[n] \left\|\bx_i[n]-\bar{\bx}[n]\right\|<\infty, \label{P2b2}\\
        &  \sum_{n=1}^\infty \left\|\bx_i[n]-\bar{\bx}[n]\right\|^2<\infty. \label{P2b3}
     \end{align}
     for all $i=1,\ldots,I$;
  \item[(c)]  {\rm [Asymptotically vanishing tracking error]:}
     \begin{align}
        & \lim_{n\rightarrow\infty}\left\|\widetilde{\bx}^{\rm av}_i[n]-\widehat{\bx}_i(\bar{\bx}[n])\right\|=0 \label{P2c1}\\
        & \sum_{n=1}^\infty \alpha[n]\left\|\widetilde{\bx}^{\rm av}_i[n]-\widehat{\bx}_i(\bar{\bx}[n])\right\|<\infty, \label{P2c2}
     \end{align}
     for all $i=1,\ldots,I$.
  \item[(d)] {\rm [Asymptotic agreement on best-responses]:}
     \begin{align}
        &\lim_{n\rightarrow\infty}\left\|\widetilde{\bx}_i[n]-\widetilde{\bx}^{\rm av}_i[n]\right\|=0 \label{P2d1}\\
        &\sum_{n=1}^\infty \alpha[n] \left\|\widetilde{\bx}_i[n]-\widetilde{\bx}^{\rm av}_i[n]\right\|<\infty, \label{P2d2}
     \end{align}
     for all $i=1,\ldots,I$.

\end{description}
\end{proposition}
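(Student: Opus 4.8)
The plan is to establish (a)--(d) in the given order, since each relies on its predecessors, leaning throughout on one recurring device. Every ``gradient-tracking'' sequence in the analysis --- $\by[n]$, its averaged version $\by^{\rm av}[n]$ of \eqref{aux_averages}, and their difference --- satisfies a recursion of the form ``consensus averaging by $\widehat{\bW}[n]$ plus an additive gradient increment''. Unrolling it and using \eqref{P_hat_properties}, the consensus-average component telescopes (reproducing the corresponding running gradient average), while the component orthogonal to consensus becomes a sum of the increments weighted by $c_0\,\rho^{\,n-l}$, by Lemma~\ref{Lemma_left_product}. Each increment is bounded, via Lipschitz continuity of $\nabla f_i$ (A3), by $L^{\max}$ times a per-step displacement of the relevant iterate, a quantity that \eqref{x_evo}--\eqref{x_bar_evo} render $O(\alpha[l])$ up to consensus errors. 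Feeding this into Lemma~\ref{Lemma_sequences}(a), (b.1) and (b.2) --- the last two with $\beta[n]=\nu[n]=\alpha[n]$, using $\sum_n\alpha[n]^2<\infty$ --- produces in one stroke the ``$\to 0$'', the ``$\sum_n\alpha[n]\|\cdot\|<\infty$'' and the ``$\sum_n\|\cdot\|^2<\infty$'' statements. I would set up this bookkeeping once and reuse it.

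For part (a), I would first write the first-order optimality condition of the $\tau_i$-strongly convex problem \eqref{best_resp_x_hat_3}, testing the minimizer $\widetilde{\bx}_i[n]$ against the feasible point $\bx_i[n]$; using F1--F2, convexity of $G$ and A5 this yields
\[
\big\|\widetilde{\bx}_i[n]-\bx_i[n]\big\|\ \le\ \tfrac{2}{\tau_i}\Big(\big\|\widetilde{\boldsymbol{\pi}}_i[n]-\nabla f_i(\bx_i[n])\big\|+L_G\Big),
\]
and then $\|\bx_i^{\texttt{inx}}[n]-\bx_i[n]\|\le\|\widetilde{\bx}_i[n]-\bx_i[n]\|+\varepsilon_i[n]$. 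It remains to bound $\widetilde{\boldsymbol{\pi}}_i[n]=I\,\by_i[n]-\nabla f_i[n]$ uniformly, and for this it is enough that all iterates lie in a compact subset of $\mathcal{K}$: unrolling S.3(b) with $\by_i[0]=\nabla f_i[0]$ and double stochasticity gives $\overline{\by}[n]=\overline{\br}[n]$, which is then bounded, while $\by_i[n]-\overline{\br}[n]$ is the consensus-orthogonal part of the recursion above, hence a convergent geometric sum of increments $\nabla f_i(\bx_i[l])-\nabla f_i(\bx_i[l-1])$ that are themselves bounded on a compact set. Confining the iterates I would do via coercivity of $U$ (A6) and an approximate-descent estimate for $\{U(\overline{\bx}[n])\}$ obtained from the descent lemma (A3) and Lemma~\ref{lemma_Robbinson_Siegmunt}. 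I expect this to be the main obstacle: the per-step displacements, the tracking variables and the descent estimate all reference one another, so the uniform bound $c$ has to be extracted by a simultaneous bootstrapping argument in which A6 plays the decisive role.

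For part (b), iterate \eqref{x_evo} to get $\bx[n]=\widehat{\mathbf{P}}[n-1,0]\,\bx[0]+\sum_{l=0}^{n-1}\alpha[l]\,\widehat{\mathbf{P}}[n-1,l]\,\Delta\bx^{\texttt{inx}}[l]$, project onto the disagreement subspace via $\mathbf{J}_\perp$, and use \eqref{P_hat_properties} together with Lemma~\ref{Lemma_left_product} to obtain
\[
\big\|\bx_i[n]-\overline{\bx}[n]\big\|\ \le\ c_0\,\rho^{\,n}\,\|\bx[0]\|\;+\;c_0\,\sqrt{I}\,c\sum_{l=0}^{n-1}\rho^{\,n-l}\,\alpha[l],
\]
with $c$ the constant from (a). Then \eqref{P2b1} follows from Lemma~\ref{Lemma_sequences}(a) (with $\beta[n]=\alpha[n]\to0$); \eqref{P2b3} from $\big(\sum_l\rho^{\,n-l}\alpha[l]\big)^2\le\frac{1}{1-\rho}\sum_l\rho^{\,n-l}\alpha[l]^2$ and Lemma~\ref{Lemma_sequences}(b.1); and \eqref{P2b2} from Lemma~\ref{Lemma_sequences}(b.2), the geometric term being $\alpha$-summable since $\rho<1$.

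For parts (c) and (d) the main tool is a perturbation bound for minimizers of $\tau_i$-strongly convex problems sharing the same convex $G$: if the smooth parts differ by $h$, the minimizers differ by at most $\frac{1}{\tau_i}\sup_{\bx}\|\nabla h(\bx)\|$. For (c), $\widetilde{\bx}^{\rm av}_i[n]$ and $\widehat{\bx}_i(\overline{\bx}[n])$ differ only through $\widetilde{\boldsymbol{\pi}}^{\rm av}_i[n]$ versus $\boldsymbol{\pi}_i(\overline{\bx}[n])=I\,\overline{\mathbf{r}}^{\rm av}[n]-\nabla f_i(\overline{\bx}[n])$, so $\|\widetilde{\bx}^{\rm av}_i[n]-\widehat{\bx}_i(\overline{\bx}[n])\|\le\frac{I}{\tau_i}\|\by^{\rm av}_i[n]-\overline{\mathbf{r}}^{\rm av}[n]\|$; the right-hand side is the consensus-orthogonal part of the $\by^{\rm av}$-recursion, whose increments are $\le L^{\max}\|\overline{\bx}[n+1]-\overline{\bx}[n]\|=O(\alpha[n])$ by \eqref{x_bar_evo} and (a), so the recurring device gives \eqref{P2c1}--\eqref{P2c2}. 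For (d), $\widetilde{\bx}_i[n]$ and $\widetilde{\bx}^{\rm av}_i[n]$ differ in the surrogate base point ($\bx_i[n]$ vs.\ $\overline{\bx}[n]$, controlled by F3, hence by $\|\bx_i[n]-\overline{\bx}[n]\|$) and in the linear term $\widetilde{\boldsymbol{\pi}}_i[n]-\widetilde{\boldsymbol{\pi}}^{\rm av}_i[n]=I(\by_i[n]-\by^{\rm av}_i[n])-(\nabla f_i(\bx_i[n])-\nabla f_i(\overline{\bx}[n]))$; the last difference is $\le L^{\max}\|\bx_i[n]-\overline{\bx}[n]\|$, and $\|\by_i[n]-\by^{\rm av}_i[n]\|$ is again handled by the recurring device, since it solves a consensus-plus-increment recursion driven by $(\nabla f_i[n+1]-\nabla f_i[n])-(\nabla f^{\rm av}_i[n+1]-\nabla f^{\rm av}_i[n])$, whose increments are $\le L^{\max}(\|\bx_i[n+1]-\overline{\bx}[n+1]\|+\|\bx_i[n]-\overline{\bx}[n]\|)$ --- both $o(1)$ and $\alpha$-summable by \eqref{P2b1}--\eqref{P2b2}. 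Collecting these bounds and invoking (b) yields \eqref{P2d1}--\eqref{P2d2}.
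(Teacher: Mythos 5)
Your treatment of part (a) contains the genuine gap. You correctly reduce \eqref{P2a} to a uniform bound on $\widetilde{\boldsymbol{\pi}}_i[n]$ (equivalently on $\by_i[n]$), but you propose to obtain it by confining the iterates to a compact set via coercivity of $U$ (A6), an approximate-descent estimate for $U(\bar{\bx}[n])$ and Lemma \ref{lemma_Robbinson_Siegmunt}. That route is circular in this development: the descent inequality for $U(\bar{\bx}[n])$ (the one used to prove Theorem \ref{main_th}) has error terms that are made summable precisely by \eqref{P2a}, \eqref{P2c2} and \eqref{P2d2} --- i.e., by the proposition you are in the middle of proving --- so Robbins--Siegmund cannot be invoked at this stage; likewise, the per-step displacement of $\bx_i[n]$ is $O(\alpha[n])$ only up to $\|\Delta\bx^{\texttt{inx}}[n]\|$, which is again (a). You flag this interdependence yourself (``simultaneous bootstrapping'') but leave it unresolved, and it is the crux of the claim. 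The resolution is much simpler and needs no compactness of the iterates and no descent argument: every $\bx_i[n]$ lies in $\mathcal K$ by construction (convex combinations of points of $\mathcal K$), so A4 and A5 already give uniform bounds on the relevant gradient and subgradient quantities on $\mathcal K$. Then the minimum principle for \eqref{best_resp_x_hat_3} with F1--F2 and A5 yields $\|\widetilde{\bx}_i[n]-\bx_i[n]\|\le \frac{I}{\tau_i}\|\by_i[n]\|+\frac{L_G}{\tau_i}$, and $\|\by[n]-\mathbf{1}_I\otimes\overline{\br}[n]\|$ is bounded by unrolling the $\by$-recursion: the increments $\mathbf{\Delta r}[l,l-1]$ are \emph{uniformly} bounded (bounded gradients on $\mathcal K$; no Lipschitz-times-stepsize estimate is needed here), so the geometric weights from Lemma \ref{Lemma_left_product} give a finite bound independent of $n$, and \eqref{P2a} follows. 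This is exactly how the paper proves (a).

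Parts (b), (c), (d) of your proposal follow essentially the paper's own route (projection by $\bJ_\perp$, geometric decay of $\mathbf{P}[n,l]-\frac{1}{I}\mathbf{1}\mathbf{1}^T$, strong-convexity perturbation bounds on the minimizers, the dynamic-consensus recursions for $\by$ and $\by^{\rm av}$, and Lemma \ref{Lemma_sequences}), and, once (a) is secured as above, they go through as written. One small citation slip in (d): the $\alpha$-weighted summability of the double-sum term $\sum_k\sum_l\rho^{k-l}\alpha[k]\,\|\bx_i[l]-\bar{\bx}[l]\|$ requires Lemma \ref{Lemma_sequences}(b.2) together with the squared summability \eqref{P2b3}, not \eqref{P2b1}--\eqref{P2b2}; since you do establish \eqref{P2b3} in part (b), this is harmless but should be cited correctly.
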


\begin{proof}  \textbf{Point (a):} The minimum principle of (\ref{best_resp_x_hat_3}) leads to
\begin{equation}
\left(\bx_i[n]-\widetilde{\bx}_i[n]\right)^T\Big(\nabla \widetilde{f}_i(\widetilde{\bx}_i[n];\bx_i[n])+\widetilde{\boldsymbol{\pi}}_{i}[n]+\partial G(\widetilde{\bx}_i[n])\Big)\geq 0.\label{min_princ}
\end{equation}
Using S.3(c) and F2,  (\ref{min_princ}) becomes
\begin{align}
&(\bx_i[n]-\widetilde{\bx}_i[n])^T\Big(\nabla \widetilde{f}_i(\widetilde{\bx}_i[n];\bx_i[n])-\nabla \widetilde{f}_i(\bx_{i}[n];\bx_i[n])\Big) \nonumber\\
&+ (\bx_i[n]-\widetilde{\bx}_i[n])^T \left(I\cdot \by_i[n]+ \partial G(\widetilde{\bx}_i[n])\right)\geq 0, \nonumber
\end{align}
which, invoking F1 [i.e., the uniformly strong convexity of $\widetilde{f}_i(\bullet;\bx_i[n])$], and $\|\partial G(\widetilde{\bx}_i[n])\|\leq L_G$  (cf. A5), yields
\begin{align}\label{min_princ3}
&\left\|\widetilde{\bx}_i[n]-\bx_i[n]\right\|  \leq \frac{I}{\tau_i} \left\|\by_i[n]\right\|+\frac{L_G}{\tau_i} \leq \,\frac{I}{\tau_i}\,\left\|\by[n]\right\|+\frac{L_G}{\tau_i}\smallskip\nonumber\\
&\quad   {\leq}\; \frac{I}{\tau_i} \, \|\by[n]-\mathbf{1}_I\otimes\mathbf{\overline{r}}[n]\| +  \frac{I}{\tau_i} \, \|\mathbf{1}_I\otimes\mathbf{\overline{r}}[n]\|+\frac{L_G}{\tau_i}\smallskip\nonumber\\
& \quad {\leq}\; \frac{I}{\tau_i} \,\|\by[n]-\mathbf{1}_I\otimes\mathbf{\overline{r}}[n]\| +c_1
\end{align}
where the last inequality follows from  A4, for  some finite $c_1\geq \frac{I}{\tau_i} \,\|\mathbf{1}_I\otimes\mathbf{\overline{r}}[n]\|+\frac{L_G}{\tau_i}$. From (\ref{min_princ3}), exploiting (\ref{precision}) and the triangle inequality, we then obtain
\begin{align}
&\left\|\bx^{\texttt{inx}}_i[n]-\bx_i[n]\right\| \leq  c_1+\varepsilon_i[n]+\frac{I}{\tau_i} \,\|\by[n]-\mathbf{1}_I\otimes\mathbf{\overline{r}}[n]\|. \nonumber
\end{align}
Since  $\varepsilon_i[n]$ is uniformly bounded [cf. (\ref{errors})], to complete the proof it is sufficient to  show  that $\|\by[n]-\mathbf{1}_I\otimes\mathbf{\overline{r}}[n]\|$ is bounded.
Note that  $\by[n]$ in S.3(b) and $\mathbf{1}_I\otimes\mathbf{\overline{r}}[n]$ can be rewritten as
\begin{align}\label{y_evol}
\by[n]&=\widehat{\bW}[n-1]\by[n-1]+\mathbf{\Delta r}[n,n-1]\nonumber\\
&\hspace{-.75cm}=\widehat{\mathbf{P}}[n-1,0]\br[0]+\sum_{l=1}^{n-1} \widehat{\mathbf{P}}[n-1,l] \mathbf{\Delta r}[l,l-1] +\mathbf{\Delta r}[n,n-1]
\end{align}
\vspace{-.5cm}
\begin{equation}\label{Kron_r_bar}
\hbox{\rm and} \qquad\mathbf{1}_I\otimes\mathbf{\overline{r}}[n]=\mathbf{J}\,\mathbf{{r}}[0]+\sum_{l=1}^n \mathbf{J}\,{\mathbf{\Delta r}}[l,l-1],
\end{equation}
\noindent respectively, where (\ref{y_evol}) follows from  (\ref{w_hat}), and in (\ref{Kron_r_bar}) we used
(\ref{r_bar_decomp}) and (\ref{def_J}). Then, the following holds:
\vspace{-.1cm}
\begin{align}\label{y_evol2}
& \by[n]-\mathbf{1}_I\otimes\mathbf{\overline{r}}[n]\nonumber\\
 &\hspace{-.2cm}\stackrel{(a)}{=}\left(\widehat{\mathbf{P}}[n-1,0]-\bJ\right)\br[0]+\sum_{l=1}^{n-1}  \left(\widehat{\mathbf{P}}[n-1,l]-\bJ\right) \mathbf{\Delta r}[l,l-1] \nonumber\\
 &\quad\quad+ \bJ_\perp\mathbf{\Delta r}[n,n-1]\nonumber\\
  &\hspace{-.2cm} \stackrel{(b)}{=} \left[\left(\mathbf{P}[n-1,0]-\dfrac{1}{I}\mathbf{1}_I\mathbf{1}_I^T\right)\otimes\mathbf{I}_m\right]\br[0]
  +\bJ_\perp\mathbf{\Delta r}[n,n-1]\nonumber\\
 & \quad+\sum_{l=1}^{n-1}  \left[\left(\mathbf{P}[n-1,l]-\dfrac{1}{I}\mathbf{1}_I\mathbf{1}_I^T\right)\otimes\mathbf{I}_m \right]\mathbf{\Delta r}[l,l-1]
\end{align}
where (a) follows from (\ref{y_evol}), (\ref{Kron_r_bar}), and (\ref{def_J}); and (b) is due to (\ref{Jproperties}) and (\ref{P_hat_properties}).
It follows from Lemma \ref{Lemma_left_product} that
$$\left\|\left(\mathbf{P}[n,l]-(1/I)\mathbf{1}_I\mathbf{1}_I^T\right)\otimes\mathbf{I}_m \right\|\leq c_0\rho ^{n-l+1},$$
with $\rho<1$, which together with A4 yields \vspace{-.2cm}
\begin{align}\label{y_evol3}
 \left\|\by[n]-\mathbf{1}_I\otimes\mathbf{\overline{r}}[n] \right\|&\leq \rho^{n} c_2 + c_3\sum_{l=1}^{n-1} \rho^{n-l}+c_4\nonumber\\
&\leq \rho\, c_2+\frac{c_3}{1-\rho}+c_4<+\infty,
\end{align}
for some positive, finite constants  $c_2$, $c_3$, $c_4$.

\noindent \textbf{Point (b):}  We prove next (\ref{P2b1}). Note that, using (\ref{def_J}), we get
\begin{equation}\label{delta_x}
\bx[n]- \mathbf{1}_I\otimes\bar{\bx}[n]=\bx[n]-  \bJ {\bx}[n]=\bJ_\perp   \bx[n].
\end{equation}  Therefore, denoting by $\bx_\perp [n]\triangleq \bJ_\perp \bx[n]$, it is sufficient to show that
$\lim_{n\rightarrow\infty}\|\bx_\perp [n]\|=0.$
Exploiting (\ref{x_evo}), one can then write
\begin{align}\label{x_perp_evo}
\hspace{-0.25cm}\bx_\perp[n]&\overset{(a)}{=}\bJ_\perp\widehat{\bW}[n-1]\bx_\perp[n-1] \nonumber\\
&\quad+\alpha[n-1]\bJ_\perp\widehat{\bW}[n-1]\Delta\bx^{\texttt{inx}}[n-1] \nonumber\smallskip\\
&\hspace{-0.5cm}\overset{(b)}{=} \left[\left(\mathbf{P}[n-1,0]-\dfrac{1}{I}\mathbf{1}_I\mathbf{1}_I^T\right)\otimes\mathbf{I}_m\right]\bx_\perp[0]\smallskip\nonumber\\
 &\hspace{-0.6cm}+\sum_{l=0}^{n-1}  \left[\left(\mathbf{P}[n-1,l]-\dfrac{1}{I}\mathbf{1}_I\mathbf{1}_I^T\right)\otimes\mathbf{I}_m \right]\alpha[l]\,\Delta\bx^{\texttt{inx}}[l]
\end{align}
where (a) follows from (\ref{x_evo}) and (\ref{Jproperties}); and (b) is due to  (\ref{P_hat_properties}). Now, since  $\|\Delta\bx^{\texttt{inx}}[n]\|$ is bounded [cf. (\ref{P2a})] and $\|\bx_\perp[0]\|<\infty$, invoking Lemma \ref{Lemma_left_product},   (\ref{x_perp_evo}) yields
\begin{align}\label{x_ort_norm}
\|\bx_\perp[n]\|\leq c_5\,\rho^{n}+c_6\,\rho\, \sum_{l=1}^{n}\rho^{n-l}\alpha[l-1] \underset{n\rightarrow\infty}{\longrightarrow} 0,
\end{align}
for some positive finite constants $c_5$, $c_6$, where the last implication follows from (\ref{step-size}) and Lemma \ref{Lemma_sequences}(a).

We prove now (\ref{P2b2}). Using (\ref{delta_x}) one can write:
\begin{align}
&\lim_{n\rightarrow \infty}\sum_{k=1}^{n}\alpha[k]\left \|\bx_i[k]-\bar{\bx}[k]\right \| \leq \lim_{n\rightarrow \infty}\sum_{k=1}^{n}\alpha[k]\|\bx_\perp[k]\|\nonumber\smallskip\\
& \overset{(a)}{\leq} \lim_{n\rightarrow \infty}  \left [c_5 \sum_{k=1}^{n}\rho^{k}\alpha[k] +c_6\,\rho \sum_{k=1}^{n}\sum_{l=1}^{k}\rho^{k-l}\alpha[k]\alpha[l-1]\right]\overset{(b)}{<} \infty, \nonumber
\end{align}
for all $i=1,\ldots,I$, where (a) follows from (\ref{x_ort_norm}); and (b) is due to (\ref{step-size}) and Lemma \ref{Lemma_sequences}(b.2).

Finally, we prove (\ref{P2b3}). From (\ref{x_ort_norm}), we get
\begin{align}\label{x_ort_norm2_sum}
&\lim_{n\rightarrow\infty}\sum_{k=1}^{n}\|\bx_\perp[k]\|^2\nonumber\\
&\quad\leq \lim_{n\rightarrow\infty}\bigg[c_5^2\,\sum_{k=1}^{n}\rho^{2k}+2\,c_5\,c_6\,\rho\, \sum_{k=1}^{n}\sum_{l=1}^{k}\rho^{k-l}\alpha[l-1]\rho^k\nonumber\\
&\qquad+c_6^2\,\rho^2\sum_{k=1}^{n}\sum_{l=1}^{k}\sum_{t=1}^{k}\rho^{k-l}\rho^{k-t}\alpha[l-1]\,\alpha[t-1]\bigg].
\end{align}
Since $\rho^2<1$, the first term on the RHS of (\ref{x_ort_norm2_sum}) has a finite limit, and so does the second term, due to Lemma \ref{Lemma_sequences}(b.2) and (\ref{step-size}); the third term   has a finite limit too:
\begin{align}
&\lim_{n\rightarrow\infty}\sum_{k=1}^{n}\sum_{l=1}^{k}\sum_{t=1}^{k}\rho^{k-l}\rho^{k-t}\alpha[l-1]\alpha[t-1] \nonumber\\
&\qquad\quad\overset{(a)}{\leq} \lim_{n\rightarrow\infty}c_7\sum_{k=1}^{n}\sum_{l=1}^{k}\rho^{k-l}\alpha[l-1]^2\overset{(b)}{<}\infty, \nonumber
\end{align}
for some finite  positive constant $c_7$, where (a) follows from $\alpha[l]\alpha[t]\leq (\alpha[l]^2+\alpha[t]^2)/2$ and $\sum_{t=1}^{k}\rho^{k-t}\leq\frac{1}{1-\rho}$; and (b) is a consequence of  Lemma \ref{Lemma_sequences}(b.1) and (\ref{step-size}).

\noindent \textbf{Point (c):} Because of space limitation, we prove only  (\ref{P2c1}); (\ref{P2c2}) can be proved following similar arguments as for (\ref{P2b2}). By the minimum principle applied to (\ref{best_resp_x_hat_2}) (evaluated at $\bar{\bx}[n]$) and (\ref{best_resp_x_hat_3}), it follows that
\begin{align}
&\left(\widehat{\bx}_i(\bar{\bx}[n])-\widetilde{\bx}^{\rm av}_i[n]\right)^T\nonumber\smallskip\\
&\,\,\big(\nabla \widetilde{f}_i(\widehat{\bx}_i(\bar{\bx}[n]);\bar{\bx}[n])-\nabla \widetilde{f}_i(\widetilde{\bx}^{\rm av}_i[n];\bar{\bx}[n])\nonumber\smallskip\\
&\,\,\quad + \partial G(\widehat{\bx}_i(\bar{\bx}[n]))- \partial G(\widetilde{\bx}^{\rm av}_i[n])\big)\nonumber\smallskip\\
& \quad \leq  \left(\widehat{\bx}_i(\bar{\bx}[n])-\widetilde{\bx}^{\rm av}_i[n]\right)^T \left(\widetilde{\boldsymbol{\pi}}_{i}^{\rm av}[n]-\boldsymbol{\pi}_{i}(\bar{\bx}[n])\right).\label{min_princ_avg3}
\end{align}
Invoking the uniform strongly convexity of $\widetilde{f}_i(\bullet;\bar{\bx}[n])$ (with constant $\tau_i$) and the convexity of $G(\bullet)$, (\ref{min_princ_avg3}) yields
\begin{align}\label{bound_psi_pi}
&\|\widehat{\bx}_i(\bar{\bx}[n])-\widetilde{\bx}^{\rm av}_i[n]\|\leq \frac{1}{\tau_i}\|\boldsymbol{\pi}_{i}(\bar{\bx}[n])-\widetilde{\boldsymbol{\pi}}_{i}^{\rm av}[n]\|\nonumber\smallskip\\
& \stackrel{(a)}{\leq} \frac{I}{\tau_i}\|\by^{\rm av}_{i}[n]-\bar{\mathbf{r}}^{\rm av}[n]\|\stackrel{(b)}{\leq}\frac{I}{\tau_i}\|\by^{\rm av}[n]-\mathbf{1}_I\otimes\bar{\mathbf{r}}^{\rm av}[n]\|,
\end{align}
where (a) follows from  (\ref{pi}) and (\ref{aux_averages}); and   (b) is obtained  using the same arguments as in (\ref{min_princ3}).
Proceeding as in  (\ref{y_evol})-(\ref{y_evol3}), $\|\by^{\rm av}[n]-\mathbf{1}_I\otimes\bar{\mathbf{r}}^{\rm av}[n]\|$ can be bounded as
\begin{align}
&\|\by^{\rm av}[n]-\mathbf{1}_I\otimes\bar{\mathbf{r}}^{\rm av}[n]\| \nonumber\\
&\leq c_8\,\rho^n +c_9\sum_{l=1}^{n-1} \rho^{n-l} \|\mathbf{\Delta r}^{\rm av}[l,l-1]\|+c_{9}\,\|\mathbf{\Delta r}^{\rm av}[n,n-1]\|\nonumber\smallskip\\
&\,\,\stackrel{(a)}{\leq} c_8\,\rho^n+c_9\,L^{\max} \sum_{l=1}^{n-1} \rho^{n-l}\, \|\bar{\bx}[l]-\bar{\bx}[l-1]\|\nonumber\\
&\;\qquad +c_{10}\,L^{\max}\, \|\bar{\bx}[n]-\bar{\bx}[n-1]\|\nonumber\smallskip\\
&\,\,\stackrel{(b)}{\leq} c_8\,\rho^n +c_{10}\,L^{\max}\, \alpha[n-1]\left\|\dfrac{1}{I}  \left(\mathbf{1}_I^T\otimes \bI_m\right)\Delta\bx^{\texttt{inx}}[n-1]\right\|\nonumber\smallskip\\
&\quad+c_9\,L^{\max} \sum_{l=1}^{n-1} \rho^{n-l}\, \alpha[l-1]\left\|\dfrac{1}{I}  \left(\mathbf{1}_I^T\otimes \bI_m\right)\Delta\bx^{\texttt{inx}}[l-1]\right\|\nonumber\\
&\stackrel{(c)}{\leq} c_8\rho^n+c_9\,L^{\max}\,c\,\sum_{l=1}^{n-1} \rho^{n-l} \alpha[l-1]+c_{10}\,L^{\max}\,c\,\cdot\alpha[n-1]\nonumber\smallskip\\
&\,\,\stackrel[n\rightarrow\infty]{(d)}{\longrightarrow} 0, \nonumber
\end{align}
for some finite positive constants $c_8$, $c_9$, $c_{10}$, where  in (a) we used  (\ref{aux_averages_vec}) and A3; (b) follows from the   evolution of the average vector $\bar{\bx}[n]$ given by (\ref{x_bar_evo}); in (c) we used Proposition \ref{Prop_cons}(a) [cf. (\ref{P2a})]; and (d) follows from (\ref{step-size}) and Lemma \ref{Lemma_sequences}(a). This completes the proof of (\ref{P2c1}).\\
\noindent \textbf{Point (d):} We first prove (\ref{P2d1}). Proceeding as for (\ref{min_princ_avg3})-(\ref{bound_psi_pi}) and invoking F1 and F3, it holds:
\begin{align}\label{bound_psi_psi}
&\|\widetilde{\bx}_i[n]-\widetilde{\bx}^{\rm av}_i[n]\|\;\leq \dfrac{\widetilde{L}_i}{\tau_i}\|\bx_i[n]-\bar{\bx}[n]\|+\frac{1}{\tau_i}\,\|\widetilde{\boldsymbol{\pi}}_{i}[n]-\widetilde{\boldsymbol{\pi}}_{i}^{\rm av}[n]\|\nonumber\smallskip\\
&\quad \stackrel{(a)}{\leq} c_{10}\,\|\bx_i[n]-\bar{\bx}[n]\|+c_{11}\,\|\by_i[n]-\by_i^{\rm av}[n]\|\nonumber\medskip\\
&\quad \leq c_{10}\,\|\bx[n]-\mathbf{1}_I\otimes\bar{\bx}[n]\|+c_{11}\,\|\by[n]-\by^{\rm av}[n]\|\nonumber\medskip\\
&\quad  \stackrel{(b)}{\leq}  c_{10}\,\|\bx[n]-\mathbf{1}_I\otimes\bar{\bx}[n]\|+c_{11}\,\|\mathbf{1}_I\otimes \left(\bar{\br}[n]-\bar{\br}^{\rm av}[n]\right)\|\nonumber\medskip\\
&\quad \,\,\,+c_{11}\,\|\by[n]-\by^{\rm av}[n]-\mathbf{1}_I\otimes \left(\bar{\br}[n]-\bar{\br}^{\rm av}[n]\right)\|
\end{align}
where $\widetilde{L}_i$ is the Lipschitz constant of $\nabla \widetilde{f}_i(\widetilde{\bx}^{\rm av}_i[n]; \bullet)$ (cf. F3);  in (a) we set  $c_{10}\triangleq \max_i\;(L_i+\widetilde{L}_i)/\tau_i$ and $c_{11}\triangleq \max_i\;I/\tau_i$, with $L_{i}$ being the Lipschitz constant of $\nabla f_i(\bullet)$, and we used
\vspace{-.15cm}
\begin{equation}\nonumber
\|\widetilde{\boldsymbol{\pi}}_{i}[n]-\widetilde{\boldsymbol{\pi}}_{i}^{\rm av}[n]\|\;\leq\; L_{i}\|\bx_i[n]-\bar{\bx}[n]\|+I\|\by_i[n]-\by_i^{\rm av}[n]\|;\vspace{-.15cm}
\end{equation}
and (b) follows from  $\|\ba\|\leq\|\ba-\bb\|+\|\bb\|$, with $\ba=\by[n]-\by^{\rm av}[n]$ and $\bb=\mathbf{1}_I\otimes\left(\bar{\br}[n]-\bar{\br}^{\rm av}[n]\right)$. To complete the proof, it is sufficient to show that the RHS of \eqref{bound_psi_psi} is asymptotically vanishing, which is proved next.  It follows from Proposition \ref{Prop_cons}(b) [cf. (\ref{P2b1})] that $\|\bx[n]-\mathbf{1}_I\otimes\bar{\bx}[n]\| {\longrightarrow} 0$ as $n\rightarrow\infty$. The second term on the RHS of \eqref{bound_psi_psi} can be bounded as \vspace{-.2cm}
\begin{align}\label{RHS_sec_term}
&\|\mathbf{1}_I\otimes \left(\bar{\br}[n]-\bar{\br}^{\rm av}[n]\right)\| \leq \sum_{i=1}^I  \| \nabla f_i[n]-\nabla f_i^{\rm av}[n]  \|\nonumber \vspace{-.3cm}\\
& \leq \sum_{i=1}^I L_{i}\|\bx_i[n]-\bar{\bx}_i[n]\|  \underset{n\rightarrow\infty}{\longrightarrow} 0, \vspace{-.2cm}
\end{align}
and the last implication is due to Proposition \ref{Prop_cons}(b) [cf. (\ref{P2b1})]. We bound now the last term on the RHS of    \eqref{bound_psi_psi}.
Using (\ref{aux_averages}), (\ref{aux_averages_vec}) and (\ref{y_evol}),  it is not difficult to show that
\begin{align}
&\by[n]-\by^{\rm av}[n]=\widehat{\mathbf{P}}[n-1,0] \left(\br[0]-\br^{\rm av}[0]\right)\smallskip\nonumber\\
&\qquad +\sum_{l=1}^{n-1} \widehat{\mathbf{P}}[n-1,l] \left(\Delta \br[l,l-1]-\Delta\br^{\rm av}[l,l-1]\right)\smallskip\nonumber\\
&\qquad +\left(\Delta \br[n,n-1]-\Delta\br^{\rm av}[n,n-1]\right),\nonumber
\end{align}
which, exploiting the same arguments used in (\ref{Kron_r_bar})-(\ref{y_evol3}),  leads to the following bound:
\begin{align}\label{RHS_third}
&\|\by[n]-\by^{\rm av}[n]-\mathbf{1}_I\otimes \left(\bar{\br}[n]-\bar{\br}^{\rm av}[n]\right)\|\nonumber\smallskip\\
& \leq c_{12}\,\rho^n +c_{13}\,\sum_{l=1}^{n-1} \rho^{n-l}\, \left\Vert \Delta \br[l,l-1]-\Delta\br^{\rm av}[l,l-1]\right\Vert\nonumber\\
&\quad +c_{14}\,\left\Vert\Delta \br[n,n-1]-\Delta\br^{\rm av}[n,n-1]\right\Vert,\nonumber\\
& \stackrel{(a)}{\leq}  c_{12}\,\rho^n \nonumber\smallskip\\&+c_{15}\,\sum_{i=1}^{I}\sum_{l=1}^{n-1} \rho^{n-l}\, \left(\left\Vert  \bx_i[l]-\bar{\bx}[l]\right\Vert + \left\Vert  \bx_i[l-1]-\bar{\bx}[l-1]\right\Vert\right)\nonumber\\
&\quad +c_{16}\,\sum_{i=1}^{I} \left(\left\Vert  \bx_i[n]-\bar{\bx}[n]\right\Vert + \left\Vert  \bx_i[n-1]-\bar{\bx}[n-1]\right\Vert\right)\nonumber\medskip\\
&\quad\stackrel[n\rightarrow\infty]{(b)}{\longrightarrow} 0,
\end{align}
for some finite positive constants  $c_{12}$, $c_{13}$, $c_{14}$, $c_{15}$, and $c_{16}$; where (a) follows from the definitions of $\Delta \br[l,l-1]$ and  $\Delta\br^{\rm av}[l,l-1]$ [cf. (\ref{delta_r}) and (\ref{aux_averages_vec})], and the Lipschitz property of $\nabla f_i(\bullet)$ (cf. A3); and (b) follows from  Proposition \ref{Prop_cons}(b) [cf. (\ref{P2b1})] and Lemma   \ref{Lemma_sequences}(a).

We prove now (\ref{P2d2}). From (\ref{bound_psi_psi}), (\ref{RHS_sec_term}), and (\ref{RHS_third}),  we get
\begin{align}\label{bound_psi_psi4}
&\lim_{n\rightarrow\infty}\sum_{k=1}^{n}\alpha[k]\|\widetilde{\bx}_i[k]-\widetilde{\bx}^{\rm av}_i[k]\| \leq \lim_{n\rightarrow\infty}c_{17}\bigg[\sum_{k=1}^{n}\rho^k\alpha[k]\nonumber\\
&\hspace{-.3cm}+\sum_{i=1}^I\sum_{k=1}^{n}\sum_{l=1}^{k-1} \rho^{k-l}\alpha[k]\Big(\|\bx_i[l]-\bar{\bx}[l]\|+\|\bx_i[l-1]-\bar{\bx}[l-1]\|\Big) \nonumber\\
&+\sum_{i=1}^I\sum_{k=1}^{n} \alpha[k]\Big(\|\bx_i[k]-\bar{\bx}[k]\|+ \|\bx_i[k-1]-\bar{\bx}[k-1]\|\Big)\bigg]\nonumber\medskip\\
&\;\; < \infty,
\end{align}
for some constant $c_{17}=\max(c_{12},c_{15},c_{16})$, where the last implication follows from the finiteness of the three terms on the RHS of (\ref{bound_psi_psi4}). Indeed, under  (\ref{step-size}) and (\ref{P2b2}), the first and third term have a finite limit, whereas lemma \ref{Lemma_sequences}(b.2) along with  property (\ref{P2b3}) guarantee the finiteness of  the second term. This completes the proof of  (\ref{P2d2}).
\end{proof}

\vspace{-0.35cm}
\subsection{Proof of Theorems \ref{simplified_convergence_th} and \ref{main_th}}

We prove only Theorem \ref{main_th}; the proof of Theorem \ref{simplified_convergence_th} comes as special case. Invoking the descent lemma on $F$ and {using (\ref{x_y_bar})}, (\ref{x_bar_evo}), we get
\begin{align}
& F(\bar{\bx}[n+1])\leq F(\bar{\bx}[n]) \nonumber\\
&\quad+\dfrac{\alpha[n]}{I}\,\nabla F(\bar{\bx}[n])^T\,\sum_{i=1}^I\left(\bx_i^{\texttt{inx}}[n]-\bar{\bx}[n]\right) \nonumber\\
&\quad+ \frac{L_{\max}}{2}\left(\dfrac{\alpha[n]}{I}\right)^2\sum_{i=1}^I\left\|\bx_i^{\texttt{inx}}[n]-\bx_i[n]\right\|^2.\nonumber
\end{align}
Summing and subtracting the quantity $\dfrac{\alpha[n]}{I}\nabla F(\bar{\bx}[n])^T\sum_{i=1}^I \Big(\widetilde{\bx}_i[n] +\widetilde{\bx}^{\rm av}_i[n]+ \widehat{\bx}_i(\bar{\bx}[n])\Big)$ to the RHS of the above inequality, we obtain
\begin{align}\label{desc_lemma4}
&F(\bar{\bx}[n+1])\leq F(\bar{\bx}[n]) \nonumber\\
&\quad+\dfrac{\alpha[n]}{I}\nabla F(\bar{\bx}[n])^T\sum_{i=1}^I\left(\widehat{\bx}_i(\bar{\bx}[n])-\bar{\bx}[n]\right)\nonumber\\
&\quad+ \dfrac{\alpha[n]}{I}\,\nabla F(\bar{\bx}[n])^T\,\sum_{i=1}^I\left(\widetilde{\bx}^{\rm av}_i[n]-\widehat{\bx}_i(\bar{\bx}[n]) \right)\nonumber\\
&\quad+\dfrac{\alpha[n]}{I}\nabla F(\bar{\bx}[n])^T\,\sum_{i=1}^I\,\left(  \widetilde{\bx}_i[n] - \widetilde{\bx}^{\rm av}_i[n] \right)\nonumber\\
&\quad+\dfrac{\alpha[n]}{I}\nabla F(\bar{\bx}[n])^T\,\sum_{i=1}^I\,\left(\bx_i^{\texttt{inx}}[n]- \widetilde{\bx}_i[n]\right)\nonumber\\
&\quad +\frac{L_{\max}}{2}\left(\dfrac{\alpha[n]}{I}\right)^2\sum_{i=1}^I\left\|\bx_i^{\texttt{inx}}[n]-\bx_i[n]\right\|^2
\end{align}

The second term on the RHS of (\ref{desc_lemma4}) can be bounded as {
\begin{align}\label{first_ord_term}
&\dfrac{\alpha[n]}{I}\,\nabla F(\bar{\bx}[n])^T\,\sum_{i=1}^I\left(\widehat{\bx}_i(\bar{\bx}[n])-\bar{\bx}[n] \right)\nonumber\\
& \quad \qquad\stackrel{(a)}{\leq} - \frac{c_{\tau}}{I}\, \alpha[n]\,\sum_{i=1}^I\left\|\widehat{\bx}_i(\bar{\bx}[n])-\bar{\bx}[n] \right\|^2\nonumber\\
& \qquad \qquad   \qquad + \alpha[n] \left( G(\bar{\bx}[n])- \dfrac{1}{I} \sum_{i=1}^I G(\widehat{\bx}_i(\bar{\bx}[n]))\right)\nonumber\medskip\\
& \quad \qquad\stackrel{(b)}{\leq}- \frac{c_{\tau}}{I}\, \alpha[n]\,\sum_{i=1}^I\left\|\widehat{\bx}_i(\bar{\bx}[n])-\bar{\bx}[n] \right\|^2\nonumber\\
& \quad \qquad   \qquad +   G(\bar{\bx}[n])-   G(\bar{\bx}[n+1])\nonumber\\
& \quad \qquad   \qquad +  \frac{1}{I}\, \alpha[n] \sum_{i=1}^I  \left(G(\bx_i^{\texttt{inx}}[n]) - G(\widehat{\bx}_i(\bar{\bx}[n]) ) \right)
\nonumber\\
& \quad \qquad\stackrel{(c)}{\leq}- \frac{c_{\tau}}{I}\, \alpha[n]\,\sum_{i=1}^I\left\|\widehat{\bx}_i(\bar{\bx}[n])-\bar{\bx}[n] \right\|^2\nonumber\\
& \quad \qquad   \qquad +   G(\bar{\bx}[n])-   G(\bar{\bx}[n+1])\nonumber\\
& \quad \qquad   \qquad +  \frac{L_G}{I}\, \alpha[n] \sum_{i=1}^I  \left\|\bx_i^{\texttt{inx}}[n] - \widehat{\bx}_i(\bar{\bx}[n])  \right\|,
\end{align}
where (a) follows from Proposition \ref{Prop_best_response} [cf. (\ref{p1c})]; in (b) we used the following inequality, due to the convexity of $G$ and (\ref{x_bar_evo}):
\begin{equation}\nonumber
G(\bar{\bx}[n+1])\leq (1-\alpha[n])\,G(\bar{\bx}[n])+\dfrac{\alpha[n]}{I}\,\sum_{i=1}^I G\left(\bx_i^{\texttt{inx}}[n]\right),
\end{equation}
and (c) follows from the convexity of $G$ and $\|\partial G(\bx_i^{\texttt{inx}}[n])\|\leq L_G$ (cf. A5).

Exploiting (\ref{first_ord_term}) in (\ref{desc_lemma4}), and using A4, Proposition \ref{Prop_cons}(a) [cf. (\ref{P2a})], the triangle inequalities, and (\ref{precision}), we can write
\begin{align}\label{desc_lemma5}
&U(\bar{\bx}[n+1])\leq  U(\bar{\bx}[n]) \nonumber\\
&\quad-\frac{c_{\tau}}{I}\, \alpha[n] \sum_{i=1}^I\|\widehat{\bx}_i(\bar{\bx}[n])-\bar{\bx}[n]\|^2 \nonumber\\
&\quad +c_{18}\,\alpha[n] \sum_{i=1}^I\|\widetilde{\bx}^{\rm av}_i[n]-\widehat{\bx}_i(\bar{\bx}[n])\| +c_{18}\,\alpha[n]\sum_{i=1}^I \varepsilon_i[n]   \nonumber\\
&\quad +c_{18}\,\alpha[n]\sum_{i=1}^I\|\widetilde{\bx}_i[n]- \widetilde{\bx}^{\rm av}_i[n]\| + \frac{c^2L_{\max}}{2I}\alpha^2[n],
\end{align}
with $c_{18}\triangleq \max(L_F,L_G)/I$. We apply now Lemma \ref{lemma_Robbinson_Siegmunt} with the following identifications:
\begin{align}\nonumber
&Y[n]=U(\bar{\bx}[n]) \nonumber\\
&X[n]=c_{18}\,c_{\tau}\,\dfrac{\alpha[n]}{I}\sum_{i=1}^I\|\widehat{\bx}_i(\bar{\bx}[n])-\bar{\bx}[n]\|^2 \nonumber\\
&Z[n]=\frac{c^2L_{\max}}{2I}\alpha[n]^2+c_{18}\,\alpha[n]\sum_{i=1}^I\|\widetilde{\bx}^{\rm av}_i[n]-\widehat{\bx}_i(\bar{\bx}[n])\|  \nonumber\\
&\quad +c_{18}\,\alpha[n]\sum_{i=1}^I\|\widetilde{\bx}_i[n]- \widetilde{\bx}^{\rm av}_i[n]\|+c_{18}\,\alpha[n]\sum_{i=1}^I \varepsilon_i[n]. \nonumber
\end{align}
Note that  $\sum_{n=1}^\infty Z[n]<\infty$, due to (\ref{step-size}), (\ref{errors}), and properties (\ref{P2c2}) and (\ref{P2d2}). Since   $U(\bar{\bx}[n])$ is coercive (cf. A6), it follows from Lemma \ref{lemma_Robbinson_Siegmunt} that $U(\bar{\bx}[n])$  converges to a finite value, and
\begin{align}\nonumber
\sum_{n=1}^\infty \alpha[n]\, \|\widehat{\bx}_i(\bar{\bx}[n])-\bar{\bx}[n]\|^2 < \infty, \quad \forall i=1,\ldots,I,
\end{align}
which, using (\ref{step-size}), leads to
\begin{align}\label{conv2}
\underset{n\rightarrow\infty}{{\mbox{lim inf}}}\;\,\|\widehat{\bx}_i(\bar{\bx}[n])-\bar{\bx}[n]\| =0, \quad  \forall i=1,\ldots,I.
\end{align}
Following similar  arguments as in \cite{Scutari-Facchinei-Song-Palomar-Pang,Scutari-Facchinei-Sagratella}, it is not difficult to show that $\underset{n\rightarrow\infty}{{\mbox{lim sup}}}\;\,\|\widehat{\bx}_i(\bar{\bx}[n])-\bar{\bx}[n]\| =0$, for all $i$. Thus, we have
$\underset{n\rightarrow\infty}{{\mbox{lim}}}\;\,\|\widehat{\bx}_i(\bar{\bx}[n])-\bar{\bx}[n]\| =0,$
for all  $i$. Since the sequence $\{\bar{\bx}[n]\}_n$ is bounded (due to  A6 and   the convergence of $\{U(\bar{\bx}[n])\}_n$), it has at
least one limit point $\bar{\bx}^\infty\in \mathcal{K}$. By continuity of each $\widehat{\bx}_i(\bullet)$ [cf. Proposition \ref{Prop_best_response}(a)]   and $\underset{n\rightarrow\infty}{{\mbox{lim}}}\;\,\|\widehat{\bx}_i(\bar{\bx}[n])-\bar{\bx}[n]\| =0$, it must be $\widehat{\bx}_i(\bar{\bx}^\infty)=\bar{\bx}^\infty$ for all $i$. By Proposition \ref{Prop_best_response}(b), $\bar{\bx}^\infty$ is also a stationary solution of Problem (1). This proves statement (a) of the theorem. Statement (b) follows readily from (\ref{P2b1}).

\balance

\bibliographystyle{MyIEEE}
\bibliography{biblio}

\vspace{-.6cm}
\begin{IEEEbiography}[{\includegraphics[scale=0.078]{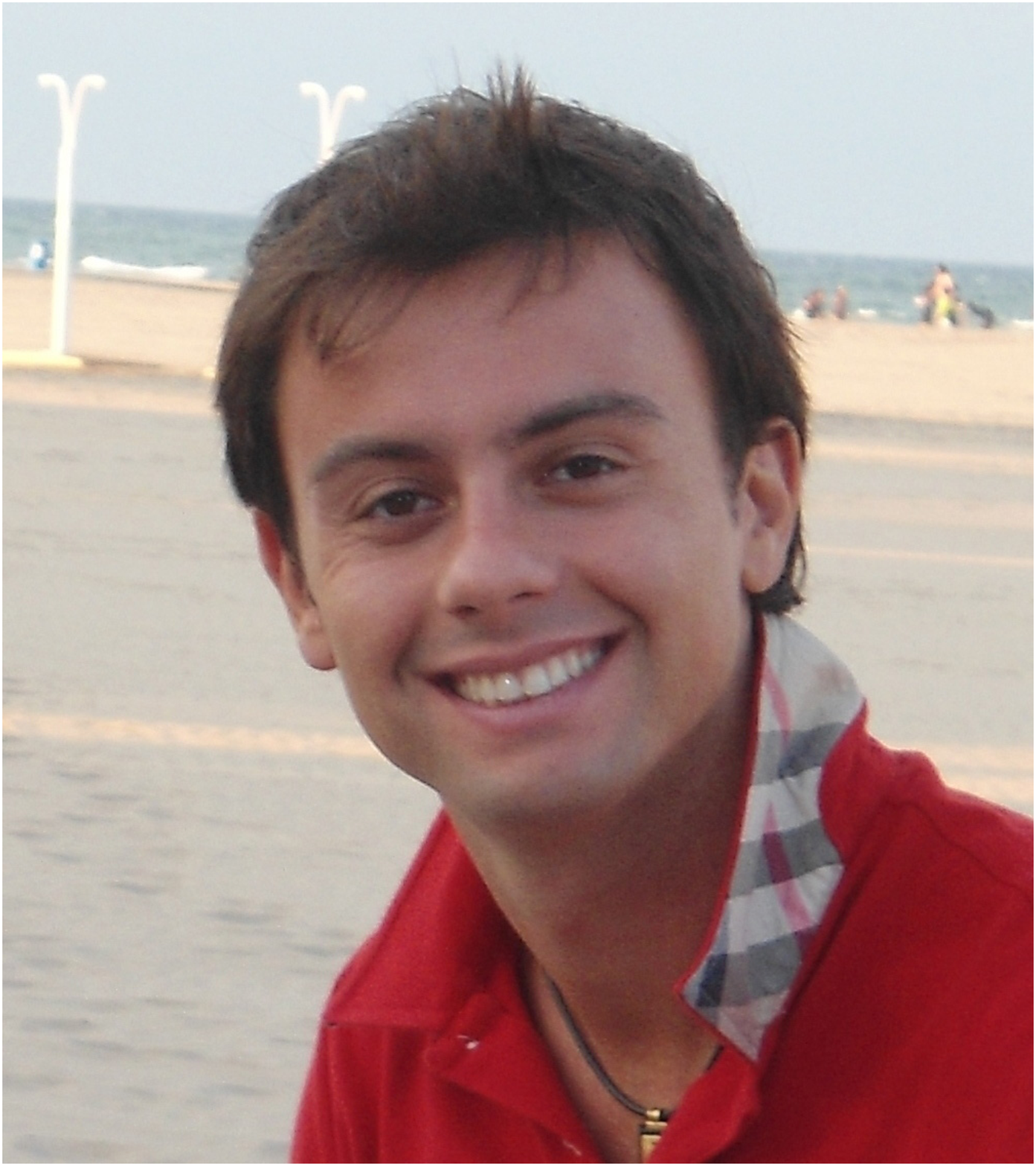}}]{Paolo Di Lorenzo}
(S'10-M'13)  received the M.Sc. degree in 2008 and the Ph.D. in electrical engineering in 2012, both from University of Rome ``La Sapienza,'' Italy. He is an assistant Professor in the Department of Engineering, University of Perugia, Italy. During 2010 he held a visiting research appointment in the Department of Electrical Engineering, University of California at Los Angeles (UCLA). He has participated in the European research project FREEDOM, on femtocell networks, SIMTISYS, on moving target detection through satellite constellations, and TROPIC, on distributed computing, storage and radio resource allocation over cooperative femtocells. His primary research interests are in distributed optimization and learning over networks, statistical signal processing, graph theory, game theory, and adaptive filtering. Dr. Di Lorenzo received three best student paper awards, respectively at IEEE SPAWC'10, EURASIP EUSIPCO'11, and IEEE CAMSAP'11, for works in the area of signal processing for communications and synthetic aperture radar systems. He is recipient of the 2012 GTTI (Italian national group on telecommunications and information theory) award for the Best Ph.D. Thesis in information technologies and communications.
\end{IEEEbiography}

\vspace{-.6cm}
\begin{IEEEbiography}[{\includegraphics[scale=0.155]{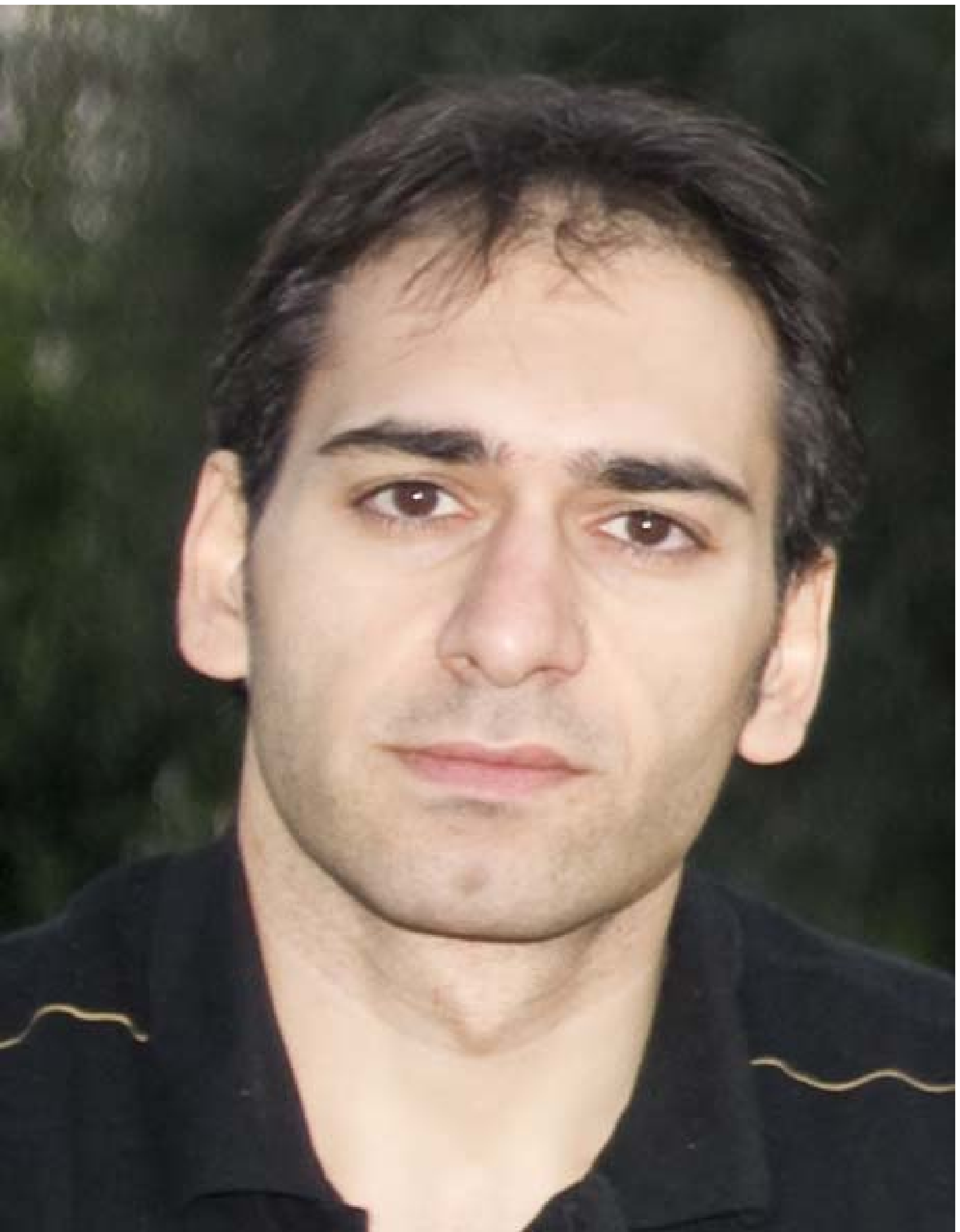}}]{Gesualdo Scutari}
(S'05-M'06-SM'11) received the Electrical Engineering and Ph.D. degrees (both with honors) from the University of Rome ``La Sapienza,'' Rome, Italy, in 2001 and 2005, respectively. He is an Associate Professor with the Department of Industrial Engineering, Purdue University, West Lafayette, IN, USA; and he is the scientific director  for the area of Big-Data Analytics at the Cyber Center (Discovery Park) at Purdue University. He had previously held several research appointments, namely, at the University of California at Berkeley, Berkeley, CA, USA; Hong Kong University of Science and Technology, Hong Kong; University of Rome, ``La Sapienza,'' Rome, Italy; University of Illinois at Urbana-Champaign, Urbana, IL, USA.  His research interests include theoretical and algorithmic issues related to big data optimization, equilibrium programming, and their applications to signal processing, medical imaging, machine learning, and networking. Dr. Scutari is an Associate Editor of the IEEE TRANSACTIONS ON SIGNAL PROCESSING and he served as an Associate Editor of the IEEE SIGNAL PROCESSING LETTERS. He served on the IEEE Signal Processing Society Technical Committee on Signal Processing for Communications (SPCOM). He was the recipient of the 2006 Best Student Paper Award at the International Conference on Acoustics, Speech, and Signal Processing (ICASSP) 2006, the 2013 NSF Faculty Early Career Development (CAREER) Award, the 2013 UB Young Investigator Award, the 2015  AnnaMaria Molteni Award for Mathematics and Physics (from ISSNAF), and the 2015 IEEE Signal Processing Society Young Author Best Paper Award.
\end{IEEEbiography}

\end{document}